\documentclass{article}

\PassOptionsToPackage{numbers,compress}{natbib}
\usepackage[preprint]{neurips_2026}
\usepackage[T1]{fontenc}
\usepackage[utf8]{inputenc}
\usepackage{lmodern}
\usepackage{microtype}
\usepackage{amsmath,amssymb,amsfonts,amsthm,mathtools}
\usepackage{bm}
\usepackage{mathrsfs}
\usepackage{graphicx}
\usepackage{booktabs}
\usepackage{tabularx}
\usepackage{array}
\usepackage{multirow}
\usepackage{makecell} 
\usepackage{adjustbox}
\usepackage{threeparttable}
\usepackage{enumitem}
\usepackage{xcolor}
\usepackage{url}
\usepackage{nicefrac}
\usepackage{hyperref}
\usepackage{tablefootnote}
\usepackage[capitalize,noabbrev]{cleveref}
\usepackage{tikz}
\usepackage{siunitx}
\newcounter{protocol}
\renewcommand{\theprotocol}{\arabic{protocol}}
\crefname{protocol}{Algorithm}{Algorithms}
\usetikzlibrary{arrows.meta,positioning,calc,fit,backgrounds}

\numberwithin{equation}{section}
\allowdisplaybreaks

\setlist[itemize]{leftmargin=1.25em,topsep=2pt,itemsep=1.5pt,parsep=0pt}
\setlist[enumerate]{leftmargin=1.5em,topsep=2pt,itemsep=1.5pt,parsep=0pt}

\theoremstyle{plain}
\newtheorem{theorem}{Theorem}[section]
\newtheorem{proposition}[theorem]{Proposition}
\newtheorem{corollary}[theorem]{Corollary}

\newtheorem{assumption}[theorem]{Assumption}
\newtheorem{definition}[theorem]{Definition}
\theoremstyle{remark}
\newtheorem{remark}[theorem]{Remark}

\newcommand{\davg}[1]{\widehat{\Delta}_{\mathrm{avg}}(#1)}
\newcommand{\dQF}[1]{\widehat{\Delta}^{\mathrm{fam}}_{\mathcal P}(#1)}
\newcommand{\dQP}[1]{\widehat{\Delta}^{\mathrm{fam}}_{\mathcal P}(#1)}
\newcommand{\Rfb}[2]{\widehat{R}^{\star}_{\mathcal F,b}(#1,#2)}
\newcommand{\Rftok}[2]{\widehat{R}^{\star}_{\mathcal F,\mathrm{tok}}(#1,#2)}
\newcommand{\Rfworst}[2]{\widehat{R}^{\star,\mathrm{worst}}_{\mathcal F}(#1,#2)}
\newcommand{\Gcondop}[2]{\widehat{G}_{\mathrm{cond}}^{\mathrm{op}}(#1,#2)}
\newcommand{\Gcondinfo}[2]{\widehat{G}_{\mathrm{cond}}^{\mathrm{info}}(#1,#2)}
\newcommand{\tauhat}[1]{\widehat{\tau}(#1)}
\newcommand{\Qfam}{\mathcal Q}

\newcommand{\uci}[1]{U_{95}\!\left(#1\right)}
\newcommand{\lci}[1]{L_{95}\!\left(#1\right)}

\title{Task-Aware Answer Preservation under Audio Compression for Large Audio Language Models}
\author{%
  Amir Ivry\\
  Electrical and Computer Engineering, Technion--Israel Institute of Technology\\
  \texttt{aivry@ieee.org}
}

\begin{document}
\maketitle

\begin{abstract}
Large audio language models (LALMs) are increasingly used to reason over long audio clips, yet deployment often compresses audio before inference to reduce memory and latency. The risk is that compression can leave aggregate accuracy acceptable while sharply degrading answers for a deployment-critical query family. We study answer-preserving audio compression, judging a compressor by the excess answer-error it induces, especially for the worst-affected family. We formulate this theoretically as a compressor acceptance-rejection criterion, derive a practical sign-off protocol that returns compression budgets satisfying worst-family checks with statistical confidence, and evaluate it on five multiple-choice audio question-answering benchmarks with two Qwen-based backbones. The protocol exposes hidden family-level damage, shows that the chosen query-family partition can change the approved budget, and identifies regimes where query-conditioned compression helps maintain answer preservation.
\end{abstract}

\section{Introduction}
\label{sec:introduction}
Large audio language models (LALMs) can reason and answer questions about speech, music, and audio events~\citep{chu2024qwen2audio,xu2025qwen25omni}. In deployment, long clips, interactive use, and multimodal contexts make it necessary to compress the audio interface before it reaches a fixed audio pathway in the LALM, to reduce memory use and response latency~\citep{arora2025landscape}. It can also, however, damage the answer behavior of the model, since
it may need lexical content for one query, event timing for another, and prosody or speaker state for a third. An audio compression interface should therefore be judged by whether it preserves answers for deployment-critical query families.

We introduce a task-aware answer-preservation framework to sign off compressed audio interfaces before they are used with a fixed LALM. The input to the framework is the practitioner’s deployment configuration: the answerer model, the compressor, its budget grid, the evaluation queries and answers, the tolerances for answer degradation, and more. We then run paired raw-audio and compressed-audio evaluations under the same LALM and estimate the excess answer-error caused by compression. The operational output is a compression-budget frontier~\citep{yin2022batude}: the smallest retained-audio budget at which the compressed interface increases answer error by no more than the tolerance, especially for worst affected query families. 
To our knowledge, this is the first framework linking a practitioner-facing budget decision to answer preservation for LALM audio compression. 
% Conceptually, we adapt comparison-of-experiments and indirect rate-distortion ideas to a concrete deployment question: for a declared audio-query family, which compressed interface can be approved without changing answers beyond tolerance~\citep{blackwell1953equivalent,lecam1964sufficiency,witsenhausen1980indirect}? Our experiments instantiate the protocol with hard chunk-retention selectors; waveform and neural codecs such as SoundStream and EnCodec are natural future interfaces to audit with the same paired familywise test, but are not benchmarked here~\citep{zeghidour2021soundstream,defossez2023encodec}.
% To our knowledge, this is the first framework linking a practitioner-facing budget decision to answer preservation for LALM audio compression.  The theory specializes comparison of experiments and indirect rate-distortion to a declared audio-query family~\citep{blackwell1953equivalent,lecam1964sufficiency,witsenhausen1980indirect}; the contribution is the LALM sign-off criterion and audit, not a new general deficiency theory. Experiments use hard chunk-retention selectors; waveform or neural codecs such as SoundStream and EnCodec are compatible future interfaces for the same protocol, but are not benchmarked here~\citep{zeghidour2021soundstream,defossez2023encodec}.

The paper has three main contributions. First, it formally defines task-aware answer-preserving audio compression as a family-wise excess-risk problem, proves its equivalence to restricted answer sufficiency, and derives consequences for query-family partition refinement and query-conditioned compression. Second, it turns these objects into a practitioner-facing sign-off protocol that returns point and confidence-aware compression-budget frontiers, given a deployed configuration. Third, it evaluates the protocol on five English-prompted, multiple-choice audio-question benchmarks~\citep{dcase2026adqa,he2025audiomcq,wang2025mmsu,ma2025mmar,artificialanalysis2024bigbenchaudio,suzgun2022bbh} using Qwen2-Audio-7B-Instruct~\citep{chu2024qwen2audio} and Qwen2.5-Omni-7B~\citep{xu2025qwen25omni}, and provides further empirical support, e.g., showing that dataset averages can hide severe family-level failures, the declared query-family partition changes the sign-off decision, and query conditioning helps only in specific family regimes.

\section{Problem Formulation and Setup}
\label{sec:problem-main}
Let \(\mathcal X\) be the space of finite-duration, sampled audio clips and let
\(\mathcal Q\) be the family of queries that may be asked about such
clips. Examples include speech-content, event-counting, and
speaker or prosody queries.
Conditioned on the raw audio input \(X\in\mathcal X\), for each fixed query \(q\in\mathcal Q\), let \(Y_q\) be the random correct answer taking values in \(\mathcal Y_q\), let \(\mathcal A_q\) be the action space of allowed predictions, and let \(\ell_{q}:=\ell:\mathcal A_q\times\mathcal Y_q\to[0,\infty)\) be the answer loss.

We define the set \({C=\{C_b(X, q):b\in\mathcal B\}}\) as a family of budgeted, $q$-query-conditioned compressors on $X$, where \(\mathcal B\) are
retained-interface budgets. 
A compressed interface is denoted
\begin{equation}
\label{eq:compressed-interface}
Z_{q,b}=C_b(X,q)\in\mathcal Z_b ,
\end{equation}
where $\mathcal Z_b$ holds the space of admissible compressed interfaces at budget $b$. Here, admissible means measurable, finite, valid for the downstream LALM, and within budget $b$.
For example, if \(q\) asks about voice-activity events and $C$ retains audio fractions, then \(C_b(X,q)\) may retain the most energetic 1-second chunks of
\(X\), with \(b\) controlling how many chunks are retained. 

For each \(q\) and \(b\), we compare the best answer rule using \(X\), $\mathcal R_X^\star(q)$,
with the best answer rule using the compressed interface \(Z_{q,b}\), $\mathcal R_{Z_{q,b}}^\star(q)$:
\begin{equation}
\label{eq:bayes-answer-risks}
\mathcal R_X^\star(q)
=
\inf_{\psi:\mathcal X\to\mathcal A_q}
\mathbb E\!\left[\ell(\psi(X),Y_q)\right],
\qquad
\mathcal R_{Z_{q,b}}^\star(q)
=
\inf_{\phi:\mathcal Z_b\to\mathcal A_q}
\mathbb E\!\left[\ell(\phi(Z_{q,b}),Y_q)\right].
\end{equation}
The expectation is over the distribution of \(X\) and the
query-specific, conditional answer law \(Y_q\mid X\).
For a fixed query \(q\), the per-query excess answer-risk at budget \(b\)
is
\begin{equation}
\label{eq:per-query-excess-risk}
\Delta_q(b;C)
=
\mathcal R_{Z_{q,b}}^\star(q)-\mathcal R_X^\star(q).
\end{equation}
Over the deployment query family \(\mathcal Q\), the relevant guarantee is
the largest such degradation:
\begin{equation}
\label{eq:family-excess-risk}
\Delta_{\mathcal Q}(b;C)
=
\sup_{q\in\mathcal Q}
\Delta_q(b;C).
\end{equation}
The budgeted interface \(C_b(X,q)\) is \(\varepsilon\)-answer-preserving on
\(\mathcal Q\) if, for $\varepsilon\geq0$,
\begin{equation}
\label{eq:epsilon-answer-preserving}
\Delta_{\mathcal Q}(b;C)\le\varepsilon.
\end{equation}
The supremum is intentional: a compressor should not be approved merely
because its average error is small if it substantially damages a
deployment-critical query type.

The answer-preservation frontier of \(C\) is the smallest retained
budget that satisfies
\begin{equation}
\label{eq:answer-preservation-frontier}
b_{\mathcal Q}^\star(\varepsilon;C)
=
\inf
\left\{
b\in\mathcal B:
\Delta_{\mathcal Q}(b;C)\le\varepsilon
\right\},
\end{equation}
with \(b_{\mathcal Q}^\star(\varepsilon;C)=\infty\) if no budget $b$ yields an $\varepsilon$-answer-preserving interface.

The theoretical frontier in \eqref{eq:answer-preservation-frontier} uses Bayes-optimal answer rules, and therefore abstracts away the capabilities of any particular LALM. The empirical protocol, described in Section~\ref{sec:estimating-main}, fixes the deployed answerer and estimates the analogous retained-audio budget at which that fixed model's worst family-level excess loss remains within \(\varepsilon\).

\paragraph{Notational convention.}
When a budget \(b\) and compressor \(C\) are fixed, we also write
\(\Delta_{\mathcal Q}(Z;X)\) for \(\Delta_{\mathcal Q}(b;C)\), with
\(Z=Z_{q,b}\). This shorthand keeps the theory in Section~3 interface-based while preserving the budgeted definition in \eqref{eq:answer-preservation-frontier}.

\section{Theoretical Foundations: Task-aware Answer-Preservation}
\label{sec:theory-main}
Throughout this section, we abbreviate \(Z_{q,b}\) to \(Z\), which denotes one admissible compressed interface as defined in \eqref{eq:compressed-interface}.

\begin{assumption}[Regularity]
\label{ass:bayes}
For every \(q\in\mathcal Q\), Bayes-optimal answer rules based on \(X\)
and on any admissible compressed interface \(Z\) exist. The losses are
uniformly bounded and all displayed
expectations are finite.
\end{assumption}

\paragraph{Answer-risk gaps are sufficiency deficits.}
For \(\pi \in \mathcal P(\mathcal Y_q)\), where \(\mathcal P(\mathcal Y_q)\)
denotes probability laws on the answer space, define the Bayes envelope and induced posteriors
\begin{equation}
\label{eq:bayes-envelope}
L_q(\pi)=\inf_{a\in\mathcal A_q}\int \ell(a,y)\,\pi(dy),\qquad
\Pi_q^X=P_{Y_q\mid X,q},\quad \Pi_q^Z=P_{Y_q\mid Z,q},
\end{equation}
then \(\mathcal R_X^\star(q)=\mathbb E_{X}\!\left[L_q(\Pi_q^X)\right]\) and \(\mathcal R_Z^\star(q)=\mathbb E_{Z}\!\left[L_q(\Pi_q^Z)\right]\), where the expectations use the marginal laws induced by the deployment distribution and compressor.

We define the one-sided answer deficiency of \(Z\) relative to raw audio $X$ as
\begin{equation}
\label{eq:ans-deficiency}
\delta_{\Qfam}(Z\|X)=
\sup_{q\in\Qfam}\left\{
\mathbb E_{Z}\!\left[L_q(\Pi_q^Z)\right]
-
\mathbb E_{X}\!\left[L_q(\Pi_q^X)\right]
\right\}.
\end{equation}

\begin{theorem}[Restricted sufficiency-risk equivalence]
\label{thm:equiv}
For every \(Z\),
\begin{equation}
\label{eq:suff-risk-equivalence}
\delta_{\Qfam}(Z\|X)=\Delta_{\Qfam}(Z;X).
\end{equation}
Thus \(Z\) is exact task-aware answer sufficient for \(\Qfam\) iff \(\Delta_{\Qfam}(Z;X)=0\), and is \(\varepsilon\)-approximately sufficient iff \(\Delta_{\Qfam}(Z;X)\le\varepsilon\).
\end{theorem}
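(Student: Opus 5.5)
The plan is to prove the identity query by query and then take the supremum over \(\Qfam\). Once, for each fixed \(q\in\Qfam\),
\[
\mathcal R_X^\star(q)=\mathbb E_X\!\left[L_q(\Pi_q^X)\right],\qquad
\mathcal R_Z^\star(q)=\mathbb E_Z\!\left[L_q(\Pi_q^Z)\right]
\]
are established, the bracketed term in \eqref{eq:ans-deficiency} equals \(\Delta_q(b;C)\) from \eqref{eq:per-query-excess-risk}. Taking \(\sup_{q\in\Qfam}\) of both sides then gives \eqref{eq:suff-risk-equivalence}, using the definition \eqref{eq:family-excess-risk} and the notational convention \(\Delta_{\Qfam}(Z;X)=\Delta_{\Qfam}(b;C)\).

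The core step is the Bayes-envelope representation, which I will prove for a generic observation \(W\in\{X,Z\}\). The argument has two inequalities.

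For the lower bound, take any rule \(\phi\). Conditioning on \(W\) (tower property) gives
\[
\mathbb E[\ell(\phi(W),Y_q)]=\mathbb E_W\!\left[\int \ell(\phi(W),y)\,\Pi_q^W(dy)\right].
\]
The inner integral is at least \(L_q(\Pi_q^W)\) pointwise, so \(\mathcal R_W^\star(q)\ge \mathbb E_W[L_q(\Pi_q^W)]\). This uses only the existence of a regular conditional law \(P_{Y_q\mid W,q}\), which I will take as implicit in \eqref{eq:bayes-envelope}. The boundedness in Assumption~\ref{ass:bayes} justifies the integrals.

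For the upper bound, Assumption~\ref{ass:bayes} supplies a Bayes-optimal rule. Equivalently, I need a measurable selector \(w\mapsto a^\star(w)\) attaining \(\inf_a\int\ell(a,y)\,\Pi_q^W(dy)\). Plugging it in yields equality.

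If attainment is read only as existence of an optimal rule, not of a pointwise minimizer, there is a fallback. I would construct \(\eta\)-optimal measurable selectors, giving \(\mathcal R_W^\star(q)\le \mathbb E_W[L_q(\Pi_q^W)]+\eta\), and then let \(\eta\to0\). Bounded losses make the error uniform. This measurable-selection point is the only genuinely technical obstacle. I will handle it by appeal to the regularity assumption, noting that measurability of \(w\mapsto L_q(\Pi_q^W)\) is needed for the displayed expectations to be finite, which the assumption grants.

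One detail is that \(Z=Z_{q,b}=C_b(X,q)\) depends on \(q\). The posterior \(\Pi_q^Z\) must therefore be taken with respect to the \(q\)-specific interface, so for each \(q\) the argument above is applied to its own \(Z\). Since the supremum is taken after the pointwise identity, no uniformity in \(q\) is needed.

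The two corollary statements then follow directly. Exact sufficiency, defined as \(\delta_{\Qfam}(Z\|X)=0\), holds iff \(\Delta_{\Qfam}(Z;X)=0\). \(\varepsilon\)-approximate sufficiency, defined as \(\delta_{\Qfam}(Z\|X)\le\varepsilon\), holds iff \(\Delta_{\Qfam}(Z;X)\le\varepsilon\).

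It is also worth recording that each \(\Delta_q\ge0\) when \(Z\) is a (possibly randomized, independent of \(Y_q\) given \(X\)) function of \(X\). In that case any \(Z\)-rule induces an \(X\)-rule, and \(\delta_{\Qfam}\) is a genuine one-sided deficiency. This fact is not needed for the equality itself.
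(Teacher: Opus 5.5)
Your proof of the identity $\delta_{\Qfam}(Z\|X)=\Delta_{\Qfam}(Z;X)$ is sound and follows the paper's route: a per-query Bayes-envelope representation, then the supremum. You even prove the envelope representation, which the paper simply asserts. The gap is in the exact-sufficiency clause.

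In the paper, exact task-aware answer sufficiency (left implicit in the main text, defined in Definition~\ref{app:def:exact}) is not the condition $\delta_{\Qfam}(Z\|X)=0$. It is the local requirement that, for every $q\in\Qfam$, $L_q(\Pi_q^Z)=\mathbb E[L_q(\Pi_q^X)\mid Z,q]$ almost surely. Only the $\varepsilon$-approximate notion is defined through $\delta_{\Qfam}\le\varepsilon$, and that part of your proposal is fine. By redefining the exact notion, you make the first ``iff'' a tautology and skip its only substantive step, the Jensen-gap argument the main text alludes to.

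Write $G_q:=L_q(\Pi_q^Z)-\mathbb E[L_q(\Pi_q^X)\mid Z,q]$. The tower property gives $\Delta_q=\mathbb E[G_q]$, but $\mathbb E[G_q]=0$ forces $G_q=0$ a.s.\ only if $G_q\ge 0$ a.s. Your garbling remark yields only the global inequality $\Delta_q\ge0$, not this pointwise sign.

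The missing idea is conditional Jensen for the concave envelope $L_q$, which is an infimum of linear functionals of $\pi$. Combine it with $\Pi_q^Z=\mathbb E[\Pi_q^X\mid Z,q]$, which holds because, conditionally on $X$, $Z$ is generated independently of $Y_q$ (the condition you already state). Concretely, with your measurable minimizer $a^\star$ for the $Z$-problem,
\begin{equation*}
L_q(\Pi_q^Z)=\mathbb E\Bigl[\int\ell(a^\star(Z),y)\,\Pi_q^X(dy)\Bigm| Z,q\Bigr]\ge \mathbb E\bigl[L_q(\Pi_q^X)\bigm| Z,q\bigr]\quad\text{a.s.}
\end{equation*}
Hence each $G_q\ge0$. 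So $\Delta_{\Qfam}(Z;X)=\sup_q\mathbb E[G_q]=0$ iff $\mathbb E[G_q]=0$ for all $q$, iff $G_q=0$ a.s.\ for all $q$, which is exactly the definition. With this step added, your proposal matches the paper's proof.
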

The proof is in Appendix \ref{app:theory-proofs} using a Jensen-gap identity for the Bayes envelope. Operationally, this says that answer-loss degradation is not a loose proxy for sufficiency. After restricting experiment comparison to the deployment query family, it is the sufficiency deficit itself.

\paragraph{Partition refinement and operational budgets.}
The query-level object in \eqref{eq:suff-risk-equivalence} is a worst-query guarantee. In experiments, however, we
observe finite query-family partitions rather than repeated samples for every individual
query. 
% We write
% \begin{equation}
% d_q(Z)=\mathcal R^\star_Z(q)-\mathcal R^\star_X(q)
% \end{equation}
% for the Bayes excess answer-risk of interface \(Z\) on query \(q\). 
Let \(\mu\) denote the deployment distribution over queries, and for a partition
\(\mathcal P\) of \(\mathcal Q\), define the partition-level family gap:
\begin{equation}
\Delta^{\rm fam}_{\mathcal P}(Z;X)
=
\max_{F\in\mathcal P:\mu(F)>0}
\bar d_F(Z),
\qquad
\bar d_F(Z)
=
\mathbb E_{\mu}\!\left[R_{Z}^\star(Q)-\mathcal R_X^\star(Q)\mid Q\in F\right],
\end{equation}
and the deployment-average gap
\begin{equation}
\Delta^{\rm avg}_{\mu}(Z;X)=\mathbb E_{\mu}\!\left[R_{Z}^\star(Q)-\mathcal R_X^\star(Q)\right].
\end{equation}
Thus, \(\Delta^{\rm fam}_{\mathcal P}\) is a population-level, lower-resolution version of the worst-query gap
in \eqref{eq:family-excess-risk}.

\begin{theorem}[Monotonicity under partition refinement]
\label{thm:partition-refinement}
Let \(\mathcal P'\preceq \mathcal P\) denote \(\mathcal P'\) refines \(\mathcal P\): for every
fine cell \(G\in\mathcal P'\), there exists a coarse cell \(F\in\mathcal P\) such that
\(G\subseteq F\). Then, for all \(Z\),
\begin{equation}
\Delta^{\rm avg}_{\mu}(Z;X)
\le
\Delta^{\rm fam}_{\mathcal P}(Z;X)
\le
\Delta^{\rm fam}_{\mathcal P'}(Z;X)
\le
\Delta_{\mathcal Q}(Z;X).
\end{equation}
\end{theorem}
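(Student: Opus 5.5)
The plan is to prove the chain by three separate "average is at most max" arguments, all applied to the per-query excess $d_q(Z):=\mathcal R^\star_Z(q)-\mathcal R^\star_X(q)$. By Assumption~\ref{ass:bayes}, $d_q$ is bounded and its $\mu$-expectations are finite, so the conditional means $\bar d_F(Z)$ are well defined for every cell with $\mu(F)>0$. I will assume throughout that $q\mapsto d_q(Z)$ is $\mu$-measurable and that the partitions are countable, with measurable cells. Under these conditions the cells of positive mass carry all of $\mu$.

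\emph{Left inequality.} By the law of total expectation over $\mathcal P$,
\[
\Delta^{\rm avg}_\mu(Z;X)=\sum_{F\in\mathcal P:\mu(F)>0}\mu(F)\,\bar d_F(Z).
\]
This is a convex combination of the $\bar d_F(Z)$ with weights summing to one, so it is bounded by their maximum, which is $\Delta^{\rm fam}_{\mathcal P}(Z;X)$.

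\emph{Middle inequality.} Fix a coarse cell $F\in\mathcal P$ with $\mu(F)>0$. Since $\mathcal P'$ refines $\mathcal P$, the fine cells $G\subseteq F$ partition $F$: every fine cell lies in some coarse cell, the coarse cells are disjoint, and $\mathcal P'$ covers $\mathcal Q$. Applying the same tower argument conditionally on $F$ gives
\[
\bar d_F(Z)=\sum_{G\subseteq F,\ \mu(G)>0}\frac{\mu(G)}{\mu(F)}\,\bar d_G(Z)\le\max_{G\in\mathcal P':\mu(G)>0}\bar d_G(Z).
\]
Maximizing over $F$ gives the middle inequality.

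\emph{Right inequality.} For every $G\in\mathcal P'$ with $\mu(G)>0$, a conditional mean cannot exceed the supremum of the integrand, so $\bar d_G(Z)\le\sup_{q\in G}d_q(Z)\le\sup_{q\in\mathcal Q}d_q(Z)=\Delta_{\mathcal Q}(Z;X)$.

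The main obstacle is not the algebra but the well-posedness of the objects. First, "max" over cells must be attained, or read as a supremum. With infinitely many cells I would replace max by sup, and the inequalities survive unchanged. Second, the decomposition identities require countable partitions. For uncountable partitions I would instead argue directly: fix $\eta>0$ and let $M=\sup_{G}\bar d_G(Z)$. Then $d_Q\le M$ holds $\mu$-a.s. on the union of positive-mass fine cells, and conditioning on $F$ yields $\bar d_F\le M$. If needed, I would state the countability/measurability requirement explicitly as part of the standing regularity.
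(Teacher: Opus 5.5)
Your proof is correct and is essentially the paper's own: the paper likewise writes $\bar d_F(Z)$ as the $\mu(G)/\mu(F)$-weighted convex combination of the fine-cell means $\bar d_G(Z)$ inside $F$, writes $\Delta^{\rm avg}_{\mu}(Z;X)$ as the $\mu(F)$-weighted combination of the coarse-cell means, and bounds every fine-cell mean by $\sup_{q\in\mathcal Q} d_q(Z)$, tacitly working with finite partitions, where your countability caveat holds automatically. The one thing to drop is your sketched rescue for uncountable partitions: $\bar d_G(Z)\le M$ for every cell does not imply $d_Q\le M$ $\mu$-a.s.\ (a cell mean is not a pointwise bound), and in that setting the statement can genuinely fail (take $\mathcal Q=[0,1]$, $\mu=\tfrac12\delta_0+\tfrac12\,\mathrm{Unif}[0,1]$, $\mathcal P=\{\mathcal Q\}$, $\mathcal P'$ the singletons, $d_0(Z)=0$ and $d_q(Z)=c>0$ for $q\neq 0$; only $\{0\}$ has positive mass, so $\Delta^{\rm fam}_{\mathcal P'}=0<c/2=\Delta^{\rm fam}_{\mathcal P}$), so the countability assumption you mention is necessary rather than merely convenient.
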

As in \eqref{eq:compressed-interface}, let \(C\) be a budgeted compressor and \(Z_b\) a budget-\(b\) interface, and define
\begin{equation}
\mathcal B_{\mathcal P}(\varepsilon;C)
=
\left\{
b\in\mathcal B:
\Delta^{\rm fam}_{\mathcal P}(Z_b;X)\le \varepsilon
\right\},
\quad
b^\star_{\mathcal P}(\varepsilon;C)
=
\inf\left\{
b\in\mathcal B:
\Delta^{\rm fam}_{\mathcal P}(Z_b;X)\le \varepsilon
\right\},
\end{equation}
with \(b^\star_{\mathcal P}(\varepsilon;C)=\infty\) if the feasible set is empty.
% For a budgeted compressor
% \(C\) as in \eqref{eq:compressed-interface}, define the partition-indexed feasible set 
% \begin{equation}
% \mathcal B_{\mathcal P}(\epsilon;C)
% =
% \left\{
% b\in\mathcal B:
% \Delta^{\rm fam}_{\mathcal P}(Z_b;X)\le \epsilon
% \right\},
% \qquad
% b^\star_{\mathcal P}(\epsilon;C)
% =
% \inf \mathcal B_{\mathcal P}(\epsilon;C).
% \end{equation}
If \(\mathcal P'\preceq\mathcal P\), then
\begin{equation}
\mathcal B_{\mathcal P'}(\varepsilon;C)
\subseteq
\mathcal B_{\mathcal P}(\varepsilon;C),
\qquad
b^\star_{\mathcal P}(\varepsilon;C)
\le
b^\star_{\mathcal P'}(\varepsilon;C).
\end{equation}

% If \(\mathcal P'\preceq\mathcal P\), then
% \begin{equation}
% \mathcal B_{\mathcal P'}(\epsilon;C)
% \subseteq
% \mathcal B_{\mathcal P}(\epsilon;C),
% \qquad
% b^\star_{\mathcal P}(\epsilon;C)
% \le
% b^\star_{\mathcal P'}(\epsilon;C).
% \end{equation}

A budget that passes a refined partition also passes the coarse partition, but a budget
that passes the coarse partition may fail once the same examples are split into more
specific query families. The proof, including the budget consequence, is in Appendix~\ref{app:proof-partition-refinement}. 

This theorem is the formal reason for using familywise sign-off rather than a dataset
average. Refinement does not create new evidence, it removes averaging across heterogeneous
subfamilies. A coarse speech family can pass at tolerance \(\varepsilon\) because
low-damage lexical queries are averaged together with high-damage prosody queries. After
refinement, the prosody cell can become the bottleneck, forcing a larger
budget or a family-specific fallback.

\paragraph{Answer-risk frontiers and query conditioning.}
Keep the deployment query distribution \(\mu\) from the partition definition, and let \(Q\) denote the runtime query with support \(\operatorname{supp}(\mu)=\Qfam\). The prior \(\mu\) weights average rate or length only, whereas the answer-preservation constraint remains worst-case over \(\Qfam\). We write
\(I(\cdot;\cdot)\) and \(H(\cdot)\) for mutual information~\citep{kraskov2004estimating} and entropy~\citep{renyi1961measures}, respectively, and
\(\tau\) for the length or cost of the compressed interface.
The frontiers below translate an answer-preservation tolerance into the smallest
interface size that can meet it: \(R^\star_{\rm Bayes}\) measures information
rate, not Bayes answer-risk, while \(L^\star\) measures expected length or cost.
% The frontiers below translate an answer-preservation tolerance into the smallest interface
% size that can meet it: \(R^\star\) measures information rate, while \(L^\star\)
% measures expected length or cost. 
% All preservation constraints are the budgeted ones from \eqref{eq:family-excess-risk}. 
For query-agnostic
compressors, i.e., when \(C_b(X,q)\) does not depend on \(q\):
\begin{equation}
\label{eq:bayes-frontiers}
\begin{aligned}
R_{\rm Bayes}^\star(\varepsilon,\Qfam)
&=
\inf_{\substack{C,\;b\in\mathcal B\\
C_b(X,q)=C_b(X,q')\ \forall q,q'\in\Qfam}}
\left\{
I\!\left(X;C_b(X,Q)\right):
\Delta_{\Qfam}(b;C)\le\varepsilon
\right\},\\
L_{\rm Bayes}^\star(\varepsilon,\Qfam)
&=
\inf_{\substack{C,\;b\in\mathcal B\\
C_b(X,q)=C_b(X,q')\ \forall q,q'\in\Qfam}}
\left\{
\mathbb E_{X,\,Q\sim\mu}[\tau(C_b(X,Q))]:
\Delta_{\Qfam}(b;C)\le\varepsilon
\right\}.
\end{aligned}
\end{equation}
For query-conditioned compressors, when \(C_b(X,q)\) may depend on the realized query, then:
\begin{equation}
\label{eq:cond-frontiers}
\begin{aligned}
R_{\rm Bayes,cond}^\star(\varepsilon,\Qfam)
&=
\inf_{\substack{C,\;b\in\mathcal B}}
\left\{
I\!\left(X;C_b(X,Q)\mid Q\right):
\Delta_{\Qfam}(b;C)\le\varepsilon
\right\},\\
L_{\rm Bayes,cond}^\star(\varepsilon,\Qfam)
&=
\inf_{\substack{C,\;b\in\mathcal B}}
\left\{
\mathbb E_{X,\,Q\sim\mu}[\tau(C_b(X,Q))]:
\Delta_{\Qfam}(b;C)\le\varepsilon
\right\}.
\end{aligned}
\end{equation}
% The prior on \(Q\) affects the average rate or length of a conditioned
% compressor, but the preservation constraint remains worst-case over the
% signed-off query family.

\begin{theorem}[Conditioned compression advantage]
\label{thm:condadv}
For any design prior \(Q\sim\mu\) with support \(\Qfam\), independent of \(X\),
\begin{equation}
R_{\rm Bayes,cond}^\star(\varepsilon,\Qfam)
\le
R_{\rm Bayes}^\star(\varepsilon,\Qfam),
\qquad
L_{\rm Bayes,cond}^\star(\varepsilon,\Qfam)
\le
L_{\rm Bayes}^\star(\varepsilon,\Qfam).
\end{equation}
The rate inequality can be strict. Let \(\Qfam=\{q_1,q_2\}\) and
\(P(Q=q_1)=\lambda\in(0,1)\), and let the audio contain two independent
answer factors, \(X=(V_1,V_2,W)\), with \(H(V_1),H(V_2)>0\), where \(W\) is
irrelevant side information. Query \(q_i\) asks for factor \(V_i\), so the
corresponding correct-answer variable is \(Y_{q_i}=V_i\), and 0-1 loss is
evaluated on the label alphabet of \(V_i\). Then
\begin{equation}
R_{\rm Bayes}^\star(0,\{q_1,q_2\})=H(V_1,V_2),
\quad
R_{\rm Bayes,cond}^\star(0,\{q_1,q_2\})
=
\lambda H(V_1)+(1-\lambda)H(V_2).
\end{equation}
Thus the strict gap is
\((1-\lambda)H(V_1)+\lambda H(V_2)>0\): the agnostic interface must retain both
answer factors to answer either possible query, whereas the conditioned
interface observes \(Q\) first and retains only the realized answer factor.
% The rate inequality can be strict. Let \(\Qfam=\{q_1,q_2\}\),
% \(P(Q=q_1)=\lambda\in(0,1)\), \(X=(S_1,S_2)\) with independent
% finite-alphabet factors satisfying \(H(S_1),H(S_2)>0\), \(Y_{q_i}=S_i\),
% and zero-one loss. Then
% \begin{equation}
% R_{\rm Bayes}^\star(0,\{q_1,q_2\})=H(S_1,S_2),
% \quad
% R_{\rm Bayes,cond}^\star(0,\{q_1,q_2\})
% =
% \lambda H(S_1)+(1-\lambda)H(S_2).
% \end{equation}
% Thus the strict gap is
% \((1-\lambda)H(S_1)+\lambda H(S_2)>0\): the agnostic interface must retain
% both factors to answer either possible query, whereas the conditioned
% interface observes \(Q\) first and retains only the realized answer factor.
\end{theorem}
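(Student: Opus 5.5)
The plan has two parts: the general inequalities, then the explicit example.

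\emph{Inequalities.} I would show that the query-agnostic feasible set in \eqref{eq:bayes-frontiers} embeds into the conditioned feasible set in \eqref{eq:cond-frontiers} without increasing the objective. The constraint \(\Delta_{\Qfam}(b;C)\le\varepsilon\) is the same in both programs, and every agnostic pair \((C,b)\) is also admissible in the conditioned program. So the length inequality is immediate, since the objective \(\mathbb E[\tau(C_b(X,Q))]\) is literally the same. For the rate inequality I would write \(Z=C_b(X,Q)=g(X)\) for an agnostic compressor. Because \(Q\) is independent of \(X\), \((X,Z)\) is independent of \(Q\), and hence
\[
I(X;Z\mid Q)=I(X;Z).
\]
Thus every agnostic candidate value is attained by a conditioned candidate, and taking infima gives both inequalities. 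This holds for any prior \(\mu\) with support \(\Qfam\), because the constraint does not involve \(\mu\).

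\emph{Example, agnostic value.} At \(\varepsilon=0\) with 0-1 loss, \(\mathcal R_X^\star(q_i)=0\), since \(V_i\) is a function of \(X\). The constraint \(\Delta_{\Qfam}=0\) therefore forces \(\mathcal R^\star_Z(q_i)=0\) for both \(i\), i.e., \(V_i\) is almost surely a function of \(Z=g(X)\). This is the step I expect to need the most care, and I would try the following first. Assume finite alphabets, so that zero Bayes 0-1 risk means the posterior of \(V_i\) given \(Z\) is a point mass a.s. Then \((V_1,V_2)\) is a.s. a function of \(Z\), and the data-processing chain gives
\[
I(X;Z)\ge I(V_1,V_2;Z)=H(V_1,V_2).
\]
Achievability uses \(g(X)=(V_1,V_2)\), which satisfies the constraint with equality, so \(R^\star_{\rm Bayes}(0)=H(V_1,V_2)\).

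\emph{Example, conditioned value.} For the lower bound, conditioning on \(Q=q_i\), the same argument shows that \(V_i\) is a function of \(Z\) on that event. Since \(Q\perp X\), we get \(I(X;Z\mid Q=q_i)\ge H(V_i\mid Q=q_i)=H(V_i)\). Averaging over \(Q\) gives \(\lambda H(V_1)+(1-\lambda)H(V_2)\). For achievability, take \(C(X,q_i)=V_i\): the risk is zero for each query, and \(I(X;V_i\mid Q=q_i)=H(V_i)\).

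\emph{Gap.} By independence, \(H(V_1,V_2)=H(V_1)+H(V_2)\). Subtracting the two values gives
\[
(1-\lambda)H(V_1)+\lambda H(V_2),
\]
which is strictly positive because \(\lambda\in(0,1)\) and both entropies are positive.

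A definitional point also needs attention. In the agnostic program, the mutual information \(I(X;C_b(X,Q))\) is evaluated with a \(Q\)-free map, so no mixing over \(Q\) occurs. I would state this explicitly.
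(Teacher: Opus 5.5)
Your proposal is correct and follows the paper's proof essentially step for step. The steps are: emulating any agnostic compressor for the non-strict inequalities; the ``$(V_1,V_2)$ is a function of $Z$'' plus data-processing lower bound with achiever $Z=(V_1,V_2)$ for the agnostic value; and the per-query bound $I(X;Z\mid Q=q_i)\ge H(V_i)$ with achiever $Z=V_Q$ for the conditioned value. You also make explicit three points the paper leaves implicit: using $Q\perp X$ to get $I(X;Z\mid Q)=I(X;Z)$, the discrete-alphabet assumption, and $H(V_1,V_2)=H(V_1)+H(V_2)$ for the gap.
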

Conditioning is thus a frontier-level
possibility result, not an operational guarantee: it helps when different
queries require different answer-relevant audio factors. At the ideal frontier
the gain cannot be negative, but in a fixed learned system the measured gain
can vanish or reverse when factors are shared, conditioning barely changes the
retained interface, or the downstream LALM cannot use the retained information. The proof is in Appendix~\ref{app:e4}.

\section{Practical Methodology and Performance Metrics}
\label{sec:estimating-main}
% As denoted in Section~\ref{sec:problem-main}, 
Consider \(N\) dataset examples
\(\{(x_i,q_i,y_i)\}_{i=1}^N\), where example \(i\) consists of an audio clip,
query, and correct answer. Let \(\mathcal P\) be a query-family partition of
\(\mathcal Q\), and let \(F_i\in\mathcal P\) be the family of query \(q_i\).
A candidate compression method \(m\) is a rule that, for each budget \(b\),
produces compressed interfaces, where \(z^{(m)}_{i,b}=\mathcal{C}_{b}^{(m)}(x_i,q_i)\), following notation from~\eqref{eq:compressed-interface}. We fix one method \(m\) and suppress it by writing
\(z_{i,b}=z^{(m)}_{i,b}\), unless we compare
frontiers across methods.
% For a candidate compression method, define the compressed interface \(z_{i,b}=C_{m,b}(x_i,q_i)\), with agnostic methods ignoring \(q_i\). 
Under a frozen answerer \(f\), the raw-audio and compressed interface losses are
\begin{equation}
\ell^x_i=\ell_{q_i}(f(x_i,q_i),y_i),
\qquad
\ell^z_{i,b}=\ell_{q_i}(f(z_{i,b},q_i),y_i).
\label{eq:losses}
\end{equation}
For a query family \(F\in\mathcal P\), let \(N_F=|\{i:F_i=F\}|\).  We estimate three quantities:
\begin{equation}
\widehat{\Delta}_{F}(b)
=
\frac{1}{N_F}\sum_{i:F_i=F}\bigl(\ell_{i,b}^{z}-\ell_i^x\bigr), \qquad
% \label{eq:emp-gap1}
% \\
\widehat{\Delta}^{\rm fam}_{\mathcal P}(b)
=
\max_{F\in\mathcal P:\,N_F>0}\widehat{\Delta}_{F}(b),
\label{eq:emp-gap2}
\end{equation}
\begin{equation}
\widehat{\Delta}_{\rm avg}(b)
=
\frac{1}{N}\sum_{i=1}^{N}\bigl(\ell_{i,b}^{z}-\ell_i^x\bigr),
\label{eq:emp-gap3}
\end{equation}
where \(\davg{b}\) asks how much damage appears after pooling all
queries, while \(\dQP{b}\) asks how much damage appears in the worst observed
deployment family, using per-family $\widehat{\Delta}_{F}(b)$ quantities. Before deployment sign-off, sparse cells below \(N_{\min}\) examples, are merged with a parent family or reported as inconclusive.
We define the hidden-damage margin:
\begin{equation}
\label{eq:emp-hidden}
\widehat H_{\mathcal P}(b)
=
\widehat{\Delta}^{\rm fam}_{\mathcal P}(b)
-
\widehat{\Delta}_{\rm avg}(b),
\end{equation}
which is the degradation hidden by replacing the partition
\(\mathcal P\) with a dataset mean.  Large margins on budget-degenerate controls
can also reflect baseline heterogeneity or cancellation rather than
budget-sensitive compression damage. 

For prediction $a$ and answer $y$, we use the 0-1 multiple choice loss, defined as \({\ell(a,y)=\mathbf 1\{a\neq y\}}\). This matches our empirical claim: the benchmarks ask for one discrete answer, and compression is judged by whether it preserves answer correctness under the same LALM.  
However, the framework is not tied to this loss and any bounded answer loss can replace it without changing the protocol.
We use two tolerances.  The average tolerance \(\varepsilon_{\rm avg}\) limits
overall regression, while the family tolerance \(\varepsilon_F\) limits damage
to the worst deployment family.  
They may be set equal: when we report a single \(\varepsilon=0.05\), both tolerances are \(0.05\) in 0-1 loss units, i.e., at most five percentage points of additional error.
% They may be set equal, e.g., when we report a single
% \(\varepsilon=0.05\), it denotes the common five-percentage-point setting.

For \(r\in\{\mathrm{avg},\mathrm{fam}\}\), define
\begin{equation}
\widehat{\Delta}_{r}^{\mathcal P}(b)=
\begin{cases}
\widehat{\Delta}_{\rm avg}(b), & r=\mathrm{avg},\\
\widehat{\Delta}^{\rm fam}_{\mathcal P}(b), & r=\mathrm{fam},
\end{cases}
\qquad
\varepsilon_r=
\begin{cases}
\varepsilon_{\rm avg}, & r=\mathrm{avg},\\
\varepsilon_F, & r=\mathrm{fam}.
\end{cases}
\end{equation}

% \eqref{eq:emp-frontier} and \eqref{eq:cert-frontier} for \(r\in\{\mathrm{avg},\mathrm{fam}\}\).
%     \item \textbf{Audit conditioning:} compare agnostic and conditioned methods with \eqref{eq:emp-gcond};
    
For an budget grid \(B\), the point and confidence-aware frontiers
for a fixed method are
\begin{equation}
\widehat b^\star_{\mathcal P}(\varepsilon_r,\mathcal Q;r)
=
\min_{b\in B}
\left\{b:\widehat\Delta^{\mathcal P}_r(b)\le \varepsilon_r\right\},
\label{eq:emp-frontier}
\end{equation}
\begin{equation}
\widehat b^{\star,95}_{\mathcal P}(\varepsilon_r,\mathcal Q;r)
=
\min_{b\in B}
\left\{b:U_{95}\!\left(\widehat\Delta^{\mathcal P}_r(b)\right)
\le \varepsilon_r\right\}.
\label{eq:cert-frontier}
\end{equation}
Both frontiers are defined to be \(\infty\) when the set is empty.
For a method \(m\), write
\({\widehat b^\star_{\mathcal P}(\varepsilon_r,\mathcal Q;r\mid m)}\)
for the same frontier computed after generating compressed interfaces with
method \(m\). For the agnostic and conditioned selectors, the operational
conditioning gain is
\begin{equation}
\widehat G^{\rm op}_{\rm cond}(\varepsilon_r,\mathcal Q;r)
=
\widehat b^\star_{\mathcal P}(\varepsilon_r,\mathcal Q;r\mid \mathrm{agn})
-
\widehat b^\star_{\mathcal P}(\varepsilon_r,\mathcal Q;r\mid \mathrm{cond}).
\label{eq:emp-gcond}
\end{equation}
Positive gain means that the conditioned selector reaches the same tolerance
with less retained-audio budget, not that it has higher accuracy at a fixed
budget.  Unless stated otherwise, the headline conditioning claim uses
\(r=\mathrm{fam}\), since sign-off is family-wise. 

Uncertainty is computed on complete paired examples: we resample example
\(i\) with replacement~\citep{lafontaine2021history}, keep \(\ell_i^x\) and \(\ell_{i,b}^{z}\) paired, and
recompute the full quantity, including the maximum over families when relevant.
The endpoints \(U_{95}\) and \(L_{95}\) are 95\% upper and lower confidence intervals (CIs)~\citep{smithson2003confidence}. 
Unless a caption explicitly states otherwise, reported 95\% intervals are paired
example-bootstrap intervals computed by this resampling. The across-seed
Student-\(t\) intervals~\citep{goutis1992increasing} instead summarize
selector-training stochasticity across the three learned-selector seeds, not
example-level sampling uncertainty.

Given the endpoints, the deployment decision given the deployment partition $\mathcal P$ over $\mathcal Q$ is
\begin{equation}
\label{eq:decision-rule}
\mathrm{decision}(b, \varepsilon_{\rm avg}, \varepsilon_F)=
\begin{cases}
\mathrm{accept}, \,\,\,
\uci{\widehat{\Delta}_{\rm avg}(b)}\le \varepsilon_{\rm avg}
\ \text{and}\
\uci{\widehat{\Delta}^{\rm fam}_{\mathcal P}(b)}\le\varepsilon_F,
\\
\mathrm{reject}, \,\,\,\,\,
\lci{\widehat{\Delta}_{\rm avg}(b)}> \varepsilon_{\rm avg}
\ \text{or}\
\lci{\widehat{\Delta}^{\rm fam}_{\mathcal P}(b)}>\varepsilon_F,
\\
\mathrm{inconclusive}, \, \text{otherwise}.
\end{cases}
\end{equation}

\section{Data and Experimental Setup}
\label{sec:experiments-main}
We evaluate with five English-prompted multiple-choice audio-question benchmarks:
DCASE 2026 dev, a single-family control \citep{dcase2026adqa};
AudioMCQ-StrongAC, a multi-family benchmark and a selector-training
source \citep{he2025audiomcq}; MMSU, a speech and prosody taxonomy stress test
\citep{wang2025mmsu}; MMAR, a mixed speech, audio, and music reasoning check
\citep{ma2025mmar}; and BigBench Audio, a text-dominated control for
budget-degenerate behavior \citep{artificialanalysis2024bigbenchaudio,suzgun2022bbh}.
The models are Qwen2-Audio-7B-Instruct~\citep{chu2024qwen2audio} and Qwen2.5-Omni-7B
\citep{xu2025qwen25omni}. Resources are licensed for academic use.

% Our empirical compressor is hard chunk retention: audio of example $i$ is split into \(N_i\) one-second chunks, scored, and reduced to the top \(k_i(b)=\max(1,\lfloor bN_i\rfloor)\) chunks in temporal order. The retained fraction, determined by budget $b$, is an operational length-cost proxy, not a calibrated bitrate, latency, or entropy-rate measurement. Learned selectors are small MLP scoring heads over 768-dimensional log-mel chunk features. The conditioned version adds a 128-dimensional query embedding. They are trained on AudioMCQ-StrongAC chunk-relevance targets with BCE/KL-style losses and AdamW (learning rate \(10^{-4}\), weight decay \(0.01\), batch 32, cosine warmup), using Gumbel-softmax in training and hard top-\(k\) inference~\citep{jang2017gumbel}. At test time the selector sees only the audio and query, not labels or evaluation outcomes. The provided handoff does not expose a separate AudioMCQ selector-training split, so we treat AudioMCQ as in-domain evidence rather than a leakage-proof held-out selector test. Other datasets are zero-shot transfers where null or negative gains may reflect domain shift as well as limits of query conditioning. 
Our empirical compressor is hard chunk retention~\citep{liang2025ls}. For example \(i\), audio is split into \(N_i\) one-second chunks. A selector is the scoring rule that assigns one score to each chunk, using audio features alone in the agnostic case or audio features plus the query in the conditioned case. At budget \(b\), the compressed interface retains the top
\(k_i(b)=\max(1,\lfloor bN_i\rfloor)\) scored chunks, restored to temporal order, where \(k_i(b)\) is only the retained-chunk count determined by the budget, not a free hyperparameter. Alternative selectors might be uniform or energy-based scoring. Our main experiments use learned selectors that are small multilayer perceptron (MLP) scoring heads over 768-dimensional log-mel chunk features~\citep{meghanani2021exploration}. The conditioned version concatenates a 128-dimensional query embedding. They are trained on AudioMCQ-StrongAC chunk-relevance targets with a binary cross-entropy (BCE) and Kullback-Leibler (KL) relevance objective~\citep{tumminello2007kullback}, using AdamW (lr \(10^{-4}\), weight decay \(0.01\), batch 32, cosine warmup). Training uses Gumbel-softmax~\citep{herrmann2020channel}, while inference simply selects the hard top \(k_i(b)\) chunks. At test time the selector sees only \((x_i,q_i)\), not labels or evaluation outcomes. The provided handoff does not expose a separate AudioMCQ selector-training split, so we treat AudioMCQ as in-domain evidence rather than a leakage-proof held-out selector test. Other datasets are zero-shot transfers where null or negative gains may reflect domain shift as well as limits of query conditioning.
% The main sweep uses Qwen2-Audio at \(b\in\{0.05,0.1,0.2,0.4,1\}\), the query-use audit adds \(b\in\{0.01,0.02\}\), seeds \(\{42,123,456\}\), both backbones, and same- or cross-backbone gain estimates when applicable.s
% Unless a result explicitly states otherwise, the hidden-damage results use Qwen2-Audio, the learned query-conditioned selector, \(B=\{0.05,0.1,0.2,0.4,1\}\), and \(b=0.2\) for one-budget summaries. 
% Table~\ref{tab:v2-gcond-main} reports replicated conditioned-minus-agnostic frontier gains over three seeds and both backbones. Tables~\ref{tab:phase-b-dgcond-main}--\ref{tab:contamination-ratio-main} report the selector-query audit, which additionally evaluates \(b\in\{0.01,0.02\}\).

\paragraph{Query-family partitions.}
We report three partitions:
The keyword (kw) partition maps dataset category strings into shared coarse
families, e.g., \texttt{speech\_content}, \texttt{paralinguistic},
\texttt{sound\_event}, \texttt{sound\_scene}, \texttt{music},
\texttt{temporal}, and \texttt{general}.  The native partition uses
benchmark-provided task labels or taxonomies.  The semantic partition clusters query text using sentence embeddings and cosine \(K_{\rm sem}\)-means clustering. We use one fixed, predeclared semantic partition per dataset, merge rare clusters before reporting, and do not sweep \(K_{\rm sem}\) or choose it by the resulting gap. The effective post-merge family counts are in Table~\ref{tab:data-partition-map}.  The three partitions are not assumed to
form one nested chain, and they should be read as alternative operational scopes for
family-wise sign-off. The main sweep uses native partitions unless stated otherwise.

\begin{figure}[t!]
\centering
\includegraphics[width=\linewidth]{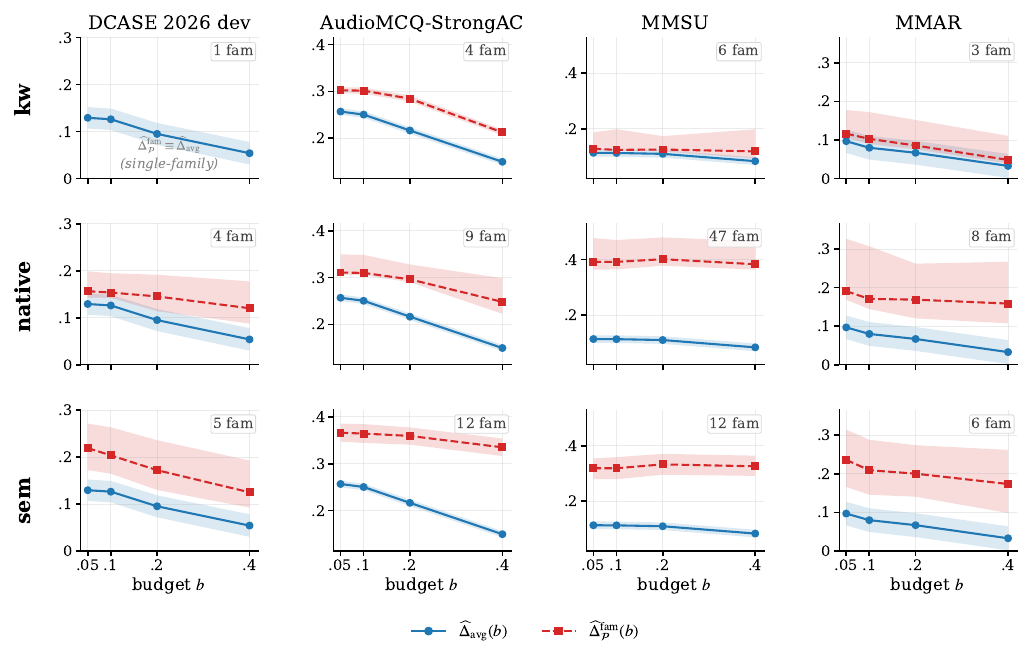}
\caption{Family-level excess risk across budgets and partitions. Red dashed curves are worst-family excess risk \(\dQF{b}\)~\eqref{eq:emp-gap2}, blue curves are mean excess risk \(\davg{b}\)~\eqref{eq:emp-gap3}. The vertical separation is the hidden-damage margin~\eqref{eq:emp-hidden}.  Rows compare keyword, native, and semantic query partitions (Section~\ref{sec:experiments-main}); bands are paired sample-bootstrap intervals. BigBench Audio is omitted due to its constant compressed-budget predictions (see Table~\ref{tab:data-partition-map}).}
\label{fig:partition-gap-3x4}
% \vspace{-1ex}
\end{figure}

\paragraph{Decoupled selector-query audit.} For query-conditioned selectors, let \(s_i\) denote the query text supplied to
the selector.  The anchor run uses \(s_i=q_i\).  A permuted run uses
\(s_i=\sigma(q_i)\), where \(\sigma\) is drawn from a finite set \(\Sigma\) of
valid query permutations, either globally or within family, while the downstream
LALM still receives the true query \(q_i\).  We average over \(|\Sigma|\le 10\)
valid permutations per training seed. Thus the audit changes only the
selector-side query signal and does not ask the LALM the wrong question.  For \(\varepsilon_F=0.05\), we report
\begin{equation}
\label{eq:selector-query-audit}
\Delta\widehat G_{\rm cond}^{\rm op}
=
\widehat G_{\rm cond}^{\rm op}(0.05,\Qfam;\mathrm{fam}\mid s_i=q_i)
-
\frac{1}{|\Sigma|}\sum_{\sigma\in\Sigma}
\widehat G_{\rm cond}^{\rm op}(0.05,\Qfam;\mathrm{fam}\mid s_i=\sigma(q_i)),
\end{equation}
which means that \eqref{eq:emp-gcond} is recomputed
with the indicated selector-side query stream. Positive values mean the
correct selector query changes the downstream answer-preservation frontier. 
% Reporting bands use PHI1 for \(|\Delta\widehat G|\le0.01\), PHI2 for \(0.01<\Delta\widehat G<0.05\), and PHI3 for \(\Delta\widehat G\ge0.05\).
% \paragraph{Decoupled downstream selector-query audit.}
% For example $i$ and query-conditioned selectors, let \(s_i\) denote the query text supplied to
% the selector.  The anchor run uses \(s_i=q_i\).  A permuted run uses
% \(s_i=\pi(q_i)\), where \(\pi\) is drawn either globally or within family, while the
% downstream LALM still receives the true query \(q_i\).  We average up to ten valid permutations per training seed. Thus, the audit changes
% only the selector-side query signal and does not ask the LALM the wrong
% question.  For the family metric at \(\varepsilon_F=0.05\), we report
% \begin{equation}
% \label{eq:selector-query-audit}
% \Delta\widehat G_{\rm cond}^{\rm op}
% =
% \widehat G_{\rm cond}^{\rm op}(0.05,\Qfam;\mathrm{fam}\mid s_i=q_i)
% -
% \frac{1}{|\Pi|}\sum_{\pi\in\Pi}
% \widehat G_{\rm cond}^{\rm op}(0.05,\Qfam;\mathrm{fam}\mid s_i=\pi(q_i)),
% \end{equation}
% where the conditional notation means that \eqref{eq:emp-gcond} is recomputed
% with the indicated selector-side query stream.  A positive value means that the
% correct selector query changes the downstream answer-preservation frontier. Reporting bands use PHI1 for \(|\Delta\widehat G|\le0.01\), PHI2 for \(0.01<\Delta\widehat G<0.05\), and PHI3 for \(\Delta\widehat G\ge0.05\).

\section{Main Results}
\label{sec:main-results}
Accuracies are reported in percent. Excess-risk and hidden-damage quantities are
reported as percentage points of 0-1 error. Budget frontiers and gains are
reported as retained-budget points, i.e., \(100\times\) retained-audio fraction;
for example, \(\widehat G^{\rm op}_{\rm cond}=4.75\) means a retained-fraction
saving of \(0.0475\). Higher accuracy is better; lower excess risk, hidden
damage, and required budget are better. Positive conditioning or query-use gains
mean retained-budget savings or real selector-query dependence.
% The Worst-2 fraction is diagnostic: larger values mean the positive excess mass is more concentrated, not intrinsically better. 
% The across-seed
% Student-\(t\) intervals in Table~\ref{tab:v2-gcond-main} instead summarize
% selector-training stochasticity across the three learned-selector seeds, not
% example-level sampling uncertainty. 
Unless a caption explicitly states otherwise, reported 95\% intervals use the
paired example bootstrap described in Section~\ref{sec:estimating-main}.

\paragraph{Averages hide family damage, and the partition determines what is hidden.}
Figure~\ref{fig:partition-gap-3x4} uses Qwen2-Audio-7B-Instruct with the learned query-conditioned selector.  The result is that, in the multi-family budget-sensitive panels, the worst-family excess risk \(\dQF{b}\) lies above the dataset mean \(\davg{b}\).  A budget that looks acceptable on average can therefore still damage the worst deployment family.  Table~\ref{tab:data-partition-map} shows the same point in snapshot form: DCASE and BigBench have zero keyword gap because the keyword partition has one
effective family, while their native partitions expose 5.04 pp and 39.9 pp
hidden-damage margins.  Where one partition truly refines another, this is the empirical counterpart of \Cref{thm:partition-refinement}, otherwise it is the operational warning that sign-off is partition-indexed.  BigBench is the control case: because its compressed predictions are effectively budget-invariant, its large native margin is evidence of baseline cancellation under a degenerate compression frontier.  The practitioner takeaway is to report both \(\davg{b}\) and \(\dQF{b}\) for the declared deployment partition, rather than certifying a compressor from the mean alone.

\begin{table}[h]
\centering
\footnotesize
\setlength{\tabcolsep}{2.5pt}
\caption{For each benchmark with Qwen2-Audio as the fixed model: number of examples $N$, raw-audio accuracy \(\mathrm{Acc}_{X}=100(1-N^{-1}\sum_i \ell_i^x)\) in \%, partition family counts (Section~\ref{sec:experiments-main}), and hidden-damage margin \(100\times\widehat H_{\mathcal P}(0.2)\)~\eqref{eq:emp-hidden} under the learned query-conditioned selector.}
\label{tab:data-partition-map}
\begin{tabular}{@{}lrrrrrrrr@{}}
\toprule
& & & \multicolumn{3}{c}{\#families} & \multicolumn{3}{c}{hidden-damage margin} \\
\cmidrule(lr){4-6}\cmidrule(l){7-9}
Dataset & examples & $\textrm{Acc}_{X}\%$ & kw & native & semantic & kw & native & semantic \\
\midrule
DCASE 2026 dev & 1{,}607 & 44.4 & 1 & 4 & 5 & 0 & 5.04 & 7.68 \\
AudioMCQ-StrongAC & 19{,}480 & 74.2 & 4 & 9 & 12 & 6.79 & 7.96 & 14.29 \\
MMSU & 5{,}000 & 55.3 & 6 & 47 & 12 & 1.56 & 29.17 & 22.31 \\
MMAR & 1{,}000 & 49.6 & 3 & 8 & 6 & 1.88 & 10.18 & 13.34 \\
BigBench Audio$^{\dagger}$ & 1{,}000 & 87.7 & 1 & 4 & 1 & 0 & 39.9 & n/a \\
\bottomrule
\end{tabular}\\[2pt]
{\footnotesize $^{\dagger}$single-family control under the keyword partition. semantic partition collapses to a single cluster.}
\vspace{-1ex}
\end{table}
\paragraph{Worst-family harm can be concentrated or diffuse.}
Table~\ref{tab:worstk-main} asks where the positive excess mass goes.  We report
\begin{equation}
\label{eq:w2}
W_2(b)=
\frac{\sum_{F\in\mathrm{top2}}[\widehat\Delta_F(b)]_+}
{\sum_{F\in\mathcal P}[\widehat\Delta_F(b)]_+},
\end{equation}
where \(\mathrm{top2}\) denotes the two families with largest positive excess
and \([u]_+=\max(u,0)\); if the denominator is zero, \(W_2(b)\) is reported as
n/a. Some failures are local, while others are spread across many families: AudioMCQ-StrongAC concentrates much of its positive excess in the two worst keyword or native families, whereas MMSU becomes much less concentrated under its 47-task native partition. The interpretation is operational rather than cosmetic. Concentrated harm suggests that a small set of families can be routed to raw audio, assigned a higher budget, or targeted for retraining; diffuse harm suggests that the retained budget, interface, or backbone is globally inadequate.
The practitioner takeaway is that the worst-family identity matters as much as the worst-family value: the same \(\dQF{b}\) can imply either local remediation or a system-level redesign.

\begin{table}[!htbp]
\centering
\footnotesize
\setlength{\tabcolsep}{4pt}
\caption{Worst-2 concentration \(W_2(0.2)\)~\eqref{eq:w2} under Qwen2-Audio learned query-conditioned selection. n/a means the partition has one effective family.}
\label{tab:worstk-main}
\begin{tabular}{@{}lccc@{}}
\toprule
Dataset & keyword & native & semantic \\
\midrule
DCASE 2026 dev & n/a & 31\% & 22\% \\
AudioMCQ-StrongAC & 94\% & 70\% & 25\% \\
MMSU & 76\% & 12\% & 38\% \\
MMAR & 100\% & 33\% & 58\% \\
BigBench Audio & n/a & 100\% & n/a \\
\bottomrule
\end{tabular}\\[2pt]
% {\footnotesize $^{\dagger}$single-family control under the keyword partition.\ $^{\ddagger}$semantic partition collapses to a single cluster (Table~\ref{tab:data-partition-map}).}
\vspace{-1ex}
\end{table}

\paragraph{Query-conditioned compression is a regime map, not a method label.}
\Cref{thm:condadv} says that conditioning can save budget when questions depend on different answer-relevant audio factors, but a learned selector and fixed LALM need not realize that gain. Table~\ref{tab:v2-gcond-main} shows the operational regime map at \(\varepsilon=0.05\): AudioMCQ-StrongAC is positive on both backbones, MMSU is negative or per-seed negative with wide spread, DCASE and MMAR are mixed, and BigBench is zero. 
% Bracketed values are across-seed Student-\(t\) intervals~\citep{goutis1992increasing} over selector-training stochasticity. 
% Use conditioned compression only in dataset-family-backbone regimes where the signed gain is positive and stable; otherwise prefer agnostic selection or collect more evidence under \eqref{eq:decision-rule}.

\begin{table}[!htbp]
\centering
\footnotesize
\setlength{\tabcolsep}{4pt}
\caption{Operational conditioning gain
\(\widehat G^{\rm op}_{\rm cond}(\varepsilon_F,\mathcal Q;\mathrm{fam})\)
at \(\varepsilon_F=0.05\) on the retained-budget frontier. Entries are means
over seeds \(\{42,123,456\}\), reported in retained-budget points, where one
point equals \(0.01\) retained-audio fraction. Positive values mean that the
conditioned selector reaches the same familywise tolerance with less retained
audio than the agnostic selector. Brackets are symmetric across-seed
Student-\(t\) 95\% CIs over selector-training stochasticity, not paired
example-bootstrap CIs. Sign labels indicate per-seed sign agreement.}
% \caption{Replicated operational conditioned gain at \(\varepsilon=0.05\) on the retained-budget frontier. Entries are three-seed means, reported as budget-fraction points \(\times100\). CIs are symmetric across-seed Student-\(t\) 95\% intervals~\citep{goutis1992increasing} and represent selector-training stochasticity, not sample-level bootstrap uncertainty. Sign labels follow per-seed sign agreement, and CI-based sign decisions are stricter when the across-seed std is large.}
\label{tab:v2-gcond-main}
\begin{tabular}{
@{}l
S[table-format=-1.2]@{\,[}S[table-format=-2.1]@{,\,}S[table-format=+2.1]@{]}l
S[table-format=-1.2]@{\,[}S[table-format=-2.1]@{,\,}S[table-format=+2.1]@{]}l
@{}}
\toprule
& \multicolumn{4}{c}{Qwen2-Audio} & \multicolumn{4}{c}{Qwen2.5-Omni} \\
\cmidrule(lr){2-5}\cmidrule(l){6-9}
Dataset
& \multicolumn{3}{c}{\(\widehat G^{\rm op}_{\rm cond}\) [95\% CI] (pts)} & \, sign
& \multicolumn{3}{c}{\(\widehat G^{\rm op}_{\rm cond}\) [95\% CI] (pts)} & \, sign \\
\midrule
DCASE 2026 dev
& -3.00 & -15.2 &  9.2 & \, mixed
& -3.10 & -12 &  5.8 & \, mixed \\
AudioMCQ-StrongAC
&  4.75 &  2.3 &  7.2 & \, consistent\(+\)
&  2.15 &  0.2 &  4.1 & \, consistent\(+\) \\
MMSU
& -3.36 & -6.3 & -0.4 & \, consistent\(-\)
& -7.34 & -18.2 &  3.5 & \, consistent\(-\) \\
MMAR
& -5.40 & -27.5 & 16.7 & \, mixed
&  0.80 & -10.1 & 11.7 & \, mixed \\
BigBench Audio
& \multicolumn{3}{c}{\(0\)} & \, zero
& \multicolumn{3}{c}{\(0\)} & \, zero \\
\bottomrule
\end{tabular}\\[2pt]
\vspace{-1ex}
\end{table}

\paragraph{Selector query use must survive downstream.}
Selector-internal query sensitivity is necessary but insufficient. Table~\ref{tab:phase-b-dgcond-main} recomputes \(\widehat G_{\rm cond}^{\rm op}\) after permuting only the selector-side query, as the LALM still receives the true query. 
Reporting bands use PHI1 for
\(|\Delta\widehat G^{\rm op}_{\rm cond}|\le0.01\), PHI2 for
\(0.01<|\Delta\widehat G^{\rm op}_{\rm cond}|<0.05\), and PHI3 for
\(|\Delta\widehat G^{\rm op}_{\rm cond}|\ge0.05\).
Both AudioMCQ-StrongAC cells are PHI3 substantial and exclude zero, MMSU is null-to-small with CIs including zero, and BigBench is the expected single-family control. Thus selected-chunk changes count as deployment evidence only when the decoupled audit changes the frontier seen by the frozen answerer.

\begin{table}[!htbp]
\centering
\footnotesize
\setlength{\tabcolsep}{4pt}
\caption{Decoupled selector-query audit
\(\Delta\widehat G^{\rm op}_{\rm cond}\) at \(\varepsilon_F=0.05\).
Only the selector query is permuted; the LALM receives the true query.
Entries are retained-budget points, where one point equals \(0.01\)
retained-audio fraction. Positive values mean that the correct selector-side
query improves the downstream retained-budget frontier. PHI bands use the
magnitude thresholds defined in the text.}
\label{tab:phase-b-dgcond-main}
\begin{tabular}{
@{}ll
S[table-format=+1.2]@{\,[}S[table-format=+1.1]@{,\,}S[table-format=+2.1]@{]\,}c
@{}}
\toprule
Backbone & Dataset, regime
& \multicolumn{3}{c}{\(\Delta\widehat G^{\rm op}_{\rm cond}\) [95\% CI] (pts)} & band\\
\midrule
Qwen2-Audio
& AudioMCQ-StrongAC, global
& +7.87 & +5.1 & +10.6 & \textbf{PHI3}\\

Qwen2-Audio
& MMSU, global
& +1.11 & -4.6 &  +6.9 & PHI2\\

Qwen2-Audio
& MMSU, within-family
& +0.92 & -5.3 &  +7.1 & PHI1\\

Qwen2-Audio
& BigBench Audio, global
& \multicolumn{3}{c}{\(0\)} & PHI1\(^\dagger\)\\

Qwen2.5-Omni
& AudioMCQ-StrongAC, global
& +7.17 & +4.5 &  +9.9 & \textbf{PHI3}\\

Qwen2.5-Omni
& MMSU, global
& -0.43 & -4.5 &  +3.7 & PHI1\\

Qwen2.5-Omni
& MMSU, within-family
& -0.05 & -2.8 &  +2.7 & PHI1\\

Qwen2.5-Omni
& BigBench Audio, global
& \multicolumn{3}{c}{\(0\)} & PHI1\(^\dagger\)\\
\bottomrule
\end{tabular}\\[2pt]
{\footnotesize \(^\dagger\)Degenerate single-family control under the keyword partition (BigBench Audio has one keyword family).}
\vspace{-1ex}
\end{table}

\paragraph{Naive shadow-query evaluation contaminates the audit.}
Table~\ref{tab:contamination-ratio-main} isolates an audit failure mode: if both
selector and LALM receive the permuted query, the effect mixes selector query-use
with asking the answerer the wrong question. On AudioMCQ-StrongAC with
Qwen2-Audio, the naive protocol is \(1.22\)-\(1.25\times\) the decoupled signal.
The \(\varepsilon_F=0.05\) decoupled entry intentionally matches
Table~\ref{tab:phase-b-dgcond-main}. Hence the decoupled protocol is the
headline query-use audit and shadow-query runs are contamination controls.
% \paragraph{Naive shadow-query evaluation contaminates the audit.}
% Table~\ref{tab:contamination-ratio-main} isolates an audit failure mode: if both selector and LALM receive the permuted query, the effect mixes selector query-use with asking the answerer the wrong question. On AudioMCQ-StrongAC with Qwen2-Audio, this naive protocol is $\times$1.22-1.25 the decoupled signal. The \(\varepsilon_F=0.05\) decoupled entry intentionally matches Table~\ref{tab:phase-b-dgcond-main}.
% % At \(b=0.05\), the paired naive(-)decoupled difference is \(0.017\) in retained-budget fraction, equivalently \(+1.7\) retained-budget points in the table's \(100\times\) scale, with 95\% CI \([+0.4,+3.0]\) retained-budget points.
% % At \(b=0.05\), the paired naive(-)decoupled difference is (+0.017) with 95\% CI ([+0.004,+0.03]). 
% Hence the decoupled protocol is the headline query-use audit and shadow-query runs are contamination controls.

\begin{table}[t!]
\centering
\footnotesize
\setlength{\tabcolsep}{3pt}
\newcommand{\ci}[3]{%
\makebox[2.05em][r]{\(#1\)}\,{\([\,\makebox[1.75em][r]{\(#2\)},\,\makebox[2.25em][r]{\(#3\)}\,]\)}%
}
\newcommand{\cib}[3]{%
{\boldmath\bfseries
\makebox[2.05em][r]{\(#1\)}\,{\([\,\makebox[1.75em][r]{\(#2\)},\,\makebox[2.25em][r]{\(#3\)}\,]\)}%
}%
}
\caption{Naive shadow-query contamination versus the decoupled selector-only
audit on AudioMCQ-StrongAC with Qwen2-Audio at
\(\varepsilon_F\in\{0.01,0.02,0.05\}\). The decoupled protocol permutes only
the selector query, whereas the naive protocol permutes both the selector and
LALM queries. Entries are retained-budget points, where one point equals
\(0.01\) retained-audio fraction. The naive-minus-decoupled column uses paired
runs; positive values quantify shadow-query contamination.}
\label{tab:contamination-ratio-main}
\begin{tabular}{@{}c@{\hspace{5pt}}c@{\hspace{5pt}}c@{\hspace{5pt}}c@{\hspace{5pt}}r@{}}
\toprule
\(\varepsilon_F\) &
decoupled [95\% CI] (pts) &
naive [95\% CI] (pts) &
naive\(-\)decoupled [95\% CI] (pts) &
ratio \\
\midrule
\(0.01\) &
\ci{1.9}{0}{3.8} &
\ci{2.37}{0.3}{4.4} &
\ci{0.47}{0}{1} &
\(\times1.25\)\\

\(0.02\) &
\ci{3.8}{0}{7.6} &
\ci{4.74}{0.6}{8.8} &
\ci{0.94}{-0.1}{2} &
\(\times1.25\)\\

\(0.05\) &
\ci{7.87}{5.1}{10.6} &
\ci{9.57}{6.7}{12.5} &
\ci{1.7}{0.4}{3} &
\(\times1.22\)\\
\bottomrule
\end{tabular}\\[2pt]
% {\footnotesize The \(b=0.05\) paired naive\(-\)decoupled CI excludes zero; lower-budget paired CIs touch zero but have positive per-seed signs.}
\vspace{-1ex}
\end{table}

% \paragraph{Operational endpoint.}
\paragraph{Operational endpoint.}
\Cref{alg:signoff-protocol} summarizes the practitioner-facing sign-off
procedure. Given a fixed answerer, candidate interface, paired evaluation data,
budget grid, partition, and tolerances, the protocol approves only budgets whose
paired excess-risk bounds pass both average and family-wise criteria.
% \Cref{alg:signoff-protocol} summarizes the practitioner-facing sign-off: specify the fixed answerer, candidate interface, paired evaluation data, budget grid, partition, and tolerances, then approve only budgets whose paired excess-risk bounds pass both average and family-wise criteria. This is the primary delivery we provide for practitioners to adopt and deploy in their unique environments. This protocol generalizes across LALMs, compressors, selectors, queries, losses, and other design choices presented in this study.

\begin{center}
\footnotesize
\refstepcounter{protocol}\label{alg:signoff-protocol}
{\setlength{\fboxsep}{3pt}\fbox{\begin{minipage}{0.965\linewidth}
% \footnotesize
\textbf{Algorithm~\theprotocol: sign-off protocol for a candidate compressor.}
\begin{enumerate}[leftmargin=1.1em,itemsep=0pt,topsep=0pt,parsep=0pt]
    \item \textbf{Specify:} evaluation set \(\{(x_i,q_i,y_i)\}\), fixed answerer \(f\), compression method(s), budget grid \(B\), loss, query partition(s) \(\mathcal P\), and tolerances \((\varepsilon_{\rm avg},\varepsilon_F)\).
    \item \textbf{Evaluate paired paths:} run the raw interface and each compressed interface
    \(C_b\) from \eqref{eq:compressed-interface}; store losses
    \(\ell^x_i\) and \(\ell^z_{i,b}\) from \eqref{eq:losses}.
    \item \textbf{Expose hidden damage:} compute \(\dQF{b}\)~\eqref{eq:emp-gap2}, \(\davg{b}\)~\eqref{eq:emp-gap3}, the worst family, and \(\widehat H_{\mathcal P}(b)\)~\eqref{eq:emp-hidden} for each method, budget, and partition; use local versus diffuse failures to choose routing or retraining versus higher budget or new interface.
    \item \textbf{Convert tolerances to budgets:} compute point and certified frontiers using \eqref{eq:emp-frontier} and \eqref{eq:cert-frontier} for \(r\in\{\mathrm{avg},\mathrm{fam}\}\).
    \item \textbf{Audit conditioning:} compare agnostic and conditioned methods with \eqref{eq:emp-gcond}; claim gain only when confidence-positive and stable across available seeds or partitions, and use the decoupled selector-query audit~\eqref{eq:selector-query-audit}.
    \item \textbf{Act:} approve only configurations that pass average and worst-family risk
assessment up to the desired tolerances. Otherwise, raise the budget or change
the interface, route failing families, or mark the evidence inconclusive.
    % \item \textbf{Act:} approve only user configurations that pass average and worst-query risk assessment up to desired tolerances. Otherwise, raise budget or change interface, route failing families, or mark the evidence inconclusive.
\end{enumerate}
\end{minipage}}}
\end{center}

\section{Conclusion}
\label{sec:conclusion-main}
We introduced a sign-off framework for answer-preserving audio compression in fixed LALMs. Compression should be certified by excess answer risk relative to raw audio and by the worst deployment query family, not by fidelity or dataset averages alone. Across five benchmarks, averages hide family-level degradation, partition choice changes the claim, and query conditioning is useful only in specific backbone-family regimes. The present experiments are limited to hard chunk selection, fixed semantic partitions, and retained-fraction cost proxies rather than codec bitrate or latency. Those interfaces should be evaluated by the same paired, family-wise protocol before deployment.

\clearpage
\bibliographystyle{plainnat}
\bibliography{references_revised}

@article{kraskov2004estimating,
  title={Estimating mutual information},
  author={Kraskov, Alexander and St{\"o}gbauer, Harald and Grassberger, Peter},
  journal={Physical Review E—Statistical, Nonlinear, and Soft Matter Physics},
  volume={69},
  number={6},
  pages={066138},
  year={2004},
  publisher={APS}
}

@article{tumminello2007kullback,
  title={Kullback-Leibler distance as a measure of the information filtered from multivariate data},
  author={Tumminello, Michele and Lillo, Fabrizio and Mantegna, Rosario N},
  journal={Physical Review E-Statistical, Nonlinear, and Soft Matter Physics},
  volume={76},
  number={3},
  pages={031123},
  year={2007},
  publisher={APS}
}

@book{smithson2003confidence,
  title={Confidence intervals},
  author={Smithson, Michael},
  series={140},
  year={2003},
  publisher={Sage}
}

@article{goutis1992increasing,
  title={Increasing the Confidence in Student's $ t $ Interval},
  author={Goutis, Constantinos and Casella, George},
  journal={The Annals of Statistics},
  volume={20},
  number={3},
  pages={1501--1513},
  year={1992},
  publisher={Institute of Mathematical Statistics}
}

@inproceedings{herrmann2020channel,
  title={Channel selection using gumbel softmax},
  author={Herrmann, Charles and Bowen, Richard Strong and Zabih, Ramin},
  booktitle={European conference on computer vision},
  pages={241--257},
  year={2020},
  organization={Springer}
}

@inproceedings{meghanani2021exploration,
  title={An exploration of log-mel spectrogram and {MFCC} features for {Alzheimer}’s dementia recognition from spontaneous speech},
  author={Meghanani, Amit and Anoop, Chandran Savithri and Ramakrishnan, AG},
  booktitle={2021 IEEE spoken language technology workshop (SLT)},
  pages={670--677},
  year={2021},
  organization={IEEE}
}

@article{liang2025ls,
  title={LS-EEND: Long-form streaming end-to-end neural diarization with online attractor extraction},
  author={Liang, Di and Li, Xiaofei},
  journal={IEEE Transactions on Audio, Speech and Language Processing},
  year={2025},
  publisher={IEEE}
}

@article{lafontaine2021history,
  title={The history of bootstrapping: Tracing the development of resampling with replacement},
  author={LaFontaine, Denise},
  journal={The Mathematics Enthusiast},
  volume={18},
  number={1},
  pages={78--99},
  year={2021}
}

@inproceedings{yin2022batude,
  title={Batude: Budget-aware neural network compression based on tucker decomposition},
  author={Yin, Miao and Phan, Huy and Zang, Xiao and Liao, Siyu and Yuan, Bo},
  booktitle={Proceedings of the AAAI Conference on Artificial Intelligence},
  volume={36},
  pages={8874--8882},
  year={2022}
}

@article{
arora2025landscape,
title={On The Landscape of Spoken Language Models: A Comprehensive Survey},
author={Siddhant Arora and Kai-Wei Chang and Chung-Ming Chien and Yifan Peng and Haibin Wu and Yossi Adi and Emmanuel Dupoux and Hung-yi Lee and Karen Livescu and Shinji Watanabe},
journal={Transactions on Machine Learning Research},
issn={2835-8856},
year={2025},
url={https://openreview.net/forum?id=BvxaP3sVbA},
note={}
}

@inproceedings{renyi1961measures,
  title={On measures of entropy and information},
  author={R{\'e}nyi, Alfr{\'e}d},
  booktitle={Proceedings of the fourth Berkeley symposium on mathematical statistics and probability, volume 1: contributions to the theory of statistics},
  volume={4},
  pages={547--562},
  year={1961},
  organization={University of California Press}
}

@article{lecam1964sufficiency,
  title={Sufficiency and approximate sufficiency},
  author={Le, L},
  journal={The Annals of Mathematical Statistics},
  pages={1419--1455},
  year={1964},
  publisher={JSTOR}
}

@book{coverthomas1991,
  title={Elements of information theory},
  author={Cover, Thomas M},
  year={1999},
  publisher={John Wiley \& Sons}
}

@article{tishby2000information,
  title={The information bottleneck method},
  author={Tishby, Naftali and Pereira, Fernando C and Bialek, William},
  journal={arXiv preprint physics/0004057},
  year={2000}
}

@article{nagle2024fundamental,
  title={Fundamental limits of prompt compression: A rate-distortion framework for black-box language models},
  author={Nagle, Alliot and Girish, Adway and Bondaschi, Marco and Gastpar, Michael and Makkuva, Ashok Vardhan and Kim, Hyeji},
  journal={Advances in Neural Information Processing Systems},
  volume={37},
  pages={94934--94970},
  year={2024}
}

@article{enttsel2026modelaware,
  title={Model-Aware Rate-Distortion Limits for Task-Oriented Source Coding},
  author={Enttsel, Andriy and Corlay, Vincent},
  journal={arXiv preprint arXiv:2602.12866},
  year={2026}
}

@misc{dcase2026adqa,
      title={Multi-Domain Audio Question Answering Benchmark Toward Acoustic Content Reasoning}, 
      author={Chao-Han Huck Yang and Sreyan Ghosh and Qing Wang and Jaeyeon Kim and Hengyi Hong and Sonal Kumar and Guirui Zhong and Zhifeng Kong and S Sakshi and Vaibhavi Lokegaonkar and Oriol Nieto and Ramani Duraiswami and Dinesh Manocha and Gunhee Kim and Jun Du and Rafael Valle and Bryan Catanzaro},
      year={2026},
      eprint={2505.07365},
      archivePrefix={arXiv},
      primaryClass={cs.SD},
      url={https://arxiv.org/abs/2505.07365}, 
}

@inproceedings{he2025audiomcq,
    author = "He, Haolin and Du, Xingjian and Sun, Renhe and Dai, Zheqi and Xiao, Yujia and Yang, Mingru and Zhou, Jiayi and Li, Xiquan and Liu, Zhengxi and Liang, Zining and Wu, Chunyat and He, Qianhua and Lee, Tan and Chen, Xie and Zheng, Wei-Long and Wang, Weiqiang and Plumbley, Mark and Liu, Jian and Kong, Qiuqiang",
    title = "Measuring Audio's Impact on Correctness: Audio-Contribution-Aware Post-Training of Large Audio Language Models",
    booktitle = "International Conference on Learning Representations (ICLR)",
    year = "2026",
}

@misc{wang2025mmsu,
    author = "Wang, Dingdong and Li, Junan and Wu, Jincenzi and Yang, Dongchao and Chen, Xueyuan and Zhang, Tianhua and Meng, Helen",
    title = "{MMSU}: A Massive Multi-task Spoken Language Understanding and Reasoning Benchmark",
    year = "2026",
    eprint = "2506.04779",
    archiveprefix = "arXiv",
    primaryclass = "cs.CL",
    url = "https://arxiv.org/abs/2506.04779",
}

@misc{ma2025mmar,
    author = "Ma, Ziyang and Ma, Yinghao and Zhu, Yanqiao and Yang, Chen and Chao, Yi-Wen and Xu, Ruiyang and Chen, Wenxi and Chen, Yuanzhe and Chen, Zhuo and Cong, Jian and Li, Kai and Li, Keliang and Li, Siyou and Li, Xinfeng and Li, Xiquan and Lian, Zheng and Liang, Yuzhe and Liu, Minghao and Niu, Zhikang and Wang, Tianrui and Wang, Yuping and Wang, Yuxuan and Wu, Yihao and Yang, Guanrou and Yu, Jianwei and Yuan, Ruibin and Zheng, Zhisheng and Zhou, Ziya and Zhu, Haina and Xue, Wei and Benetos, Emmanouil and Yu, Kai and Chng, Eng-Siong and Chen, Xie",
    title = "{MMAR}: A Challenging Benchmark for Deep Reasoning in Speech, Audio, Music, and Their Mix",
    year = "2025",
    eprint = "2505.13032",
    archiveprefix = "arXiv",
    primaryclass = "cs.SD",
    url = "https://arxiv.org/abs/2505.13032",
}

@article{artificialanalysis2024bigbenchaudio,
  title={Beyond the Imitation Game: Quantifying and extrapolating the capabilities of language models},
  author={Srivastava, Aarohi and Rastogi, Abhinav and Rao, Abhishek and Shoeb, Abu Awal Md and Abid, Abubakar and Fisch, Adam and Brown, Adam R and Santoro, Adam and Gupta, Aditya and Garriga-Alonso, Adri{\`a} and others},
  journal={arXiv preprint arXiv:2206.04615},
  year={2022}
}

@inproceedings{suzgun2022bbh,
  title={Challenging big-bench tasks and whether chain-of-thought can solve them},
  author={Suzgun, Mirac and Scales, Nathan and Sch{\"a}rli, Nathanael and Gehrmann, Sebastian and Tay, Yi and Chung, Hyung Won and Chowdhery, Aakanksha and Le, Quoc and Chi, Ed and Zhou, Denny and others},
  booktitle={Findings of the Association for Computational Linguistics: ACL 2023},
  pages={13003--13051},
  year={2023}
}

@article{chu2024qwen2audio,
  title={Qwen2-audio technical report},
  author={Chu, Yunfei and Xu, Jin and Yang, Qian and Wei, Haojie and Wei, Xipin and Guo, Zhifang and Leng, Yichong and Lv, Yuanjun and He, Jinzheng and Lin, Junyang and others},
  journal={arXiv preprint arXiv:2407.10759},
  year={2024}
}

@article{xu2025qwen25omni,
  title={Qwen3-omni technical report},
  author={Xu, Jin and Guo, Zhifang and Hu, Hangrui and Chu, Yunfei and Wang, Xiong and He, Jinzheng and Wang, Yuxuan and Shi, Xian and He, Ting and Zhu, Xinfa and others},
  journal={arXiv preprint arXiv:2509.17765},
  year={2025}
}

@misc{efron1993bootstrap,
  title={An introduction to the bootstrap},
  author={Efron, Bradley and Tibshirani, R},
  year={1993},
  publisher={Macmillan Publishers Limited}
}

@article{zeghidour2021soundstream,
  title={Soundstream: An end-to-end neural audio codec},
  author={Zeghidour, Neil and Luebs, Alejandro and Omran, Ahmed and Skoglund, Jan and Tagliasacchi, Marco},
  journal={IEEE/ACM Transactions on Audio, Speech, and Language Processing},
  volume={30},
  pages={495--507},
  year={2021},
  publisher={IEEE}
}

@article{defossez2023encodec,
title = "High Fidelity Neural Audio Compression",
year = "2023",
journal = "Transactions on Machine Learning Research",
issn = "2835-8856",
author = "Alexandre D{\'e}fossez and Jade Copet and Gabriel Synnaeve and Yossi Adi",
}

@article{borsos2023audiolm,
  title={Audiolm: a language modeling approach to audio generation},
  author={Borsos, Zal{\'a}n and Marinier, Rapha{\"e}l and Vincent, Damien and Kharitonov, Eugene and Pietquin, Olivier and Sharifi, Matt and Roblek, Dominik and Teboul, Olivier and Grangier, David and Tagliasacchi, Marco and others},
  journal={IEEE/ACM transactions on audio, speech, and language processing},
  volume={31},
  pages={2523--2533},
  year={2023},
  publisher={IEEE}
}

\newpage
\section*{Limitations}
\label{sec:limitations}

\textbf{Query-family coarsening.}
The formal object is query-level or familywise Bayes risk over $\Qfam$, while the experiments approximate it with finite partitions: keyword rules, dataset-native labels, and semantic clusters.  These partitions can under- or over-split true operational families.  The reported $\dQF{}$ values should therefore be read as partition-indexed estimates, not as the exact supremum in \eqref{eq:family-excess-risk}.

\textbf{Teacher-estimated Bayes risk.}
Raw and compressed risks are estimated through LALM predictions and prompt-answer formats.  They may differ from true Bayes risks because of calibration error, prompt sensitivity, decoding artifacts, or answer-normalization choices.  Bootstrap intervals quantify sampling uncertainty over the evaluated examples, not all forms of teacher or model bias, prompt sensitivity, or distribution shift.  When multiple queries share the same source audio, the more conservative variant is to resample at the audio-source level rather than the row level.

\textbf{Model-class and interface gap.}
Observed degradation can come from lost information, an answerer that cannot use retained information, or a mismatch between learned selectors and native audio interfaces. We partially estimate this gap, but do not fully decompose information loss from architecture failure on a fixed shared interface.

\textbf{Query-conditioned compression assumptions.}
Conditioned compressors assume the query is available at encoding time and is exogenous relative to the audio.  This matches interactive question answering but not offline archival compression.  It also requires care to avoid leakage: the query may guide what to retain, but the compressor must not use answer labels or evaluation metadata unavailable at deployment.

\textbf{Dataset, model, and seed coverage.}
The evidence covers five benchmarks, two LALM backbones, specific prompt formats, discrete chunk budgets, and the learned or native interfaces provided by the handoff. Some appendix results are single-seed or partial-coverage: V2.1 $\alpha$-sweep wins are seed-42 only; several Qwen2-Audio V2 per-seed values were reported only as aggregates; V1 Qwen2.5-Omni coverage is incomplete for DCASE, AudioMCQ, and MMSU; and some native-family analyses rely on cross-backbone rather than same-backbone estimators.

\textbf{Deployment scope.}
The paper gives a framework, estimators, and empirical guidance.  It does not prove that any particular compressor is universally safe, nor that a family partition chosen on a benchmark will remain correct under distribution shift, longer audio, different languages, new answerers, or different cost models.

\newpage
\appendix

\section{Appendix Roadmap}
\label{app:roadmap}
This appendix is organized as supplementary material rather than a second copy of the main paper.  \Cref{app:related} adds related-work context.  \Cref{app:extended-setup} records supplementary problem-setup and operational-budget details.  \Cref{app:theory-proofs} gives proofs and auxiliary derivations for the theory stated in the main text.  \Cref{app:estimators} defines the practical estimators, uncertainty aggregation, and axis conventions.  \Cref{app:datasets} and \Cref{app:protocols} document datasets, query families, models, compression methods, budgets, and implementation details.  \Cref{app:additional-results} contains all secondary figures, tables, ablations, and family-level analyses.  \Cref{app:notes} records incomplete or suspicious results and caveats.  
% \Cref{app:checklist} provides the checklist.

\section{Supplementary Algorithm Evidence Map}
\label{app:algorithm-evidence-map}
\begin{table}[!htbp]
\centering
\scriptsize
\caption{Supplementary evidence map for Algorithm~1.  This table is not required for the main spotlight result; it records how the main empirical objects map to practitioner actions.}
\label{tab:algorithm_evidence_appendix}
\resizebox{\linewidth}{!}{%
\begin{tabular}{p{0.20\linewidth}p{0.30\linewidth}p{0.24\linewidth}p{0.20\linewidth}}
\toprule
\textbf{Algorithm step} & \textbf{Empirical evidence} & \textbf{Meaning} & \textbf{Practitioner consequence} \\
\midrule
Specify partition and tolerances & Table~\ref{tab:data-partition-map}: native partitions expose margins hidden by keyword partitions. & The partition is part of the claim. & Report the sign-off partition and tolerance before approving a budget. \\
Run paired raw/compressed interfaces & Figure~\ref{fig:partition-gap-3x4}: worst-family excess exceeds the dataset mean. & Compression damage is answer-level paired excess risk. & Evaluate raw and compressed answers under the same frozen LALM and query. \\
Expose hidden damage & Table~\ref{tab:worstk-main}: harm can be concentrated or diffuse. & Remediation depends on the failing-family structure. & Use bypass/retraining for local harm; raise budget/change interface for diffuse harm. \\
Evaluate conditioning & Table~\ref{tab:v2-gcond-main}: conditioning is positive, negative, mixed, or zero depending on dataset/backbone. & Conditioning is regime-dependent. & Report signed, seeded, backbone-specific conditioned gain. \\
Audit query use & Table~\ref{tab:phase-b-dgcond-main}: downstream query-use is substantial only in one completed cell. & Selector sensitivity need not change final answers. & Use the decoupled audit for conditioned compressors. \\
Check contamination & Table~\ref{tab:contamination-ratio-main}: naive shadow queries inflate the signal. & Prompt mismatch is not selector query-use. & Treat shadow-query runs as controls, not headline evidence. \\
\bottomrule
\end{tabular}%
}
\end{table}

\section{Additional Related Work}
\label{app:related}
The main paper cites four threads: audio-language modeling, audio tokenization/compression, rate-distortion/information bottleneck theory, and efficient multimodal inference.  Qwen2-Audio and Qwen2.5-Omni provide the frozen answerers used here \citep{chu2024qwen2audio,xu2025qwen25omni}.  Neural audio codecs and tokenizers such as SoundStream, EnCodec, and AudioLM demonstrate that audio can be mapped to compact discrete or continuous interfaces \citep{zeghidour2021soundstream,defossez2023encodec,borsos2023audiolm}.  Those methods optimize reconstruction, generation, or perceptual fidelity; TAAC asks whether the interface preserves the answer to a query family.  Classical experiment comparison and approximate sufficiency provide the decision-theoretic language \citep{lecam1964sufficiency}, while indirect rate-distortion and information bottleneck work provide the rate-constrained view \citep{tishby2000information,coverthomas1991}.  Recent prompt-compression rate-distortion work is closest in spirit, but focuses on text prompts and black-box language models rather than audio-query families \citep{nagle2024fundamental}.  Our empirical design is also related to task-family evaluation: the key methodological claim is that benchmark averages are not adequate when compression damage is heterogeneous across query families.

\section{Extended Problem Setup and Operational Budgets}\label{app:extended-setup}

We study \emph{task-aware answer-preserving compression} for large audio language models. Let \((\Omega,\mathcal F,\mathbb P)\) carry a raw audio observation \(X\in\mathcal X\), a random query \(Q\in\mathcal Q\), and a family of correct-answer random variables \(\{Y_q:q\in\mathcal Q\}\), where \(Y_q\in\mathcal Y_q\).
For example, $X$ may be a 10-second audio clip, $Q$ may be a query such as
``what word was spoken?'' or ``how many times did the bell ring?'', and $Y_q$
denotes the correct answer associated with query $q$. At inference time only the
realized query $Q$ is asked, so only $Y_Q$ is observed, while the full family
$\{Y_q\}_{q \in \mathcal{Q}}$ is a bookkeeping device that lets us formalize a
task family.
For each query \(q\), let \(\mathcal A_q\) denote the action space of predicted answers, typically \(\mathcal A_q=\mathcal Y_q\), and let
\begin{equation}
\ell:\mathcal A_q\times \mathcal Y_q\to [0,\infty)
\end{equation}
be the task loss. A compressor \(C\) outputs a compressed representation \(Z\), either from \(X\) alone or from \((X,q)\) in the query-conditioned setting.
Intuitively, for each query $q$ we are specifying a prediction problem. The set
$\mathcal A_q$ is the set of answers the downstream system is allowed to output,
while $\mathcal Y_q$ is the set of possible correct answers. In most settings
these coincide, so $\mathcal A_q=\mathcal Y_q$. For example, if $q$ asks
``how many times does the bell ring?'', then $\mathcal Y_q$ may be the set of
nonnegative integers; if $q$ asks ``which instrument is playing?'', then
$\mathcal Y_q$ may be a finite label set such as
$\{\text{piano}, \text{violin}, \text{drums}, \ldots\}$. The loss
$\ell(a,y)$ measures the penalty for predicting $a$ when the correct answer
is $y$, for instance the $0/1$ loss $\mathbf 1\{a \neq y\}$. The compressor
then produces a representation $Z$ of the audio: in the query-agnostic setting
$Z=C(X)$ depends only on the audio, whereas in the query-conditioned setting
$Z=C(X,q)$ may adapt to the realized query and preserve only the information
most relevant to answering that query.

\begin{assumption}[Bayes solvability and integrability]\label{app:ass:bayes}
For every \(q\in\mathcal Q\), Bayes-optimal decision rules based on \(X\) and on any admissible \(Z\) exist, and all displayed expectations are finite. For the familywise worst-case results we additionally assume bounded losses, \(0\le \ell(a,y)\le L_{\max}<\infty\), except in the Gaussian subsection where boundedness is replaced by finite second moments. Here and throughout, ``familywise'' means that the relevant quantity is taken uniformly over the entire query family $\mathcal Q$, typically through a worst-case operation such as $\sup_{q\in\mathcal Q}$, rather than for a single fixed query $q$.
\end{assumption}

\paragraph{Notation summary.}
For a probability law \(\pi\) on \(\mathcal Y_q\), define the Bayes envelope
\begin{equation}
\underline L_q(\pi):=\inf_{a\in\mathcal A_q}\int_{\mathcal Y_q}\ell(a,y)\,\pi(dy).
\end{equation}
The map \(\underline L_q\) is concave because it is the infimum of linear functionals of \(\pi\). We write
\begin{equation}
\Pi_q^X:=P_{Y_q|X,q},\qquad \Pi_q^Z:=P_{Y_q|Z,q},
\end{equation}
for the posterior laws induced by raw and compressed observations. All information quantities are measured in nats. Division by \(\log 2\) converts them to bits.

For a realized query \(q\), the Bayes-optimal risk when the decision rule sees raw audio \(X\) is
\begin{equation}
\mathcal R_X^\star(q):=\inf_{\psi:\mathcal X\to \mathcal A_q}\mathbb E\big[\ell(\psi(X),Y_q)\big]
=\mathbb E\big[\underline L_q(\Pi_q^X)\big].
\end{equation}
Similarly, when only \(Z\) is available,
\begin{equation}
\mathcal R_Z^\star(q):=\inf_{\phi:\mathcal Z\to \mathcal A_q}\mathbb E\big[\ell(\phi(Z),Y_q)\big]
=\mathbb E\big[\underline L_q(\Pi_q^Z)\big].
\end{equation}
The familywise excess Bayes risk is
\begin{equation}
\Delta_{\mathcal Q}(Z;X):=\sup_{q\in\mathcal Q}\big(\mathcal R_Z^\star(q)-\mathcal R_X^\star(q)\big).
\end{equation}
This is the answer-preservation criterion of interest: \(Z\) is \(\varepsilon\)-answer-preserving over \(\mathcal Q\) when \(\Delta_{\mathcal Q}(Z;X)\le \varepsilon\).

We distinguish three compressor regimes.
\emph{Query-agnostic compressors} map $X \mapsto Z$ at run time. Among them,
the frontiers below are \emph{family-aware}: the compressor may depend on the
fixed family $\mathcal Q$ and losses $\{\ell\}$ at design time, but not on
the realized query $q$. Concretely, these include policies that compress the
audio in the same way regardless of which question will later be asked, such
as uniform chunk selection, energy-based selection, or a learned selector that
scores chunks from audio alone. By contrast, \emph{query-conditioned compressors}
map $(X,q)\mapsto Z$ and are allowed to adapt to the realized query. For
example, if the query asks what word was spoken, the compressor may preferentially
retain speech-bearing regions, whereas if the query asks how many times a bell
rang, it may preferentially retain temporally informative event regions. We
instantiate both regimes empirically in \S\ref{sec:compression-methods}.

We also distinguish three interface budgets. The information-theoretic rates are
\begin{equation}
\mathrm{Rate}_{\mathrm{info}}(Z):=I(X;Z),\qquad 
\mathrm{Rate}_{\mathrm{info}}^{\mathrm{cond}}(Z):=I(X;Z|Q).
\end{equation}
If \(Z\) is transmitted through a token interface, let \(\tau(Z)\) be its sequence length. If the interface vocabulary has cardinality \(B\), then the operational budget associated with a \(B\)-ary code is
\begin{equation}
\mathrm{Rate}_{\mathrm{op}}(Z):=\log B\cdot \mathbb E[\tau(Z)].
\end{equation}

The information-theoretic Bayes frontier for family-aware query-agnostic compression is
\begin{equation}
R^{\star}_{\mathrm{Bayes}}(\varepsilon,\mathcal Q)
:=
\inf_{C_{\mathcal Q}:X\mapsto Z}
\Big\{
I(X;Z):\Delta_{\mathcal Q}(Z;X)\le \varepsilon
\Big\},
\end{equation}
and the corresponding sequence-length frontier is
\begin{equation}
L^{\star}_{\mathrm{Bayes}}(\varepsilon,\mathcal Q)
:=
\inf_{C_{\mathcal Q}:X\mapsto Z}
\Big\{
\mathbb E[\tau(Z)]:\Delta_{\mathcal Q}(Z;X)\le \varepsilon
\Big\}.
\end{equation}
The two budget notions play different roles. The information-theoretic rate
$I(X;Z)$ measures how much information about the raw audio $X$ is preserved in
the compressed representation $Z$: it is zero when $Z$ is independent of $X$,
and increases as $Z$ becomes more informative about $X$. By contrast, if $Z$
is transmitted through a discrete token interface, then $\tau(Z)$ is the number
of transmitted tokens and $\log B \cdot \mathbb E[\tau(Z)]$ is the corresponding
operational communication budget, where $B$ is the interface vocabulary size.
Thus $I(X;Z)$ is the idealized information cost, whereas $\mathbb E[\tau(Z)]$
and $\log B \cdot \mathbb E[\tau(Z)]$ are concrete length-based surrogates for
that cost.

\paragraph{Query-conditioned risks.}
When the compressor is allowed to depend on the realized query, \(P_{Z|X,Q}\), the relevant Bayes risk for query \(q\) is computed under the conditional law induced by \(Q=q\):
\begin{equation}
\mathcal R_{Z|q}^\star(q)
:=
\inf_{\phi:\mathcal Z\to \mathcal A_q}
\mathbb E\big[\ell(\phi(Z),Y_q)\mid Q=q\big].
\end{equation}
Under our standing interpretation that the runtime query variable is exogenous to the audio-generation mechanism and only selects the task, the raw-audio benchmark remains \(\mathcal R_X^\star(q)\). We therefore define the conditioned familywise excess Bayes risk by
\begin{equation}
\Delta_{\mathcal Q}^{\mathrm{cond}}(Z;X)
:=
\sup_{q\in\mathcal Q}
\big(
\mathcal R_{Z|q}^\star(q)-\mathcal R_X^\star(q)
\big).
\end{equation}
The corresponding query-conditioned information-theoretic and sequence-length frontiers are
\begin{equation}
R^{\star}_{\mathrm{Bayes,cond}}(\varepsilon,\mathcal Q)
:=
\inf_{C:(X,Q)\mapsto Z}
\Big\{
I(X;Z\mid Q):
\Delta_{\mathcal Q}^{\mathrm{cond}}(Z;X)\le \varepsilon
\Big\},
\end{equation}
and
\begin{equation}
L^{\star}_{\mathrm{Bayes,cond}}(\varepsilon,\mathcal Q)
:=
\inf_{C:(X,Q)\mapsto Z}
\Big\{
\mathbb E[\tau(Z)]:
\Delta_{\mathcal Q}^{\mathrm{cond}}(Z;X)\le \varepsilon
\Big\}.
\end{equation}
All later appearances of \(R^{\star}_{\mathrm{Bayes,cond}}\) and \(L^{\star}_{\mathrm{Bayes,cond}}\), including Theorem~\ref{thm:condadv}, refer to this conditioned excess-risk definition.

To model architectural restrictions of downstream large audio language models, let \(\mathcal F_{\mathrm{LALM}}\) be a model class of answerers \(f\) with \(f(z,q)\in\mathcal A_q\). Define the model-class Bayes risk
\begin{equation}
\mathcal R_{\mathcal F}^{\star}(q;Z):=
\inf_{f\in\mathcal F_{\mathrm{LALM}}}\mathbb E\big[\ell(f(Z,q),Y_q)\big].
\end{equation}
The model-class frontiers are then
\begin{equation}
R^{\star}_{\mathcal F}(\varepsilon,\mathcal Q)
:=
\inf_{C_{\mathcal Q}:X\mapsto Z}
\Big\{
I(X;Z):
\sup_{q\in\mathcal Q}\big(\mathcal R_{\mathcal F}^{\star}(q;Z)-\mathcal R_X^\star(q)\big)\le \varepsilon
\Big\},
\end{equation}
and
\begin{equation}
L^{\star}_{\mathcal F}(\varepsilon,\mathcal Q)
:=
\inf_{C_{\mathcal Q}:X\mapsto Z}
\Big\{
\mathbb E[\tau(Z)]:
\sup_{q\in\mathcal Q}\big(\mathcal R_{\mathcal F}^{\star}(q;Z)-\mathcal R_X^\star(q)\big)\le \varepsilon
\Big\}.
\end{equation}

These are the core objects of the paper: \(R^{\star}_{\mathrm{Bayes}}\) and \(L^{\star}_{\mathrm{Bayes}}\) are information-theoretic limits for answer preservation over a family of audio-question tasks, while \(R^{\star}_{\mathcal F}\) and \(L^{\star}_{\mathcal F}\) are the architecture-restricted frontiers relevant to actual large audio language models.

In this paper we instantiate the theory stated in Section~\ref{sec:theory-main} on five real audio-question benchmarks (DCASE 2026 dev, AudioMCQ-StrongAC, MMSU, MMAR, and BigBench Audio) spanning sound-event detection, speech-centric reasoning, multi-modal audio-textual reasoning, and text-dominated control tasks. We make the following empirical contributions. First, we verify the bit-level prediction of Theorem~\ref{thm:condadv}'s strict-separation construction at 66 synthetic finite-alphabet cells, matching the closed-form conditioned gain to within $10^{-9}$ bits across a sweep of query priors and factor entropies. Second, we demonstrate on three multi-family audio-QA datasets that the theorem-level family-wise excess risk is consistently larger than the dataset-mean metric existing work reports, by $2.5$-$6.8$ percentage points depending on budget. Third, we verify Theorem~\ref{app:thm:monotonicity}'s nested-family monotonicity along cumulative chains and use chain increments to identify per-dataset bottleneck families. Fourth, we quantify factor-overlap structure in natural audio query taxonomies and show that no observed family-pair summary approaches the factor-disjoint prediction of Corollary~\ref{app:cor:factor}. Fifth, we extend the V1 operational conditioned-gain test with V2 three-seed replication on both Qwen2-Audio and Qwen2.5-Omni: AudioMCQ-StrongAC changes from a V1 near-null to a reproducible positive effect on both backbones, DCASE changes from a single-seed positive signal to a mixed/negative three-seed result, and MMSU on Qwen2.5-Omni reveals a large temporal-family failure. Sixth, we run a V2.1 scope-B $\alpha$-sweep showing that training-target choice is itself a backbone-dependent variable, with improvements over the V2 baseline on all five Qwen2.5-Omni datasets. Finally, we move the model-class architecture gap $\widehat\Gamma_{\mathcal F}$ from a deferred object to a partially delivered measurement, while preserving the native-versus-clean caveat needed for a fully causal architecture decomposition.

\section{Proofs and Auxiliary Theory}\label{app:theory-proofs}

This appendix proves the results stated in Section~\ref{sec:theory-main}. The construction restricts comparison of experiments \cite{lecam1964sufficiency} to the audio-task family \(\mathcal Q\), and connects that restricted comparison to indirect rate-distortion. The resulting theory is neither a generic sufficiency theory nor a single-task rate-distortion analysis: the central object is the familywise answer-risk gap \(\Delta_{\mathcal Q}(Z;X)\).

\subsection{Task-aware answer sufficiency and restricted experiment comparison}

For two observation variables \(U\) and \(V\), both available jointly with query \(q\), write
\begin{equation}
U \succeq_{\mathcal Q}^{\mathrm{ans}} V
\quad\Longleftrightarrow\quad
\mathcal R_U^\star(q)\le \mathcal R_V^\star(q)\ \text{for all }q\in\mathcal Q.
\end{equation}
This is a \(\mathcal Q\)-restricted comparison preorder. Relative to raw audio \(X\), define the corresponding one-sided answer deficiency
\begin{equation}
\delta^{\mathrm{ans}}_{\mathcal Q}(Z\|X)
:=
\sup_{q\in\mathcal Q}
\mathbb E\!\left[
\underline L_q(\Pi_q^Z)-\mathbb E\!\left[\underline L_q(\Pi_q^X)\mid Z,q\right]
\right].
\end{equation}
When \(\mathcal Q\) is enlarged to the full class of bounded decision problems on a common answer space, the zero-deficiency relation reduces to classical comparison of experiments and approximate sufficiency in the sense of Le Cam \cite{lecam1964sufficiency}. Here the restriction to \(\mathcal Q\) is essential: we deliberately ignore losses that are irrelevant to the target audio-question family.

Intuitively, this construction compares two observations-typically the raw audio
$X$ and a compressed representation $Z$-only through the family of downstream
audio tasks we care about. The relation
$U \succeq^{\mathrm{ans}}_{\mathcal Q} V$ means that, for every query
$q \in \mathcal Q$, the best possible predictor using $U$ incurs no larger Bayes
risk than the best possible predictor using $V$. Thus, our comparison is deliberately restricted to the target audio-question
family $\mathcal Q$. The deficiency
$\delta^{\mathrm{ans}}_{\mathcal Q}(Z\|X)$ then measures how much answer quality
is lost by replacing $X$ with $Z$, in the worst query in the family. For
example, if $\mathcal Q$ contains both a speech-content query such as
``what word was spoken?'' and a temporal query such as ``how many times did the
bell ring?'', then a compressed representation that preserves speech regions but
blurs event timing may be nearly sufficient for the first query and clearly
insufficient for the second; the familywise deficiency records exactly that
worst-served task.

\begin{definition}[Exact Task-Aware Answer Sufficiency]\label{app:def:exact}
A compressed representation \(Z\) is \emph{exactly task-aware answer sufficient} for the query family \(\mathcal Q\) if for every \(q\in\mathcal Q\),
\begin{equation}
\underline L_q(\Pi_q^Z)
=
\mathbb E\!\left[\underline L_q(\Pi_q^X)\mid Z,q\right]
\qquad \text{a.s.}
\end{equation}
\end{definition}

\begin{definition}[Approximate Task-Aware Answer Sufficiency]\label{app:def:approx}
A compressed representation \(Z\) is \(\varepsilon\)-\emph{approximately task-aware answer sufficient} for \(\mathcal Q\) if
\begin{equation}
\delta^{\mathrm{ans}}_{\mathcal Q}(Z\|X)\le \varepsilon.
\end{equation}
\end{definition}

\begin{theorem}[Restricted Sufficiency-Risk Equivalence]\label{app:thm:equiv}
For every admissible compressed representation \(Z\),
\begin{equation}
\delta^{\mathrm{ans}}_{\mathcal Q}(Z\|X)=\Delta_{\mathcal Q}(Z;X).
\end{equation}
Consequently, \(Z\) is exactly task-aware answer sufficient for \(\mathcal Q\) if and only if \(\Delta_{\mathcal Q}(Z;X)=0\), and \(Z\) is \(\varepsilon\)-approximately task-aware answer sufficient if and only if \(\Delta_{\mathcal Q}(Z;X)\le \varepsilon\).
\end{theorem}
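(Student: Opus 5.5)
The plan is to prove the equality one query at a time and then take the supremum over $q\in\mathcal Q$. Fix $q$. By the identities recorded in the notation summary, $\mathcal R_X^\star(q)=\mathbb E[\underline L_q(\Pi_q^X)]$ and $\mathcal R_Z^\star(q)=\mathbb E[\underline L_q(\Pi_q^Z)]$. These hold by the standard reduction of Bayes risk to the posterior: condition on the observation, minimize pointwise over actions, and use Assumption~\ref{app:ass:bayes} so that the minimizer exists and is measurable. The integrand defining $\delta^{\mathrm{ans}}_{\mathcal Q}$ is integrable, because losses are bounded and so $\underline L_q$ is bounded. The tower property then gives
\[
\mathbb E\!\left[\underline L_q(\Pi_q^Z)-\mathbb E\!\left[\underline L_q(\Pi_q^X)\mid Z,q\right]\right]
=\mathbb E\!\left[\underline L_q(\Pi_q^Z)\right]-\mathbb E\!\left[\underline L_q(\Pi_q^X)\right]
=\mathcal R_Z^\star(q)-\mathcal R_X^\star(q).
\]
Taking $\sup_{q\in\mathcal Q}$ of both sides gives $\delta^{\mathrm{ans}}_{\mathcal Q}(Z\|X)=\Delta_{\mathcal Q}(Z;X)$. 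The same calculation without the conditional expectation gives the main-text version, Theorem~\ref{thm:equiv}.

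For the consequences, the key step is a Jensen-gap sign property. Since $Z$ is a (possibly randomized, query-dependent) function of $X$ and $q$, it is conditionally independent of $Y_q$ given $X$ and $q$. Hence
\[
\Pi_q^Z=\mathbb E[\Pi_q^X\mid Z,q] \quad \text{a.s.}
\]
This is the step I expect to need the most care. It requires making precise that $Z$ is generated through a kernel $P_{Z|X,Q}$ with no access to $Y_q$, and that the posteriors are regular conditional laws on a standard Borel answer space, so that conditional expectations of measure-valued variables make sense. Because $\underline L_q$ is concave, being an infimum of linear functionals, conditional Jensen gives
\[
D_q:=\underline L_q(\Pi_q^Z)-\mathbb E[\underline L_q(\Pi_q^X)\mid Z,q]\ge 0 \quad \text{a.s.}
\]
To justify conditional Jensen for a concave functional on $\mathcal P(\mathcal Y_q)$, I would write $\underline L_q(\pi)=\inf_a\int\ell(a,y)\,\pi(dy)$. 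For each fixed $a$, linearity gives $\int \ell(a,y)\,\Pi_q^Z(dy)=\mathbb E[\int\ell(a,y)\,\Pi_q^X(dy)\mid Z,q]\ge \mathbb E[\underline L_q(\Pi_q^X)\mid Z,q]$. Taking the infimum over $a$ then yields the inequality directly, without any general infinite-dimensional Jensen theorem.

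Given $D_q\ge0$ a.s., the ``iff'' statements follow. First, $\Delta_{\mathcal Q}(Z;X)=0$ iff $\mathbb E[D_q]=0$ for every $q$. A nonnegative variable has zero mean iff it vanishes a.s., so this holds iff the identity of Definition~\ref{app:def:exact} holds a.s. for every $q$; that is, iff $Z$ is exactly task-aware answer sufficient. Second, $\varepsilon$-approximate sufficiency (Definition~\ref{app:def:approx}) is by definition $\delta^{\mathrm{ans}}_{\mathcal Q}\le\varepsilon$, which equals $\Delta_{\mathcal Q}\le\varepsilon$ by the main identity. For a query-conditioned $Z$, I would read all expectations under the law given $Q=q$, consistent with $\Delta^{\mathrm{cond}}_{\mathcal Q}$; the argument is unchanged.
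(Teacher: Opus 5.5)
Your proposal is correct and follows the paper's route: the per-query tower property plus a supremum over $q$ gives the identity, and nonnegativity of the Jensen gap (from concavity of $\underline L_q$ and $\Pi_q^Z=\mathbb E[\Pi_q^X\mid Z,q]$) gives the two equivalences. Two details you spell out are left implicit in the paper: the assumption that $Z$ is conditionally independent of $Y_q$ given $(X,q)$, and the action-by-action derivation of the conditional Jensen step. When you take the infimum over $a$ in that step, make sure a single null set serves all actions, e.g.\ by defining $\Pi_q^Z$ as the mixture of $\Pi_q^X$ under a regular conditional law of $X$ given $(Z,q)$.
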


\noindent\emph{Proof.}
For each fixed \(q\),
\begin{equation}
\mathcal R_Z^\star(q)-\mathcal R_X^\star(q)
=
\mathbb E\big[\underline L_q(\Pi_q^Z)\big]-\mathbb E\big[\underline L_q(\Pi_q^X)\big].
\end{equation}
Applying the tower property to the second term gives
\begin{equation}
\mathcal R_Z^\star(q)-\mathcal R_X^\star(q)
=
\mathbb E\!\left[
\underline L_q(\Pi_q^Z)-\mathbb E\!\left[\underline L_q(\Pi_q^X)\mid Z,q\right]
\right].
\end{equation}
Taking the supremum over \(q\in\mathcal Q\) proves the identity. Nonnegativity follows from concavity of \(\underline L_q\) and
\begin{equation}
\Pi_q^Z=\mathbb E[\Pi_q^X\mid Z,q].
\end{equation}
Indeed,
\(\underline L_q(\Pi_q^Z)\ge
\mathbb E[\underline L_q(\Pi_q^X)\mid Z,q]\) almost surely. Therefore
zero deficiency is equivalent to zero Jensen gap for every query, which is
exactly the almost-sure equality in Definition~\ref{app:def:exact}; the
\(\varepsilon\)-approximate statement follows from the same identity.
\hfill\(\square\)

\paragraph{Interpretation.}
The restricted deficiency is a Jensen gap of the Bayes envelope generated by coarsening \(X\) into \(Z\). Exact answer sufficiency is therefore a \emph{local} notion: it does not merely require equality of global risks, but equality of the conditional Bayes envelope after observing \(Z\).

\paragraph{Experimental implication.}
A practical proxy of \(\Delta_{\mathcal Q}(Z;X)\) is the maximum difference, over \(q\in\mathcal Q\), between teacher-estimated Bayes risks from raw and compressed audio. This gives a direct empirical certificate of answer preservation on a held-out query family.

\begin{theorem}[Monotonicity under Query-Family Refinement]\label{app:thm:monotonicity}
If \(\mathcal Q_1\subseteq \mathcal Q_2\), then for every \(Z\),
\begin{equation}
\delta^{\mathrm{ans}}_{\mathcal Q_1}(Z\|X)\le \delta^{\mathrm{ans}}_{\mathcal Q_2}(Z\|X),
\qquad
\Delta_{\mathcal Q_1}(Z;X)\le \Delta_{\mathcal Q_2}(Z;X).
\end{equation}
Hence, for every \(\varepsilon\ge 0\),
\begin{equation}
R^{\star}_{\mathrm{Bayes}}(\varepsilon,\mathcal Q_1)\le R^{\star}_{\mathrm{Bayes}}(\varepsilon,\mathcal Q_2),
\qquad
L^{\star}_{\mathrm{Bayes}}(\varepsilon,\mathcal Q_1)\le L^{\star}_{\mathrm{Bayes}}(\varepsilon,\mathcal Q_2),
\end{equation}
and likewise
\begin{equation}
R^{\star}_{\mathcal F}(\varepsilon,\mathcal Q_1)\le R^{\star}_{\mathcal F}(\varepsilon,\mathcal Q_2),
\qquad
L^{\star}_{\mathcal F}(\varepsilon,\mathcal Q_1)\le L^{\star}_{\mathcal F}(\varepsilon,\mathcal Q_2).
\end{equation}
\end{theorem}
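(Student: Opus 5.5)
The plan is to prove the two pointwise inequalities first and then obtain all frontier inequalities as consequences of feasible-set inclusion. For the first line, I would use the fact that, for a fixed interface \(Z\), each per-query quantity
\[
d_q(Z):=\mathcal R_Z^\star(q)-\mathcal R_X^\star(q)
\]
does not depend on which family \(q\) is regarded as belonging to. By Theorem~\ref{app:thm:equiv}, \(\delta^{\mathrm{ans}}_{\mathcal Q}(Z\|X)=\Delta_{\mathcal Q}(Z;X)=\sup_{q\in\mathcal Q}d_q(Z)\) for every family. It therefore suffices to prove the inequality for \(\Delta\), and the deficiency inequality follows by the same identity; alternatively, one can argue directly from the defining supremum of \(\delta^{\mathrm{ans}}\). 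Since \(\mathcal Q_1\subseteq\mathcal Q_2\), the supremum over the larger index set dominates the supremum over the smaller one:
\[
\sup_{q\in\mathcal Q_1}d_q(Z)\le\sup_{q\in\mathcal Q_2}d_q(Z).
\]

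For the frontiers, I would first fix \(\varepsilon\ge0\). Any query-agnostic compressor \(C\) that is feasible for \(\mathcal Q_2\) satisfies \(\Delta_{\mathcal Q_2}(Z;X)\le\varepsilon\). The first step then gives \(\Delta_{\mathcal Q_1}(Z;X)\le\varepsilon\), so \(C\) is also feasible for \(\mathcal Q_1\). The objective values \(I(X;Z)\) and \(\mathbb E[\tau(Z)]\) depend only on \(C\) and the law of \(X\), not on the family. Hence the infimum over the larger feasible set, the one for \(\mathcal Q_1\), is no larger. With the convention \(\inf\emptyset=\infty\), the inequality also holds trivially when the \(\mathcal Q_2\) feasible set is empty.

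The model-class frontiers \(R^\star_{\mathcal F},L^\star_{\mathcal F}\) follow identically. I would replace \(d_q\) by
\[
\mathcal R^\star_{\mathcal F}(q;Z)-\mathcal R_X^\star(q),
\]
which is again defined query by query, so the supremum over \(\mathcal Q_1\) is bounded by the supremum over \(\mathcal Q_2\), and the same feasible-set inclusion argument applies.

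The step I expect to need care is checking that the admissible compressor class itself does not shrink when the family is enlarged. "Family-aware" compressors \(C_{\mathcal Q}\) are allowed to depend on \(\mathcal Q\) at design time. A compressor designed for \(\mathcal Q_2\) is still a legitimate map \(X\mapsto Z\) when the target family is \(\mathcal Q_1\), since the design-time dependence only restricts which maps are considered and does not change the map itself. I would state explicitly that the optimization ranges over all measurable admissible maps \(X\mapsto Z\), independent of \(\mathcal Q\), which makes the inclusion valid. A second point to verify is that \(\mathcal R_X^\star(q)\) and \(\mathcal R_Z^\star(q)\) are well defined and finite for each \(q\), so that the suprema are meaningful; this is guaranteed by Assumption~\ref{app:ass:bayes}.
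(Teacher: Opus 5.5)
Your proposal is correct and takes essentially the same route as the paper: the supremum over the larger family dominates the supremum over the smaller one, so enlarging \(\mathcal Q\) only tightens the preservation constraint and shrinks the feasible compressor set, which weakly raises every frontier infimum (with \(\inf\emptyset=\infty\)). Beyond the paper's one-line argument, you add two steps it leaves implicit: you obtain the deficiency inequality via Theorem~\ref{app:thm:equiv} or directly from its defining supremum, and you state explicitly that the admissible class of maps \(X\mapsto Z\), and hence each objective value, does not depend on \(\mathcal Q\).
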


\noindent\emph{Proof.}
All quantities are defined by a supremum over \(q\in\mathcal Q\) or an infimum over compressors satisfying such supremum constraints. Enlarging \(\mathcal Q\) can only tighten the preservation constraints, hence it can only shrink the feasible compressor set and weakly increase the required budget.
\hfill\(\square\)

\paragraph{Interpretation.}
Adding queries can only make answer preservation harder. This creates a partial order over task families: nested audio task families must yield nested rate and length frontiers.

\paragraph{Experimental implication.}
Later experiments should report nested families \(\mathcal Q_1\subseteq \mathcal Q_2\subseteq \cdots\) and verify that the empirical rate-risk and token-risk curves are ordered accordingly. Violations indicate either estimation noise or optimization failure.

\subsection{Task-aware indirect rate-distortion bounds}
\label{app:e2}
For a fixed query \(q\), define the reduced distortion
\begin{equation}
\rho_q(x,a):=
\mathbb E\!\left[\ell(a,Y_q)\mid X=x,q\right]
-\underline L_q(\Pi_q^X(x)).
\end{equation}
By construction, \(\rho_q(x,a)\ge 0\), and \(\rho_q(x,a)=0\) exactly when action \(a\) is Bayes-optimal given \(X=x\).

Now define the single-query indirect answer-preserving function
\begin{equation}
R_q^{\mathrm{ind}}(\varepsilon)
:=
\inf_{P_{A_q|X}}
\Big\{
I(X;A_q):
\mathbb E\big[\rho_q(X,A_q)\big]\le \varepsilon
\Big\}.
\end{equation}
By the reduction principle for indirect rate-distortion, \(R_q^{\mathrm{ind}}\) is the ordinary rate-distortion function of \(X\) under the reduced distortion \(\rho_q\).
% ; compare also the standard rate-distortion formulation in \cite{coverthomas1991}.

\begin{proposition}[Task-Aware Indirect Rate-Distortion Sandwich]\label{app:prop:sandwich}
For every \(\varepsilon\ge 0\), the query-agnostic frontier satisfies
\begin{equation}
\sup_{q\in\mathcal Q}R_q^{\mathrm{ind}}(\varepsilon)
\le
R^{\star}_{\mathrm{Bayes}}(\varepsilon,\mathcal Q)
\le
\inf_{\substack{P_{U|X},\,\{g_q\}\\
\mathbb E[\rho_q(X,g_q(U))]\le \varepsilon,\ \forall q}}
I(X;U).
\end{equation}
For the query-conditioned frontier,
\begin{equation}
\mathbb E_Q\!\big[R_Q^{\mathrm{ind}}(\varepsilon)\big]
\le
R^{\star}_{\mathrm{Bayes,cond}}(\varepsilon,\mathcal Q)
\le
\inf_{\substack{P_{U|X,Q},\,\{g_q\}\\
\mathbb E[\rho_q(X,g_q(U))\mid Q=q]\le \varepsilon,\ \forall q}}
I(X;U\mid Q).
\end{equation}
The analogous inequalities hold for the sequence-length frontiers after replacing the mutual-information objectives by \(\mathbb E[\tau(U)]\).
\end{proposition}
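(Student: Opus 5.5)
The plan is to prove both sandwiches by one reduction. A Bayes-optimal decoder turns any feasible compressor into a per-query action channel, which gives the lower bound through data processing. In the other direction, any auxiliary representation with good decoders is itself an admissible compressor, which gives the upper bound. The key identity is that, for any action $A$ formed from an observation $V$ with $Y_q - X - V$ Markov (given $Q=q$), conditioning on $X$ yields
\begin{equation*}
\mathbb E[\ell(A,Y_q)] = \mathbb E\big[\mathbb E[\ell(A,Y_q)\mid X,A,q]\big] = \mathbb E\big[\rho_q(X,A)\big] + \mathcal R_X^\star(q).
\end{equation*}
The Markov property holds because every compressor sees only $X$ (or $(X,Q)$, with $Q$ exogenous and independent of $X$). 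In the identity, $\mathbb E[\ell(A,Y_q)\mid X=x,A=a,q]=\mathbb E[\ell(a,Y_q)\mid X=x,q]$. The second equality uses the definition of $\rho_q$ together with $\mathcal R_X^\star(q)=\mathbb E[\underline L_q(\Pi_q^X)]$. Thus excess Bayes risk equals expected reduced distortion, and this is the indirect rate-distortion reduction principle in the paper's notation.

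\emph{Lower bound, agnostic.} Take any feasible $Z=C(X)$ with $\Delta_{\mathcal Q}(Z;X)\le\varepsilon$. For each $q$, let $\phi_q^\star$ be a Bayes rule for $Z$, which exists by Assumption~\ref{app:ass:bayes}, and set $A_q=\phi_q^\star(Z)$. By the identity, $\mathbb E[\rho_q(X,A_q)]=\mathcal R_Z^\star(q)-\mathcal R_X^\star(q)\le\varepsilon$. Since $X - Z - A_q$ is Markov, data processing gives $I(X;A_q)\le I(X;Z)$. Hence $R_q^{\mathrm{ind}}(\varepsilon)\le I(X;Z)$ for every $q$. Taking $\sup_q$ and then the infimum over feasible $C$ gives the left inequality.

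\emph{Lower bound, conditioned.} Fix $q$ and work under the law given $Q=q$. Independence of $Q$ and $X$ means the law of $X$ is unchanged, so $R_q^{\mathrm{ind}}$ is the correct single-query function. The same argument applied to $Z\mid Q=q$ and the conditioned risk $\mathcal R^\star_{Z|q}(q)$ gives $R_q^{\mathrm{ind}}(\varepsilon)\le I(X;Z\mid Q=q)$. Averaging over $Q\sim\mu$ yields $\mathbb E_Q[R_Q^{\mathrm{ind}}(\varepsilon)]\le I(X;Z\mid Q)$, and we then take the infimum over feasible $C$.

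\emph{Upper bounds.} Given $P_{U|X}$ and decoders $\{g_q\}$ satisfying the constraints, use $Z:=U$ as the compressor. It is query-agnostic, and it is admissible under the standing regularity. Then
\begin{equation*}
\mathcal R_Z^\star(q)\le \mathbb E[\ell(g_q(U),Y_q)] = \mathcal R_X^\star(q)+\mathbb E[\rho_q(X,g_q(U))]\le \mathcal R_X^\star(q)+\varepsilon .
\end{equation*}
So $U$ is feasible, its cost is $I(X;U)$, and taking the infimum gives the upper bound. The conditioned case is identical with $P_{U|X,Q}$, conditional expectations given $Q=q$, and $I(X;U\mid Q)$.

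\emph{Expected obstacle: the sequence-length version.} The upper bound transfers verbatim, since feasibility of $Z=U$ does not depend on the objective. The lower bound, however, used data processing, and $\mathbb E[\tau(\cdot)]$ is not monotone under the post-processing $Z\mapsto\phi_q^\star(Z)$. My plan there is to interpret the single-query length function at the representation level: $L_q^{\mathrm{ind}}(\varepsilon)$ is the infimum of $\mathbb E[\tau(U)]$ over $P_{U|X}$ and a single decoder $g_q$ with $\mathbb E[\rho_q(X,g_q(U))]\le\varepsilon$. With that definition, any feasible $Z$ together with $g_q=\phi_q^\star$ is admissible for each $q$ separately, which gives $\sup_q L_q^{\mathrm{ind}}\le L^\star_{\mathrm{Bayes}}$. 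The conditioned analogue gives $\mathbb E_Q L_Q^{\mathrm{ind}}\le L^\star_{\mathrm{Bayes,cond}}$, because $\mathbb E[\tau(Z)]=\mathbb E_Q\,\mathbb E[\tau(Z)\mid Q]$. I would state this interpretation explicitly as the meaning of ``analogous''.
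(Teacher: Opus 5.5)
Your proposal is correct and follows the paper's proof: Bayes-optimal actions $A_q=\phi_q^\star(Z)$ plus data processing give the converses (averaging over $Q$ in the conditioned case), and taking $Z:=U$ as the compressor gives achievability. You also fill in two things the paper leaves out. You derive $\mathbb E[\rho_q(X,A)]=\mathbb E[\ell(A,Y_q)]-\mathcal R_X^\star(q)$ from the Markov chain $Y_q - X - A$, which the paper only asserts. And you note that the length-frontier lower bound cannot come from data processing, since $\mathbb E[\tau(\cdot)]$ is not monotone under decoding; the paper claims the ``analogous'' inequalities without argument, and your representation-level definition of the single-query length function is what makes that claim hold.
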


\noindent\emph{Proof.}
For the query-agnostic converse, let \(X\mapsto Z\) be any feasible universal compressor, and let \(A_q=a_q^\star(Z)\) be a Bayes-optimal query-\(q\) action based on \(Z\). Since \(A_q\) is a measurable function of \(Z\),
\begin{equation}
I(X;A_q)\le I(X;Z)
\end{equation}
by data processing. Moreover,
\begin{equation}
\mathbb E[\rho_q(X,A_q)]
=
\mathcal R_Z^\star(q)-\mathcal R_X^\star(q)
\le
\Delta_{\mathcal Q}(Z;X)
\le \varepsilon.
\end{equation}
Hence \(I(X;Z)\ge R_q^{\mathrm{ind}}(\varepsilon)\) for every \(q\), and therefore
\begin{equation}
I(X;Z)\ge \sup_{q\in\mathcal Q}R_q^{\mathrm{ind}}(\varepsilon).
\end{equation}
Taking the infimum over feasible \(Z\) proves the lower bound.

For the query-agnostic achievability bound, choose any auxiliary \(U\) and query-specific decoders \(\{g_q\}\) satisfying \(\mathbb E[\rho_q(X,g_q(U))]\le \varepsilon\) for all \(q\), and set \(Z:=U\). Then
\begin{equation}
\mathcal R_Z^\star(q)-\mathcal R_X^\star(q)
\le
\mathbb E[\rho_q(X,g_q(U))]
\le \varepsilon
\end{equation}
for every \(q\), so \(Z\) is feasible with rate \(I(X;U)\).

For the query-conditioned converse, let \(P_{Z|X,Q}\) be any feasible compressor. For each \(q\), let \(A_q=a_q^\star(Z)\) be a Bayes-optimal action under the conditional law \(Q=q\). Then
\begin{equation}
I(X;A_q\mid Q=q)\le I(X;Z\mid Q=q),
\end{equation}
and
\begin{equation}
\mathbb E[\rho_q(X,A_q)\mid Q=q]
=
\mathcal R_{Z|q}^\star(q)-\mathcal R_X^\star(q)
\le
\Delta_{\mathcal Q}^{\mathrm{cond}}(Z;X)
\le \varepsilon.
\end{equation}
Thus \(I(X;Z\mid Q=q)\ge R_q^{\mathrm{ind}}(\varepsilon)\) for every \(q\). Averaging over \(Q\) yields
\begin{equation}
I(X;Z\mid Q)
=
\mathbb E_Q\!\big[I(X;Z\mid Q=q)\big]
\ge
\mathbb E_Q\!\big[R_Q^{\mathrm{ind}}(\varepsilon)\big].
\end{equation}

For the query-conditioned achievability bound, choose any auxiliary \(U\) generated by a conditional kernel \(P_{U|X,Q}\) and query-specific decoders \(\{g_q\}\) such that
\begin{equation}
\mathbb E[\rho_q(X,g_q(U))\mid Q=q]\le \varepsilon
\qquad
\text{for all }q\in\mathcal Q.
\end{equation}
Setting \(Z:=U\), we obtain for every \(q\),
\begin{equation}
\mathcal R_{Z|q}^\star(q)-\mathcal R_X^\star(q)
\le
\mathbb E[\rho_q(X,g_q(U))\mid Q=q]
\le \varepsilon,
\end{equation}
so \(Z\) is feasible with conditional rate \(I(X;U\mid Q)\).
\hfill\(\square\)

\paragraph{Interpretation.}
The query-agnostic frontier is controlled from below by the hardest single query, because one universal interface must support every query at once. By contrast, the query-conditioned frontier averages the per-query costs, because the compressor can adapt its representation to the realized query before transmitting anything.

\paragraph{Experimental implication.}
Later experiments should compare a universal compressor against a query-conditioned compressor on the same task family. The conditioned gain should track the gap between a hardest-query requirement and the average per-query requirement, especially when different audio queries depend on different latent factors.

\begin{remark}[Reduction to classical objects]\label{rem:reductions}
If \(\mathcal Q=\{q\}\) and the action \(a\) is itself a predictive distribution on \(\mathcal Y_q\) under logarithmic loss, \(\ell(a,y)=-\log a(y)\), then \(\underline L_q(\pi)=H(\pi)\) and
\begin{equation}
\mathcal R_Z^\star(q)-\mathcal R_X^\star(q)
=
H(Y_q|Z,q)-H(Y_q|X,q).
\end{equation}
Hence \(R^{\star}_{\mathrm{Bayes}}(\varepsilon,\{q\})\) is equivalent to an information-bottleneck trade-off \cite{tishby2000information}. If, further, \(X\) is a discrete prompt and \(Z\) is a hard prompt, this coincides with the rate-distortion formalization of black-box prompt compression \cite{nagle2024fundamental}. By contrast, for general audio task families, \(\delta^{\mathrm{ans}}_{\mathcal Q}\) is a supremum of Bayes-envelope gaps, so no single mutual-information distortion summarizes the entire frontier. In the single-task task-oriented source-coding regime, identifiability can collapse the remote problem to ordinary rate-distortion \cite{enttsel2026modelaware}; our familywise objective generally prevents that collapse.
\end{remark}

\subsection{Proof of Theorem~\ref{thm:partition-refinement}}
\label{app:proof-partition-refinement}

\begin{proof}
Write
\begin{equation}
d_q(Z)=\mathcal R_Z^\star(q)-\mathcal R_X^\star(q),
\qquad d(q)=d_q(Z).
\end{equation}
For any coarse cell \(F\in\mathcal P\) with
\(\mu(F)>0\), refinement gives
\begin{equation}
\bar d_F(Z)
=
\sum_{\substack{G\in\mathcal P':\,G\subseteq F\\ \mu(G)>0}}
\frac{\mu(G)}{\mu(F)}\,\bar d_G(Z),
\end{equation}
so \(\bar d_F(Z)\) is a convex combination of the fine-cell means inside
\(F\). Hence
\begin{equation}
\bar d_F(Z)
\le
\max_{\substack{G\in\mathcal P':\,G\subseteq F\\ \mu(G)>0}}
\bar d_G(Z)
\le
\Delta^{\rm fam}_{\mathcal P'}(Z;X).
\end{equation}
Taking the maximum over \(F\in\mathcal P\) gives
\begin{equation}
\Delta^{\rm fam}_{\mathcal P}(Z;X)
\le
\Delta^{\rm fam}_{\mathcal P'}(Z;X).
\end{equation}

Also,
\begin{equation}
\Delta^{\rm avg}_{\mu}(Z;X)
=
\sum_{F\in\mathcal P:\mu(F)>0}\mu(F)\bar d_F(Z)
\le
\max_{F\in\mathcal P:\mu(F)>0}\bar d_F(Z)
=
\Delta^{\rm fam}_{\mathcal P}(Z;X),
\end{equation}
and for every fine cell \(G\),
\begin{equation}
\bar d_G(Z)
=
\mathbb E_{\mu}[d(Q)\mid Q\in G]
\le
\sup_{q\in\mathcal Q}d_q(Z)
=
\Delta_{\mathcal Q}(Z;X).
\end{equation}
Therefore
\begin{equation}
\Delta^{\rm avg}_{\mu}(Z;X)
\le
\Delta^{\rm fam}_{\mathcal P}(Z;X)
\le
\Delta^{\rm fam}_{\mathcal P'}(Z;X)
\le
\Delta_{\mathcal Q}(Z;X).
\end{equation}

For the budget consequence, suppose
\(b\in\mathcal B_{\mathcal P'}(\varepsilon;C)\). Then
\begin{equation}
\Delta^{\rm fam}_{\mathcal P'}(Z_b;X)\le \varepsilon.
\end{equation}
By the inequality just proved,
\begin{equation}
\Delta^{\rm fam}_{\mathcal P}(Z_b;X)
\le
\Delta^{\rm fam}_{\mathcal P'}(Z_b;X)
\le \varepsilon,
\end{equation}
so \(b\in\mathcal B_{\mathcal P}(\varepsilon;C)\). Thus
\begin{equation}
\mathcal B_{\mathcal P'}(\varepsilon;C)
\subseteq
\mathcal B_{\mathcal P}(\varepsilon;C).
\end{equation}
Taking infima, with the convention \(\inf\emptyset=+\infty\), gives
\begin{equation}
b^\star_{\mathcal P}(\varepsilon;C)
=
\inf\mathcal B_{\mathcal P}(\varepsilon;C)
\le
\inf\mathcal B_{\mathcal P'}(\varepsilon;C)
=
b^\star_{\mathcal P'}(\varepsilon;C).
\end{equation}
\end{proof}

\subsection{Model-class restriction and interface length}
\label{app:e4}
\begin{proposition}[Model-Class Gap]\label{app:prop:modelgap}
Define the architecture gap of a compressed representation \(Z\) relative to \(\mathcal F_{\mathrm{LALM}}\) by
\begin{equation}
\Gamma_{\mathcal F}(Z;\mathcal Q)
:=
\sup_{q\in\mathcal Q}
\big(
\mathcal R_{\mathcal F}^{\star}(q;Z)-\mathcal R_Z^\star(q)
\big).
\end{equation}
Then
\begin{equation}
\sup_{q\in\mathcal Q}\big(\mathcal R_{\mathcal F}^{\star}(q;Z)-\mathcal R_X^\star(q)\big)
\le
\Delta_{\mathcal Q}(Z;X)+\Gamma_{\mathcal F}(Z;\mathcal Q).
\end{equation}
Consequently,
\begin{equation}
R^{\star}_{\mathcal F}(\varepsilon,\mathcal Q)\ge R^{\star}_{\mathrm{Bayes}}(\varepsilon,\mathcal Q),
\qquad
L^{\star}_{\mathcal F}(\varepsilon,\mathcal Q)\ge L^{\star}_{\mathrm{Bayes}}(\varepsilon,\mathcal Q).
\end{equation}
More quantitatively, any compressor satisfying \(\Delta_{\mathcal Q}(Z;X)\le \varepsilon\) and \(\Gamma_{\mathcal F}(Z;\mathcal Q)\le \gamma\) is feasible for the model-class frontier at tolerance \(\varepsilon+\gamma\).
\end{proposition}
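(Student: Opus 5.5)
The plan is to prove the per-query additive decomposition first, then pass to suprema, and finally read off the frontier comparisons from feasible-set inclusions.

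\textbf{Step 1 (per-query split).} For each fixed \(q\in\mathcal Q\), add and subtract the unrestricted Bayes risk \(\mathcal R_Z^\star(q)\):
\begin{equation}
\mathcal R_{\mathcal F}^{\star}(q;Z)-\mathcal R_X^\star(q)
=
\big(\mathcal R_{\mathcal F}^{\star}(q;Z)-\mathcal R_Z^\star(q)\big)
+
\big(\mathcal R_Z^\star(q)-\mathcal R_X^\star(q)\big).
\end{equation}
The first bracket is at most \(\Gamma_{\mathcal F}(Z;\mathcal Q)\) and the second is at most \(\Delta_{\mathcal Q}(Z;X)\). Taking \(\sup_{q}\) on the left gives the displayed inequality, because the supremum of a sum is at most the sum of the suprema. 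Assumption~\ref{app:ass:bayes} (bounded losses, finite expectations) guarantees that every term is finite, so no \(\infty-\infty\) issue arises.

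\textbf{Step 2 (\(\Gamma_{\mathcal F}\ge 0\) and the frontier comparison).} Each \(f\in\mathcal F_{\mathrm{LALM}}\) with \(q\) fixed induces a decision rule \(\phi(z)=f(z,q)\) from \(\mathcal Z\) to \(\mathcal A_q\). Since \(\mathcal R_Z^\star(q)\) is an infimum over all such \(\phi\), we get \(\mathcal R_{\mathcal F}^{\star}(q;Z)\ge \mathcal R_Z^\star(q)\). Consequently
\begin{equation}
\sup_{q}\big(\mathcal R_{\mathcal F}^{\star}(q;Z)-\mathcal R_X^\star(q)\big)\ge \Delta_{\mathcal Q}(Z;X).
\end{equation}
It follows that any compressor \(C_{\mathcal Q}:X\mapsto Z\) feasible for the model-class constraint at tolerance \(\varepsilon\) also satisfies \(\Delta_{\mathcal Q}(Z;X)\le\varepsilon\), so it is feasible for the Bayes frontier. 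Hence the feasible set of \(R^\star_{\mathcal F}\) is contained in that of \(R^\star_{\mathrm{Bayes}}\). Both problems minimize the same objective, \(I(X;Z)\) or \(\mathbb E[\tau(Z)]\), over the same class of query-agnostic compressors, so taking infima over nested sets (with \(\inf\emptyset=+\infty\)) gives both frontier inequalities.

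\textbf{Step 3 (quantitative feasibility).} If \(\Delta_{\mathcal Q}(Z;X)\le\varepsilon\) and \(\Gamma_{\mathcal F}(Z;\mathcal Q)\le\gamma\), Step 1 bounds the model-class excess by \(\varepsilon+\gamma\). This is exactly the feasibility condition defining \(R^\star_{\mathcal F}(\varepsilon+\gamma,\mathcal Q)\) and \(L^\star_{\mathcal F}(\varepsilon+\gamma,\mathcal Q)\).

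The only point needing care is the claim in Step 2 that the model class is dominated by all measurable decoders. This holds provided each \(f(\cdot,q)\) is a measurable map into \(\mathcal A_q\), which I will take as part of the definition of \(\mathcal F_{\mathrm{LALM}}\).
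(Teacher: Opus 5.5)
Your proof is correct and follows the paper's route: the same per-query add-and-subtract of \(\mathcal R_Z^\star(q)\), then taking suprema, with the frontier comparison and the tolerance-\(\varepsilon+\gamma\) statement read off from feasibility. Your Step 2 also makes explicit what the paper's ``follow immediately'' leaves implicit: the frontier comparison rests on the pointwise lower bound \(\mathcal R_{\mathcal F}^{\star}(q;Z)\ge\mathcal R_Z^\star(q)\), which holds because the model class is a subset of all measurable decoders, and not on the displayed upper-bound inequality.
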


\noindent\emph{Proof.}
For each \(q\),
\begin{equation}
\mathcal R_{\mathcal F}^{\star}(q;Z)-\mathcal R_X^\star(q)
=
\big(\mathcal R_Z^\star(q)-\mathcal R_X^\star(q)\big)
+
\big(\mathcal R_{\mathcal F}^{\star}(q;Z)-\mathcal R_Z^\star(q)\big).
\end{equation}
Taking the supremum over \(q\in\mathcal Q\) yields the first inequality. The frontier inequalities follow immediately.
\hfill\(\square\)

\paragraph{Interpretation.}
The failure of a practical large audio language model can be decomposed into an \emph{information bottleneck} term, \(\Delta_{\mathcal Q}(Z;X)\), and an \emph{architecture bottleneck} term, \(\Gamma_{\mathcal F}(Z;\mathcal Q)\). The former is fundamental; the latter is model-dependent.

\paragraph{Experimental implication.}
Later experiments should estimate \(\Gamma_{\mathcal F}(Z;\mathcal Q)\) by fitting stronger or more specialized answerers on the \emph{same} compressed interface \(Z\). This isolates whether observed errors arise because the compressor discarded information or because the downstream LALM failed to use what remained.

\begin{proposition}[Rate-to-Token Translation]\label{app:prop:rate_token}
Let \(Z\) be transmitted through a uniquely decodable \(B\)-ary token interface, and let \(\tau(Z)\) denote the number of interface symbols. Then the lossless source-coding converse implies
\begin{equation}
H(Z)\le \log B\cdot \mathbb E[\tau(Z)].
\end{equation}
Consequently,
\begin{equation}
I(X;Z)\le H(Z)\le \log B\cdot \mathbb E[\tau(Z)].
\end{equation}

Now let \(U\) be a discrete compressed symbol and suppose the interface code may be chosen to match the law of \(U\). Then there exists a uniquely decodable \(B\)-ary code \(c\) and an induced interface random variable \(Z:=c(U)\) such that
\begin{equation}
\mathbb E[\tau(Z)]<\frac{H(U)}{\log B}+1.
\end{equation}
Therefore,
\begin{equation}
\frac{H(U)}{\log B}\le \inf_{c}\mathbb E[\tau(c(U))]<\frac{H(U)}{\log B}+1,
\end{equation}
where the infimum is over uniquely decodable \(B\)-ary codes for \(U\). If the compressor is deterministic, \(U=g(X)\), then \(H(U)=I(X;U)\), and hence
\begin{equation}
\frac{I(X;U)}{\log B}
\le
\inf_{c}\mathbb E[\tau(c(U))]
<
\frac{I(X;U)}{\log B}+1.
\end{equation}
If each interface token is instead a quantized latent carrying \(b_{\mathrm{tok}}\) bits, replace \(\log B\) by \(b_{\mathrm{tok}}\log 2\).
\end{proposition}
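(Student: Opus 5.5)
The statement immediately above is Proposition~\ref{app:prop:rate_token}, and I would prove it in two halves: a converse, which is a chain of standard inequalities, and an achievability bound, which comes from a Shannon-type code construction.

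\textbf{Converse.} Fix a uniquely decodable $B$-ary code, with $\tau(Z)$ the length of the codeword emitted for $Z$. I would first apply the Kraft--McMillan inequality, $\sum_z B^{-\tau(z)}\le 1$. Next I would define the sub-probability $r(z)=B^{-\tau(z)}$ and use nonnegativity of relative entropy, i.e.\ Gibbs' inequality $-\sum_z p(z)\log p(z)\le -\sum_z p(z)\log r(z)$. This gives $H(Z)\le \log B\cdot\mathbb E[\tau(Z)]$, with information measured in nats as declared in Appendix~\ref{app:extended-setup}. The remaining inequality $I(X;Z)=H(Z)-H(Z\mid X)\le H(Z)$ follows from $H(Z\mid X)\ge 0$, which holds because $Z$ is discrete: it is a codeword sequence over a finite alphabet, hence countably valued.

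\textbf{Achievability.} For discrete $U$ with law $p$, I would use Shannon lengths $\ell(u)=\lceil -\log_B p(u)\rceil$. These satisfy Kraft's inequality, so a prefix (hence uniquely decodable) code $c$ with these lengths exists. Its expected length satisfies $\mathbb E[\tau(c(U))]<\sum_u p(u)\bigl(-\log_B p(u)+1\bigr)=H(U)/\log B+1$. The lower half of the sandwich is just the converse applied to $Z=c(U)$, using $H(c(U))=H(U)$, which holds because $c$ is injective. Taking the infimum over codes gives the two-sided bound.

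\textbf{Deterministic case and quantized tokens.} For a deterministic compressor $U=g(X)$ we have $H(U\mid X)=0$, so $I(X;U)=H(U)$, and substituting this into the sandwich gives the stated bound. For quantized latents carrying $b_{\mathrm{tok}}$ bits, each token is a symbol from an alphabet of size $2^{b_{\mathrm{tok}}}$. Then $\log B=b_{\mathrm{tok}}\log 2$, and the same argument goes through verbatim.

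\textbf{Main obstacle.} The only delicate point is measure-theoretic: the converse requires $Z$ to be discrete, or at least countably valued, so that $H(Z)$ is defined and $H(Z\mid X)\ge 0$. This is guaranteed by the token-interface hypothesis. A second, minor point is that $H(U)$ may be infinite for countable $U$; in that case the bounds hold trivially. I would state both of these conventions explicitly at the start of the proof.
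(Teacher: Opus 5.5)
Your proposal is correct and matches the paper's proof step for step. The paper likewise uses Kraft--McMillan plus the lossless source-coding converse, which you make explicit via Gibbs' inequality, to get $H(Z)\le \log B\cdot\mathbb E[\tau(Z)]$ and hence $I(X;Z)\le H(Z)$. It then uses a Shannon $B$-ary code for achievability, the converse applied to $Z=c(U)$ for the lower half of the sandwich, and $H(U\mid X)=0$ for the deterministic case.

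One small correction to your closing remark: if $H(U)=\infty$, the strict bound $\mathbb E[\tau(c(U))]<H(U)/\log B+1$ does not hold trivially. It actually fails, since both sides are infinite. You should instead assume $H(U)<\infty$, as the statement implicitly does.
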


\noindent\emph{Proof.}
For any uniquely decodable \(B\)-ary interface code, the Kraft inequality together with the converse of lossless source coding gives
\begin{equation}
H(Z)\le \log B\cdot \mathbb E[\tau(Z)].
\end{equation}
Since \(I(X;Z)\le H(Z)\), the first claim follows immediately.

Now let \(U\) be any discrete compressed symbol. A Shannon \(B\)-ary code for \(U\) yields a uniquely decodable encoding \(c\) satisfying
\begin{equation}
\mathbb E[\tau(c(U))]<\frac{H(U)}{\log B}+1
\end{equation}
by the standard one-symbol source-coding bound \cite{coverthomas1991}. The converse lower bound
\begin{equation}
\frac{H(U)}{\log B}\le \inf_{c}\mathbb E[\tau(c(U))]
\end{equation}
follows from the first part applied to \(Z=c(U)\). If \(U=g(X)\) is deterministic, then \(H(U\mid X)=0\), so \(H(U)=I(X;U)\), yielding the final display. The quantized-latent case is identical after interpreting one token as \(b_{\mathrm{tok}}\) bits.
\hfill\(\square\)

\paragraph{Interpretation.}
Information rate and interface length are linked but not identical. The lower bound \(H(Z)\le \log B\,\mathbb E[\tau(Z)]\) is universal for any uniquely decodable interface. Near-equality, however, requires an entropy-efficient coding layer or tokenizer matched to the compressed representation. For stochastic compressors the gap between \(H(U)\) and \(I(X;U)\) is exactly the randomized-encoding overhead \(H(U\mid X)\).

\paragraph{Experimental implication.}
Later experiments should report both rate-risk and sequence-length-risk curves. When the interface is close to entropy-efficient, the horizontal axes should differ mainly by the factor \(\log B\), up to the one-token Shannon slack. Persistent deviations beyond that level diagnose tokenizer inefficiency or stochastic-encoding overhead.

\begin{theorem}[Conditioned Compression Advantage]\label{app:thm:condadv}
For any design prior \(Q\sim\mu\) with support \(\mathcal Q\), independent of \(X\),
\begin{equation}
R^{\star}_{\mathrm{Bayes,cond}}(\varepsilon,\mathcal Q)\le R^{\star}_{\mathrm{Bayes}}(\varepsilon,\mathcal Q),
\qquad
L^{\star}_{\mathrm{Bayes,cond}}(\varepsilon,\mathcal Q)\le L^{\star}_{\mathrm{Bayes}}(\varepsilon,\mathcal Q).
\end{equation}
The inequalities can be strict. In particular, let \(\mathcal Q=\{q_1,q_2\}\) with \(P(Q=q_1)=\lambda\in(0,1)\), let the audio contain two independent answer factors and irrelevant side information, \(X=(V_1,V_2,W)\), with \(H(V_1),H(V_2)>0\), let \(Y_{q_i}=V_i\), and let \(\ell_{q_i}(a,Y_{q_i})=\mathbf 1\{a\neq Y_{q_i}\}\). Then
\begin{equation}
R^{\star}_{\mathrm{Bayes}}(0,\{q_1,q_2\})=H(V_1,V_2),
\end{equation}
while
\begin{equation}
R^{\star}_{\mathrm{Bayes,cond}}(0,\{q_1,q_2\})
=
\lambda H(V_1)+(1-\lambda)H(V_2).
\end{equation}
Thus the strict rate gap is \((1-\lambda)H(V_1)+\lambda H(V_2)>0\).
\end{theorem}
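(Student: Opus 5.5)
The inequalities follow from an embedding argument: every query-agnostic compressor is also a feasible query-conditioned compressor with the same rate and length. The strict-gap example follows from a direct converse/achievability computation at zero tolerance.

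\textbf{Step 1 (embedding).} Take any feasible agnostic \(C\) with \(Z=C_b(X)\) not depending on \(q\), and regard it as a conditioned compressor that ignores \(Q\).
\begin{itemize}
\item Since \(Q\sim\mu\) is independent of \(X\), and \(Z\) is generated from \(X\) alone, \(Q\) is independent of \((X,Z)\). Hence \(I(X;Z\mid Q)=I(X;Z)\).
\item For the same reason, \(\mathbb E_{X,Q}[\tau(Z)]\) is unchanged.
\item Under the standing exogeneity interpretation in Appendix~\ref{app:extended-setup}, \(Q\) only selects the task. So the joint law of \((Z,Y_q)\) given \(Q=q\) equals its unconditional law. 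Therefore \(\mathcal R^\star_{Z|q}(q)=\mathcal R^\star_Z(q)\) and \(\Delta^{\rm cond}_{\Qfam}(Z;X)=\Delta_{\Qfam}(Z;X)\le\varepsilon\).
\end{itemize}
Thus the agnostic feasible set embeds in the conditioned feasible set with identical objective values, and taking infima gives both inequalities. The only delicate point is making the independence of \(Q\) from \((X,Z,\{Y_q\})\) explicit; I will state it as part of the exogeneity assumption.

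\textbf{Step 2 (zero-risk characterization).} Assume finite alphabets, which is implicit since \(H(V_i)\) is used. Because \(Y_{q_i}=V_i\) is a function of \(X\), the raw 0-1 risk is \(\mathcal R^\star_X(q_i)=0\). Consequently \(\varepsilon=0\) forces \(\mathcal R^\star_Z(q_i)=0\), which holds iff \(V_i=\phi_i(Z)\) almost surely for some measurable \(\phi_i\). The forward direction is the step I expect to need the most care: zero Bayes 0-1 risk means \(\max_y P(V_i=y\mid Z)=1\) a.s., so the posterior is a point mass, and \(\phi_i\) can be taken as its atom.

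\textbf{Step 3 (agnostic rate).}
\begin{itemize}
\item \emph{Converse:} a feasible agnostic \(Z\) must determine both \(V_1\) and \(V_2\) a.s. By data processing, \(I(X;Z)\ge I(X;V_1,V_2)=H(V_1,V_2)\), since \((V_1,V_2)\) is a function of \(X\).
\item \emph{Achievability:} \(Z=(V_1,V_2)\) gives exactly \(H(V_1,V_2)\).
\end{itemize}
Hence \(R^\star_{\rm Bayes}(0,\{q_1,q_2\})=H(V_1,V_2)\).

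\textbf{Step 4 (conditioned rate).}
\begin{itemize}
\item \emph{Converse:} given \(Q=q_i\), feasibility requires \(V_i\) to be a function of \(Z\) a.s. under the conditional law. Since \(X\) is independent of \(Q\), the law of \(X\) given \(Q=q_i\) is still \(P_X\), so \(I(X;Z\mid Q=q_i)\ge H(V_i)\). Averaging over \(Q\) gives \(I(X;Z\mid Q)\ge\lambda H(V_1)+(1-\lambda)H(V_2)\); this also matches the lower bound of Proposition~\ref{app:prop:sandwich}, since \(R^{\rm ind}_{q_i}(0)=H(V_i)\).
\item \emph{Achievability:} \(Z=V_Q\) (with \(Q\) observed by the encoder) has \(I(X;Z\mid Q=q_i)=H(V_i)\) and zero conditional risk.
\end{itemize}

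\textbf{Step 5 (strict gap).} Independence of \(V_1\) and \(V_2\) gives \(H(V_1,V_2)=H(V_1)+H(V_2)\). Subtracting the conditioned rate yields \((1-\lambda)H(V_1)+\lambda H(V_2)\), which is strictly positive because \(\lambda\in(0,1)\) and \(H(V_i)>0\).
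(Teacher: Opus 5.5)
Your proposal is correct and follows essentially the same route as the paper. The non-strict inequalities come from letting a conditioned compressor ignore $Q$, and the example is settled by the same converse/achievability pair: $Z=(V_1,V_2)$ in the agnostic case and $Z=V_Q$ in the conditioned case, with exact recovery of $V_i$ given $Q=q_i$ forcing $I(X;Z\mid Q=q_i)\ge H(V_i)$. You make explicit what the paper leaves implicit (the independence bookkeeping in the embedding, why zero 0-1 Bayes risk makes $V_i$ an a.s.\ function of $Z$, and $H(V_1,V_2)=H(V_1)+H(V_2)$ for the gap), and your step $I(X;Z)\ge I(X;V_1,V_2)$ is an equivalent variant of the paper's $I(X;Z)\ge I(V_1,V_2;Z)$.
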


\noindent\emph{Proof.}
The non-strict inequalities are immediate: a query-conditioned compressor can always ignore \(q\) and emulate any query-agnostic one.

For the strict-separation construction, query-agnostic zero-risk preservation requires measurable decoders \(g_1,g_2\) such that \(g_i(Z)=V_i\) almost surely. Therefore \((V_1,V_2)\) is a function of \(Z\), so
\begin{equation}
I(X;Z)\ge I(V_1,V_2;Z)=H(V_1,V_2).
\end{equation}
Equality is achieved by choosing \(Z=(V_1,V_2)\), which discards the irrelevant side information \(W\). In the query-conditioned setting, exact preservation only requires transmitting \(V_Q\). Choosing \(Z=V_Q\) gives conditional rate
\begin{equation}
I(X;Z\mid Q)=H(V_Q\mid Q)=\lambda H(V_1)+(1-\lambda)H(V_2).
\end{equation}
No lower conditional rate can suffice: conditioned on \(Q=q_i\), exact recovery of \(V_i\) implies \(H(V_i\mid Z,Q=q_i)=0\), and hence
\(I(X;Z\mid Q=q_i)\ge I(V_i;Z\mid Q=q_i)=H(V_i)\).
Averaging over \(Q\) proves the lower bound and the claimed strict gap.
\hfill\(\square\)

\paragraph{Interpretation.}
The theorem formalizes the intuitive advantage of query conditioning: a universal query-agnostic interface must preserve the union of all answer-relevant factors, whereas a query-conditioned interface needs only preserve the factors relevant to the realized query.

\paragraph{Experimental implication.}
A primary empirical quantity is the conditioned gain. We distinguish two estimators that share the theoretical anchor of Theorem~\ref{thm:condadv} but live in different operational units:
\begin{equation}
\widehat G_{\mathrm{cond}}^{\mathrm{info}}(\varepsilon,\mathcal Q)
:=
\widehat R^{\star,\mathrm{info}}_{\mathrm{Bayes}}(\varepsilon,\mathcal Q)
-
\widehat R^{\star,\mathrm{info}}_{\mathrm{Bayes,cond}}(\varepsilon,\mathcal Q),
\end{equation}
the rate-theoretic conditioned gain computed from $I(X;Z)$ or $I(X;Z\mid Q)$ estimates, and
\begin{equation}
\widehat G_{\mathrm{cond}}^{\mathrm{op}}(\varepsilon,
\mathcal Q)
:=
\widehat R^{\star}_{\mathcal F,b}(\varepsilon,
\mathcal Q;\mathrm{agnostic})
-
\widehat R^{\star}_{\mathcal F,b}(\varepsilon,
\mathcal Q;\mathrm{conditioned}),
\end{equation}
the operational conditioned gain computed from nominal budget fractions $b\in[0,1]$. The two are linked by Proposition~\ref{app:prop:rate_token}'s rate-to-token translation, modulo the coding slack and interface mismatch described there. On synthetic tasks, $\widehat G_{\mathrm{cond}}^{\mathrm{info}}$ is largest precisely when different queries depend on different audio factors and the query prior is balanced; the strict-separation construction of Theorem~\ref{thm:condadv} predicts $\widehat G_{\mathrm{cond}}^{\mathrm{info}}(0)=(1-\lambda)H(V_1)+\lambda H(V_2)$ for the two-factor finite-alphabet example with $P(Q=q_1)=\lambda$.

\subsection{Sharp finite-alphabet and Gaussian special cases}
\label{app:e5}
\begin{proposition}[Finite-Alphabet Exact Frontier]\label{app:prop:finite}
Assume that \(\mathcal Q=\{q_1,\ldots,q_m\}\) is finite, \(\mathcal X\) is finite, and each action space \(\mathcal A_{q_i}\) is finite. Let
\begin{equation}
\mathcal U:=\prod_{i=1}^m \mathcal A_{q_i},
\qquad
u=(u_1,\ldots,u_m)\in\mathcal U.
\end{equation}
Then the Bayes frontier admits the exact convex characterization
\begin{equation}
R^{\star}_{\mathrm{Bayes}}(\varepsilon,\mathcal Q)
=
\min_{P_{U|X}} I(X;U)
\end{equation}
subject to
\begin{equation}
\sum_{x\in\mathcal X}\sum_{u\in\mathcal U}\sum_{y\in\mathcal Y_{q_i}}
P_X(x)P_{U|X}(u|x)P_{Y_{q_i}|X,q_i}(y|x,q_i)\,
\ell_{q_i}(u_i,y)
-
\mathcal R_X^\star(q_i)
\le \varepsilon,
\quad i=1,\ldots,m.
\end{equation}
If a finite interface dictionary assigns a fixed token length \(\tau(u)\) to each \(u\in\mathcal U\), the corresponding sequence-length frontier is the linear program obtained by replacing the objective with
\begin{equation}
\sum_{x\in\mathcal X}\sum_{u\in\mathcal U}P_X(x)P_{U|X}(u|x)\tau(u).
\end{equation}
\end{proposition}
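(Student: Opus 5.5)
The plan is to reduce the infimum over arbitrary compressors \(X\mapsto Z\) to an infimum over ``action-profile'' compressors \(X\mapsto U\in\mathcal U\), and then observe that for such \(U\) the feasibility constraint becomes linear in \(P_{U|X}\).

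\textbf{Upper bound.} Fix any kernel \(P_{U|X}\) satisfying the displayed constraints and set \(Z:=U\). Since \(U\) is a deterministic function of \(Z\), the query-\(q_i\) decoder \(g_{q_i}(U)=u_i\) is admissible. This gives
\[
\mathcal R_Z^\star(q_i)\le\mathbb E[\ell_{q_i}(U_i,Y_{q_i})].
\]
The right-hand side is exactly the triple sum in the constraint, by the Markov chain \(Y_{q_i}-X-U\): the compressor acts on \(X\) alone, so \(P_{XUY}=P_XP_{U|X}P_{Y|X}\). Hence \(\Delta_{\mathcal Q}(Z;X)\le\varepsilon\), so \(Z\) is feasible for \(R^\star_{\mathrm{Bayes}}\), and therefore
\[
R^\star_{\mathrm{Bayes}}\le\inf I(X;U).
\]
This is the achievability half of Proposition~\ref{app:prop:sandwich}, specialised to \(g_q\) being coordinate projections. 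Note that \(\mathbb E[\ell(U_i,Y)]-\mathcal R_X^\star=\mathbb E[\rho_{q_i}(X,U_i)]\).

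\textbf{Lower bound.} Take any feasible \(Z\) (generated from \(X\) alone). Assumption~\ref{app:ass:bayes} supplies Bayes-optimal rules \(a^\star_{q_i}(Z)\). Define
\[
U:=\bigl(a^\star_{q_1}(Z),\ldots,a^\star_{q_m}(Z)\bigr)\in\mathcal U.
\]
Then \(U\) is a measurable function of \(Z\), so \(X-Z-U\) holds. Data processing gives \(I(X;U)\le I(X;Z)\). Moreover,
\[
\mathbb E[\ell(U_i,Y_{q_i})]=\mathcal R^\star_Z(q_i)\le\mathcal R_X^\star(q_i)+\varepsilon,
\]
and because \(Y-X-Z\) holds, this expectation equals the triple sum with the induced kernel \(P_{U|X}\). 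So every feasible \(Z\) is dominated by a feasible \(U\), and the two infima coincide.

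\textbf{Attainment, convexity, and the length frontier.} Over the finite sets \(\mathcal X\) and \(\mathcal U\), the feasible kernels form a compact polytope. It is nonempty because \(U_i=a^\star_{q_i}(X)\) attains zero excess. The map \(P_{U|X}\mapsto I(X;U)\) is continuous and convex for fixed \(P_X\), so the infimum is a minimum of a convex program. For the length frontier, the same reduction works once the infimum is restricted to compressors whose interface is the dictionary over \(\mathcal U\). The objective \(\sum P_XP_{U|X}\tau(u)\) is linear, so the problem is an LP.

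\textbf{Main obstacle.} The delicate step is the lower-bound reduction for the length frontier. Mapping \(Z\) to \(U\) can change the token length, because \(\tau\) is defined on the dictionary and not on arbitrary \(Z\). I would handle this by stating that the length frontier is taken over dictionary-coded interfaces. Within that class the reduction is then trivial, since \(U\) already lies in \(\mathcal U\).
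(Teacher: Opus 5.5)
Your proposal is correct and follows essentially the same route as the paper's proof. For the converse, you replace any feasible \(Z\) by the profile of Bayes decoders \(U=(a^\star_{q_1}(Z),\ldots,a^\star_{q_m}(Z))\) and apply data processing. For achievability, you let \(U\) itself serve as the interface, after which the constraints are linear in \(P_{U|X}\).

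You also make explicit several points the paper leaves implicit: the Markov chain \(Y_{q_i}-X-Z\) behind the triple-sum identity, and nonemptiness and compactness of the feasible polytope, which justify writing \(\min\). You further note that the \(Z\mapsto U\) reduction does not preserve \(\tau\). On that last point, taking values in \(\mathcal U\) is not by itself enough: the LP is exact only when codeword \(u\) is decoded as the action profile \(u\). Otherwise a short codeword decoded to a different action could beat the LP value.
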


\noindent\emph{Proof.}
Take any query-agnostic compressor \(X\mapsto Z\) with query-specific decoders \(g_i:\mathcal Z\to \mathcal A_{q_i}\). Define
\begin{equation}
U:=\big(g_1(Z),\ldots,g_m(Z)\big)\in\mathcal U.
\end{equation}
Then \(U\) is a function of \(Z\), so \(I(X;U)\le I(X;Z)\), and the risk on query \(q_i\) is exactly the risk of using action \(U_i\). Hence no optimal solution is lost by optimizing directly over \(P_{U|X}\). Conversely, any \(U\in\mathcal U\) can itself serve as the compressed representation. The displayed constraints are therefore exact, and they are linear in \(P_{U|X}\). The mutual-information objective gives a finite convex program.
\hfill\(\square\)

\paragraph{Interpretation.}
For a finite task family, universal answer-preserving compression can be reduced, without loss, to compressing a \emph{tuple of query-contingent actions}. The exponential growth of \(|\mathcal U|\) with \(|\mathcal Q|\) is not an artifact of the proof; it is the exact combinatorial price of simultaneously supporting many queries.

\paragraph{Experimental implication.}
This proposition yields an exact \(\widehat R^{\star}_{\mathrm{Bayes}}\) benchmark on synthetic finite-query tasks. It parallels the exact finite optimization viewpoint recently used in prompt compression \cite{nagle2024fundamental}, except that the reproduction alphabet here is the vector of all query-contingent answers rather than a shortened prompt.

\paragraph{Common latent-answer convention in the Gaussian subsection.}
In the Gaussian special case, all queries refer to the same latent audio-factor vector \(S\in\mathbb R^m\). Equivalently, \(Y_q:=S\), \(\mathcal Y_q=\mathcal A_q=\mathbb R^m\) for every \(q\in\mathcal Q\), and the query dependence enters only through the loss \(\ell\). This models the situation in which lexical, speaker, prosodic, event, or temporal queries emphasize different coordinates of a common underlying factorization.

\begin{proposition}[Gaussian Latent Allocation]\label{app:prop:gaussian}
Let
\begin{equation}
S=(S_1,\ldots,S_m),\qquad S_j\overset{\mathrm{ind}}{\sim}\mathcal N(0,\sigma_j^2),
\end{equation}
and assume the raw observation is \(X=S\). For every query \(q\in\mathcal Q\), set \(Y_q:=S\), \(\mathcal Y_q=\mathcal A_q=\mathbb R^m\), and define the weighted quadratic loss
\begin{equation}
\ell(a,S)=\sum_{j=1}^m \alpha_{qj}(a_j-S_j)^2,
\qquad \alpha_{qj}\ge 0.
\end{equation}
Then \(\mathcal R_X^\star(q)=0\) for all \(q\), and, writing \(\log^+ t:=\max\{0,\log t\}\),
\begin{equation}
R^{\star}_{\mathrm{Bayes}}(\varepsilon,\mathcal Q)
=
\min_{\substack{0\le d_j\le \sigma_j^2\\
\sum_{j=1}^m \alpha_{qj}d_j\le \varepsilon,\ \forall q\in\mathcal Q}}
\frac{1}{2}\sum_{j=1}^m \log^+\!\frac{\sigma_j^2}{d_j}.
\end{equation}
For the query-conditioned frontier,
\begin{equation}
R^{\star}_{\mathrm{Bayes,cond}}(\varepsilon,\mathcal Q)
=
\mathbb E_Q\!\left[
\min_{\substack{0\le d_j(Q)\le \sigma_j^2\\
\sum_{j=1}^m \alpha_{Qj}d_j(Q)\le \varepsilon}}
\frac{1}{2}\sum_{j=1}^m \log^+\!\frac{\sigma_j^2}{d_j(Q)}
\right].
\end{equation}
For a fixed query \(q\), the optimizer is weighted reverse water-filling:
\begin{equation}
d_j^\star(q)=
\begin{cases}
\sigma_j^2, & \alpha_{qj}=0,\\[3pt]
\min\{\sigma_j^2,\nu_q/\alpha_{qj}\}, & \alpha_{qj}>0,
\end{cases}
\end{equation}
where \(\nu_q\ge 0\) is chosen so that \(\sum_{j=1}^m\alpha_{qj}d_j^\star(q)=\varepsilon\) whenever the constraint is active.
\end{proposition}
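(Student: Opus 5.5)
The plan is to reduce the Gaussian frontier to Proposition~\ref{app:prop:sandwich}, applied with the indirect reduction and Gaussian rate-distortion for independent components. Since \(X=S=Y_q\), the posterior \(\Pi_q^X\) is a point mass, so \(\mathcal R_X^\star(q)=0\). The reduced distortion is then \(\rho_q(x,a)=\sum_j\alpha_{qj}(a_j-x_j)^2\), and the familywise constraint \(\Delta_{\mathcal Q}(Z;X)\le\varepsilon\) becomes \(\mathbb E\sum_j\alpha_{qj}(\hat S_{q,j}-S_j)^2\le\varepsilon\) for every \(q\). Here \(\hat S_q\) is the Bayes decoder from \(Z\), which is the posterior mean \(\mathbb E[S\mid Z]\) regardless of \(q\), because the loss is a weighted squared error. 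This yields a common estimator \(\hat S=\mathbb E[S\mid Z]\), with per-coordinate distortions \(D_j:=\mathbb E(\hat S_j-S_j)^2\le\sigma_j^2\).

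\emph{Converse.} I would use the standard chain: independence of the \(S_j\) gives \(I(S;Z)\ge I(S;\hat S)\ge\sum_j I(S_j;\hat S_j)\). The Gaussian rate-distortion converse gives \(I(S_j;\hat S_j)\ge\frac12\log^+(\sigma_j^2/D_j)\). The vector \(d_j=D_j\) is feasible for the displayed program, since \(\sum_j\alpha_{qj}D_j\le\varepsilon\) for all \(q\) and \(D_j\le\sigma_j^2\) for the MMSE estimator. The frontier is therefore at least the minimum. \emph{Achievability.} For any feasible \((d_j)\), I would take the test channel \(Z_j=S_j+N_j\) with independent Gaussian noise chosen so that the MMSE equals \(d_j\), setting \(Z_j\) constant when \(d_j=\sigma_j^2\). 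Then \(I(S;Z)=\sum_j\frac12\log(\sigma_j^2/d_j)\), and the distortion constraints hold for every \(q\) simultaneously because a single \(Z\) serves all queries. For \(d_j=0\) the rate is infinite unless we restrict to attainable values; I would interpret the minimum over the closed set with value \(+\infty\) allowed. Assumption~\ref{app:ass:bayes} with finite second moments replaces boundedness here.

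For the conditioned frontier, the same argument runs conditionally on \(Q=q\), because \(Q\) is independent of \(S\). Given \(Q=q\) there is a single constraint, so \(I(X;Z\mid Q=q)\) is bounded below by the single-query program, and achievability picks the optimal channel per realized \(q\). Averaging over \(Q\) then gives the expectation formula, matching the structure of the conditioned half of Proposition~\ref{app:prop:sandwich}.

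The water-filling formula comes from the KKT conditions of the convex single-constraint program. The objective \(\frac12\log(\sigma_j^2/d_j)\) is convex and decreasing in \(d_j\). The Lagrangian stationarity condition \(-1/(2d_j)+\lambda\alpha_{qj}=0\) gives \(d_j=\nu_q/\alpha_{qj}\) with \(\nu_q=1/(2\lambda)\), clipped at \(\sigma_j^2\). Coordinates with \(\alpha_{qj}=0\) cost nothing to leave at \(\sigma_j^2\). Monotonicity of \(\sum_j\alpha_{qj}\min\{\sigma_j^2,\nu/\alpha_{qj}\}\) in \(\nu\) yields existence of \(\nu_q\) when the constraint is active. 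The main obstacle I expect is the converse step \(I(S;\hat S)\ge\sum_j I(S_j;\hat S_j)\) together with showing that a common decoder suffices across queries. Both rely on the observation that the posterior mean is Bayes-optimal for every weighting \(\alpha_q\), and I would state it explicitly before the chain rule.
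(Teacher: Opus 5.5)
Your proposal is correct and follows the paper's route. You show $\mathcal R_X^\star(q)=0$, write the query-$q$ Bayes risk of any $Z$ as $\sum_j\alpha_{qj}D_j$ (so one posterior-mean decoder serves every query), reduce the familywise constraint to the displayed allocation program via the parallel-Gaussian rate--distortion bound, obtain the conditioned frontier as the per-query program averaged over the independent $Q$, and get water-filling from KKT. The only difference is that you prove the converse chain $I(S;Z)\ge I(S;\hat S)\ge\sum_j I(S_j;\hat S_j)$ and the Gaussian test-channel achievability explicitly, whereas the paper delegates both to the cited rate--distortion result.
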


\noindent\emph{Proof.}
Because \(X=S\), the raw observation reveals the common latent answer object exactly, hence \(\mathcal R_X^\star(q)=0\) for every \(q\). Any compressed representation of \(S\) with per-coordinate mean-squared errors \(\{d_j\}_{j=1}^m\) incurs query-\(q\) Bayes risk
\begin{equation}
\mathcal R_Z^\star(q)=\sum_{j=1}^m \alpha_{qj}d_j.
\end{equation}
Thus the familywise constraint \(\Delta_{\mathcal Q}(Z;X)\le \varepsilon\) is equivalent to
\begin{equation}
\sum_{j=1}^m \alpha_{qj}d_j\le \varepsilon
\qquad \text{for all }q\in\mathcal Q.
\end{equation}
For independent Gaussian coordinates and additive separable quadratic distortion, the rate-distortion theorem gives the minimal rate
\begin{equation}
\frac{1}{2}\sum_{j=1}^m \log^+\!\frac{\sigma_j^2}{d_j}
\end{equation}
for any admissible distortion allocation \(\{d_j\}\) \cite{coverthomas1991}. Minimizing over all allocations satisfying the familywise constraints proves the first display.

For the query-conditioned frontier, the compressor may choose a separate distortion allocation after observing the realized query \(Q\). Under the conditioned excess-risk definition, the admissible allocations are exactly those satisfying
\begin{equation}
\sum_{j=1}^m \alpha_{Qj}d_j(Q)\le \varepsilon
\end{equation}
for the realized \(Q\). Averaging the per-query optimal Gaussian rates over \(Q\) yields the second display. The reverse-water-filling form follows from the KKT conditions of the per-query constrained minimization.
\hfill\(\square\)

\paragraph{Interpretation.}
The proposition treats the latent audio factors themselves as the common answer object, while different queries weight these factors differently. Transcript questions may heavily weight lexical coordinates, speaker questions identity coordinates, and event questions background-sound coordinates. The optimal compressor therefore allocates rate to factors, not to waveform regions per se.

\paragraph{Experimental implication.}
Later experiments should fit probe factors \(S_j\) from raw audio, estimate their variances \(\sigma_j^2\), and infer query-specific weights \(\alpha_{qj}\). The observed budget allocations should then be compared against the predicted reverse-water-filling solutions \(d_j^\star(q)\).

\begin{corollary}[Factor Relevance Decomposition]\label{app:cor:factor}
Under Proposition~\ref{app:prop:gaussian}, suppose the factor indices partition as
\begin{equation}
\{1,\ldots,m\}=B_1\cup\cdots\cup B_r,
\end{equation}
and the query family partitions as
\begin{equation}
\mathcal Q=\mathcal Q_1\cup\cdots\cup \mathcal Q_r
\end{equation}
with the property that \(\alpha_{qj}=0\) whenever \(q\in\mathcal Q_\ell\) and \(j\notin B_\ell\). Then
\begin{equation}
R^{\star}_{\mathrm{Bayes}}(\varepsilon,\mathcal Q)
=
\sum_{\ell=1}^r
\min_{\substack{0\le d_j\le \sigma_j^2,\ j\in B_\ell\\
\sum_{j\in B_\ell}\alpha_{qj}d_j\le \varepsilon,\ \forall q\in\mathcal Q_\ell}}
\frac{1}{2}\sum_{j\in B_\ell}\log^+\!\frac{\sigma_j^2}{d_j}.
\end{equation}
Thus disjoint query subfamilies acting on disjoint latent-factor blocks add in rate.
\end{corollary}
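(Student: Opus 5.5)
The plan is to reduce the familywise program of Proposition~\ref{app:prop:gaussian} to $r$ independent subprograms, one per factor block. Start from the exact characterization
\[
R^{\star}_{\mathrm{Bayes}}(\varepsilon,\mathcal Q)=\min\Big\{\tfrac12\sum_{j=1}^m\log^+\tfrac{\sigma_j^2}{d_j}\;:\;0\le d_j\le\sigma_j^2,\ \sum_{j}\alpha_{qj}d_j\le\varepsilon\ \forall q\in\mathcal Q\Big\}.
\]
The support hypothesis says $\alpha_{qj}=0$ whenever $q\in\mathcal Q_\ell$ and $j\notin B_\ell$. So for $q\in\mathcal Q_\ell$ the constraint $\sum_j\alpha_{qj}d_j\le\varepsilon$ involves only the coordinates $(d_j)_{j\in B_\ell}$. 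Because the $\mathcal Q_\ell$ partition $\mathcal Q$, the full constraint set is the Cartesian product $\prod_{\ell=1}^r D_\ell$, where
\[
D_\ell=\Big\{(d_j)_{j\in B_\ell}:0\le d_j\le\sigma_j^2,\ \sum_{j\in B_\ell}\alpha_{qj}d_j\le\varepsilon\ \forall q\in\mathcal Q_\ell\Big\}.
\]
The box constraints $0\le d_j\le\sigma_j^2$ are coordinatewise and split along the blocks as well.

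Next, because the $B_\ell$ partition $\{1,\dots,m\}$, the objective splits as a sum $\sum_\ell f_\ell\big((d_j)_{j\in B_\ell}\big)$ with $f_\ell=\frac12\sum_{j\in B_\ell}\log^+(\sigma_j^2/d_j)$. The infimum of a separable sum over a product set equals the sum of the blockwise infima. One direction is immediate: any feasible point restricts to feasible points in each $D_\ell$. For the other, concatenating near-optimal blockwise choices gives a feasible point of the product whose value is the sum. This yields the claimed identity.

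The main technical care is with degenerate cases, which I would treat explicitly.
\begin{itemize}
\item Taking $d_j=0$ makes $\log^+(\sigma_j^2/0)=+\infty$, so a block infimum may be $+\infty$. The identity then holds in $[0,\infty]$ with the obvious conventions.
\item If some $\mathcal Q_\ell$ is empty, $D_\ell$ is just the box, and $d_j=\sigma_j^2$ gives zero cost. The term contributes $0$, consistent with the formula.
\item Attainment of each minimum follows from compactness of $D_\ell$ and lower semicontinuity of $f_\ell$ into $[0,\infty]$, so writing $\min$ rather than $\inf$ is justified.
\end{itemize}
I would also note why this decomposition is legitimate at the rate level and not only at the level of the allocation program. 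It holds because Proposition~\ref{app:prop:gaussian} already reduced the frontier to this allocation program via independence of the Gaussian coordinates. No further information-theoretic argument is needed, since product (blockwise independent) test channels achieve the sum.
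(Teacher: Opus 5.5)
Your proposal is correct and takes the same route as the paper: the support condition makes each constraint in the allocation program of Proposition~\ref{app:prop:gaussian} depend only on its own block of distortions, so the feasible set is a Cartesian product, and minimizing the block-additive Gaussian rate over it splits into the sum of blockwise minima. Your additional remarks are sound refinements that the paper's short proof leaves implicit: the $+\infty$ convention, empty subfamilies, and attainment via compactness of each $D_\ell$ and lower semicontinuity of $f_\ell$.
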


\begin{proof}
Under the stated block-separation condition, each constraint in Proposition~\ref{app:prop:gaussian} involves only the distortions \(\{d_j:j\in B_\ell\}\) for the matching block. The Gaussian rate objective is additive across independent coordinates. The feasible set is therefore a Cartesian product of blockwise feasible sets, and minimizing the additive objective over that product decomposes into the displayed sum.
\end{proof}

\paragraph{Interpretation.}
When one query block cares only about lexical content and another only about speaker identity, the universal answer-preserving rate splits into independent additive contributions. This is a precise version of the idea that unrelated audio capabilities consume separate budgets.

\paragraph{Experimental implication.}
Later experiments should report factor-group budget tables by clustering queries according to approximately disjoint relevance patterns. The corollary predicts additive rate contributions across such clusters.

\begin{corollary}[Synergy Penalty for Joint-Factor Queries]\label{app:cor:synergy}
Under Proposition~\ref{app:prop:gaussian}, take \(m=2\) with \(S_1,S_2\overset{\mathrm{i.i.d.}}{\sim}\mathcal N(0,\sigma^2)\). Let \(q_1\) emphasize factor \(1\), \(q_2\) emphasize factor \(2\), and \(q_{12}\) emphasize both jointly, via
\begin{equation}
\alpha_{q_1}=(1,0),\qquad
\alpha_{q_2}=(0,1),\qquad
\alpha_{q_{12}}=(1,1).
\end{equation}
Then for \(0<\varepsilon<\sigma^2\),
\begin{equation}
R^{\star}_{\mathrm{Bayes}}(\varepsilon,\{q_1,q_2\})
=
\log\frac{\sigma^2}{\varepsilon},
\end{equation}
whereas
\begin{equation}
R^{\star}_{\mathrm{Bayes}}(\varepsilon,\{q_1,q_2,q_{12}\})
=
\log\frac{2\sigma^2}{\varepsilon}.
\end{equation}
Hence the joint-factor query incurs the nonnegative synergy penalty
\begin{equation}
\mathrm{Syn}_{12}(\varepsilon)
:=
R^{\star}_{\mathrm{Bayes}}(\varepsilon,\{q_1,q_2,q_{12}\})
-
R^{\star}_{\mathrm{Bayes}}(\varepsilon,\{q_1,q_2\})
=
\log 2,
\end{equation}
namely one bit.
\end{corollary}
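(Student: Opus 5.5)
The plan is to specialize Proposition~\ref{app:prop:gaussian} to $m=2$ and $\sigma_1^2=\sigma_2^2=\sigma^2$, and then solve the two resulting finite-dimensional allocation problems in closed form. Throughout, write
\[
f(d_1,d_2)=\tfrac12\log^+(\sigma^2/d_1)+\tfrac12\log^+(\sigma^2/d_2),
\]
which is nonincreasing in each $d_j$ on $(0,\sigma^2]$. Every feasible allocation satisfies $d_j\le\sigma^2$, so each $\log^+$ can be replaced by $\log$ there. Points with $d_j=0$ give infinite rate and can be discarded.

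\textbf{Query family $\{q_1,q_2\}$.} Here the weights $\alpha_{q_1}=(1,0)$ and $\alpha_{q_2}=(0,1)$ turn the familywise constraints into the box
\[
0\le d_1\le\min\{\varepsilon,\sigma^2\},\qquad 0\le d_2\le\min\{\varepsilon,\sigma^2\}.
\]
Since $\varepsilon<\sigma^2$, the box is $[0,\varepsilon]^2$. Because $f$ is monotone in each coordinate, the minimum is attained at the corner $d_1=d_2=\varepsilon$. This gives $R^\star_{\rm Bayes}(\varepsilon,\{q_1,q_2\})=\log(\sigma^2/\varepsilon)$. Equivalently, this is Corollary~\ref{app:cor:factor} with two singleton blocks, each contributing $\tfrac12\log(\sigma^2/\varepsilon)$.

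\textbf{Query family $\{q_1,q_2,q_{12}\}$.} Adding $q_{12}$ contributes the constraint $d_1+d_2\le\varepsilon$, and this constraint implies the two box constraints from $q_1$ and $q_2$. It remains to minimize
\[
\tfrac12\log(\sigma^4/(d_1d_2))\quad\text{subject to}\quad d_1+d_2\le\varepsilon,\; 0<d_j\le\sigma^2.
\]
\begin{itemize}
\item By monotonicity, the sum constraint is active at the optimum.
\item By AM--GM, $d_1d_2\le(\varepsilon/2)^2$, with equality at $d_1=d_2=\varepsilon/2$. This point is admissible because $\varepsilon/2<\sigma^2$.
\item This is also the weighted reverse water-filling point of Proposition~\ref{app:prop:gaussian} with equal weights.
\end{itemize}
The resulting rate is $\tfrac12\cdot 2\log(2\sigma^2/\varepsilon)=\log(2\sigma^2/\varepsilon)$. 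Subtracting the first frontier gives $\mathrm{Syn}_{12}(\varepsilon)=\log 2$ nats, which is exactly one bit after the nats-to-bits conversion stated in the notation summary.

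\textbf{Main obstacle.} The only delicate point is justifying that $\log^+$ may be replaced by $\log$, that is, ruling out boundary allocations in which a coordinate is left uncompressed ($d_j=\sigma^2$, contributing zero rate). In the first problem this is excluded by $\varepsilon<\sigma^2$. In the second, $d_1=\sigma^2$ would force $d_2\le\varepsilon-\sigma^2<0$, which is infeasible. So the interior AM--GM optimum is global, and the closed forms hold on the whole range $0<\varepsilon<\sigma^2$.
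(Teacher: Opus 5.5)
Your proof is correct and follows the paper's argument. You specialize Proposition~\ref{app:prop:gaussian}, take $d_1=d_2=\varepsilon$ for $\{q_1,q_2\}$, and for $\{q_1,q_2,q_{12}\}$ use the active sum constraint to get $d_1=d_2=\varepsilon/2$. The only differences are cosmetic: you use AM--GM where the paper appeals to symmetry and convexity, and you explicitly check that $\log^+$ reduces to $\log$ on the feasible set, which the paper leaves implicit.
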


\begin{proof}
For \(\{q_1,q_2\}\), the constraints are \(d_1\le\varepsilon\) and \(d_2\le\varepsilon\), so the optimal allocation is \(d_1=d_2=\varepsilon\), giving \(\log(\sigma^2/\varepsilon)\). Adding \(q_{12}\) imposes \(d_1+d_2\le\varepsilon\). By symmetry and convexity of the Gaussian rate objective, the optimum is \(d_1=d_2=\varepsilon/2\), giving \(\log(2\sigma^2/\varepsilon)\). Subtracting the two rates yields \(\log 2\).
\end{proof}

\paragraph{Interpretation.}
Supporting two single-factor queries does not automatically support their joint query at the same tolerance. In audio terms, preserving ``what was said'' and preserving ``who said it'' separately is easier than preserving the joint answer to ``who said that exact phrase?''

\paragraph{Experimental implication.}
Later experiments should compare separate task families, such as transcript-only and speaker-only, against joint families that require simultaneous access to both factors. A systematic horizontal shift between these curves is the empirical synergy penalty predicted above.

\subsection{Experiment-facing consequences}

The theory above prescribes concrete empirical quantities. First, one should estimate a family-level surrogate for the theorem object $\Delta_{\mathcal Q}(Z;X)$. Let $\mathcal F(\mathcal Q)$ denote the partition of the query family $\mathcal Q$ into labeled subfamilies, and let $h$ be the frozen answerer. Define
\begin{equation}
\begin{aligned}
\widehat{\Delta}_{\mathcal Q}^{\mathcal F}(b)
:=
\max_{F\in\mathcal F(\mathcal Q)}
\Bigg[
&\frac{1}{N_F}\sum_{(x,q,y):\,\mathrm{family}(q)=F}
\ell\big(h(Z_b(x),q),y\big)\\
&-
\frac{1}{N_F}\sum_{(x,q,y):\,\mathrm{family}(q)=F}
\ell\big(h(X,q),y\big)
\Bigg]
\end{aligned}
\end{equation}
as the worst-family empirical excess risk at operational budget $b$, and
\begin{equation}
\widehat{\Delta}_{\mathrm{avg}}(b)
:=
\frac{1}{N}\sum_{(x,q,y)\in\mathrm{eval}}
\Big[\ell\big(h(Z_b(x),q),y\big)-\ell\big(h(X,q),y\big)\Big]
\end{equation}
as the dataset-mean empirical excess risk. The corresponding population family-level quantity is a lower-resolution surrogate for $\Delta_{\mathcal Q}(Z;X)$ by the coarsening argument in Theorem~\ref{thm:partition-refinement}, and the empirical family maximum is an upper bound on the empirical dataset mean by $\max\geq\mathrm{mean}$ on the declared partition. The gap $\widehat{\Delta}_{\mathcal Q}^{\mathcal F}(b)-\widehat{\Delta}_{\mathrm{avg}}(b)$ is itself an empirical signal of worst-family damage that dataset-mean metrics do not capture. On multi-family datasets we report both quantities with bootstrap CIs; on single-family datasets, the two estimators coincide by construction.

Second, on synthetic finite-query tasks, Proposition~\ref{app:prop:finite} yields an exact plug-in benchmark $\widehat R^{\star}_{\mathrm{Bayes}}$; on Gaussian factor tasks, Proposition~\ref{app:prop:gaussian} yields a fitted analytic proxy. Third, one should estimate both $\widehat G_{\mathrm{cond}}^{\mathrm{info}}(\varepsilon,\mathcal Q)$ on exact synthetic constructions and $\widehat G_{\mathrm{cond}}^{\mathrm{op}}(\varepsilon,\mathcal Q)$ on real audio. Fourth, one should estimate the model-class gap
\begin{equation}
\widehat\Gamma_{\mathcal F}(Z;\mathcal Q)
=
\max_{q\in\mathcal Q}
\big(\widehat{\mathcal R}^{\star}_{\mathcal F}(q;Z)-\widehat{\mathcal R}^{\star}_{Z}(q)\big)
\end{equation}
by retraining stronger or alternative answerers on the same compressed interface. V1 deferred this quantity; V2 partially estimates it with learned-native and heuristic-baseline architecture-gap summaries (see \S\ref{sec:discussion-gamma}), while the clean cross-interface decomposition remains future work. V1 reports operational frontiers $\widehat R_{\mathcal F}^{\star}$ against a single backbone and treats the observed frontiers as architecture-specific rather than attempting a Bayes-versus-architecture decomposition. Fifth, one should report the monotonicity increments across nested query families and the factor allocations $\widehat d_j^{\star}(q)$ or water levels $\widehat\nu_q$ when latent-factor probes are available. These are exactly the quantities later needed for empirical rate-risk curves, empirical sequence-length-risk curves, plug-in proxies for $\widehat R^{\star}_{\mathrm{Bayes}}$, the gap between query-conditioned and query-agnostic compression, monotonicity across nested query families, and factor-specific budget differences.

\section{Estimator Details}\label{app:estimators}\label{sec:metrics}

This section specifies the empirical estimators used throughout the paper. Two design choices are global. First, excess risk is measured at the level of labeled query families rather than individual queries, because every query in V1 appears only once. Second, operational frontiers are reported on two axes: a token axis for single-cell comparisons and a nominal-budget axis for cross-subset analyses. Both choices are forced by the evaluation setup and are justified below.

\subsection{Partition granularity and the family-level estimator}

The theorem object of central interest is the familywise excess Bayes risk
\begin{equation}
\Delta_{\mathcal Q}(Z;X):=\sup_{q\in\mathcal Q}\big(\mathcal R_Z^{\star}(q)-\mathcal R_X^{\star}(q)\big),
\end{equation}
whose supremum runs over individual queries. Direct empirical estimation of this quantity would require per-query risk estimates $\widehat{\mathcal R}^{\star}_Z(q)$ and $\widehat{\mathcal R}^{\star}_X(q)$. That is impossible in V1 because every query appears exactly once in the evaluation data. The finest partition at which per-partition risks are estimable is therefore the labeled query family.

Let $\mathcal F(\mathcal Q)$ denote the partition of the dataset's queries into labeled families, and let $h$ denote the frozen answerer used for the paired raw/compressed evaluation. For family $F\in\mathcal F(\mathcal Q)$ with $N_F$ samples, the family-level mean excess risk at budget $b$ is
\begin{equation}
\begin{aligned}
\dQF{b}
=
\max_{F\in\mathcal F(\mathcal Q)}
\Bigg[
&\frac{1}{N_F}\sum_{(x,q,y):\,\mathrm{family}(q)=F}
\ell\big(h(Z_b(x),q),y\big)\\
&-
\frac{1}{N_F}\sum_{(x,q,y):\,\mathrm{family}(q)=F}
\ell\big(h(X,q),y\big)
\Bigg].
\end{aligned}
\end{equation}
The corresponding dataset-mean estimator is
\begin{equation}
\davg{b}
=
\frac{1}{N}\sum_{(x,q,y)\in\mathrm{eval}}
\Big[\ell\big(h(Z_b(x),q),y\big)-\ell\big(h(X,q),y\big)\Big].
\end{equation}
At the population level, replacing individual queries by a labeled-family partition gives the ordering proved in Theorem~\ref{thm:partition-refinement}:
\begin{equation}
\Delta^{\rm avg}_{\mu}(Z;X)
\le
\Delta^{\rm fam}_{\mathcal F(\mathcal Q)}(Z;X)
\le
\Delta_{\mathcal Q}(Z;X).
\end{equation}
Empirically, $\dQF{b}-\davg{b}$ is the observed hidden-damage margin for the declared partition. On single-family datasets, the two estimators coincide by construction.

\subsection{Operational frontier estimator}

For compression method $m$, tolerance $\varepsilon$, and excess-risk variant $\widehat\Delta\in\{\davg{b},\dQF{b}\}$, the nominal-axis frontier is
\begin{equation}
\Rfb{\varepsilon}{\mathcal Q; m,\widehat\Delta}
:=
\min\{b\in\mathcal B:\widehat\Delta(b)\le \varepsilon\},
\end{equation}
where $\mathcal B$ is the evaluated budget grid. The token-axis frontier is
\begin{equation}
\Rftok{\varepsilon}{\mathcal Q; m,\widehat\Delta}
:=
\tauhat{\Rfb{\varepsilon}{\mathcal Q; m,\widehat\Delta}},
\end{equation}
where $\tauhat{b}$ is the empirical mean number of audio tokens passed to the backbone at nominal budget $b$.

We interpret $\Rftok{\varepsilon}{\mathcal Q}$ as an operational proxy for the rate-theoretic frontier rather than as a calibrated estimator of $R^{\star}_{\mathrm{Bayes}}(\varepsilon,\mathcal Q)$: our interface is a hard-selected audio-token subset without an entropy-efficient coding layer, so the Shannon-slack bound from Proposition~\ref{app:prop:rate_token} is not expected to be tight.

\subsection{Axis convention}

Single-cell analyses use the token axis, because token count corresponds directly to what the backbone processes. Cross-subset analyses use the nominal-budget axis, because the token axis is subset-dependent: restricting to a family subset changes the per-budget distribution of clip lengths and therefore changes $\tauhat{b}$. In particular, the cumulative-chain analysis of \S\ref{sec:nesting-results}, the factor-overlap analysis of \S\ref{sec:factor-overlap-results}, and the conditioned-gain comparison of \S\ref{sec:gcond-real-audio} are all reported on the nominal axis.

\subsection{Worst-family-constrained frontier}

For a query-family subset $\mathcal Q_k\subseteq\mathcal Q$, the worst-family-constrained nominal frontier is
\begin{equation}
\begin{aligned}
\Rfworst{\varepsilon}{\mathcal Q_k}
:=
\min\Bigg\{b\in\mathcal B:
\max_{F\in\mathcal F(\mathcal Q_k)}
\Bigg[
&\frac{1}{N_{F,k}}\sum_{\mathrm{family}(q)=F}
\ell\big(h(Z_b(x),q),y\big)\\
&-
\frac{1}{N_{F,k}}\sum_{\mathrm{family}(q)=F}
\ell\big(h(X,q),y\big)
\Bigg]
\le \varepsilon\Bigg\}.
\end{aligned}
\end{equation}
By construction this frontier is monotone non-decreasing in the subset argument: if $\mathcal Q_1\subseteq\mathcal Q_2$, then $\Rfworst{\varepsilon}{\mathcal Q_1}\le \Rfworst{\varepsilon}{\mathcal Q_2}$. Along a cumulative chain, the increment
\begin{equation}
\Delta \Rfworst{\varepsilon}{\mathcal Q_k}
=
\Rfworst{\varepsilon}{\mathcal Q_k}-\Rfworst{\varepsilon}{\mathcal Q_{k-1}}
\end{equation}
measures the additional budget required when the new family is added to the requirement set.

\subsection{Operational conditioned gain}

The real-audio operational counterpart of Theorem~\ref{thm:condadv} is defined for either the average estimator or the familywise estimator:
\begin{equation}
\begin{aligned}
\widehat G^{\rm op}_{\rm cond}(\varepsilon,\mathcal Q;\widehat\Delta)
:= {} &
\Rfb{\varepsilon}{\mathcal Q;\texttt{learned\_agnostic},\widehat\Delta}\\
&-
\Rfb{\varepsilon}{\mathcal Q;\texttt{learned\_conditioned},\widehat\Delta},\\
&\widehat\Delta\in\{\davg{\cdot},\dQF{\cdot}\}.
\end{aligned}
\end{equation}
A positive value means that the conditioned selector requires less nominal budget to achieve the same tolerance and estimator than the agnostic selector. The familywise version \(\widehat\Delta=\dQF{\cdot}\) is the deployment sign-off quantity; the average version is retained for V1 historical comparisons. The synthetic exact quantity
\begin{equation}
\Gcondinfo{\varepsilon}{\mathcal Q}=I(X;U)-I(X;U\mid Q)
\end{equation}
is reserved for the finite-alphabet constructions of \S\ref{sec:synthetic-validations}.

The operational conditioned-gain estimator $\Gcondop{\varepsilon}{\mathcal Q}$ defined above compares two selector methods evaluated on the same downstream backbone; we refer to this as the \emph{same-backbone} version. V2 further reports a \emph{cross-backbone} variant
\begin{equation}
\begin{aligned}
\widehat G^{\mathrm{op,cross}}_{\mathrm{cond}}(\varepsilon, \mathcal Q; B)
:=
&\;\Rfb{\varepsilon}{\mathcal Q;\texttt{learned\_agnostic}\text{ on }\texttt{qwen2audio}}\\
&-\Rfb{\varepsilon}{\mathcal Q;\texttt{learned\_conditioned}\text{ on }B}.
\end{aligned}
\end{equation}
which fixes the reference method to Qwen2-Audio's \texttt{learned\_agnostic} selector. The same-backbone version asks ``does conditioning help, holding everything else constant.'' The cross-backbone version asks ``how much does a backbone plus its conditioned selector lose relative to a shared reference backbone with its agnostic selector.'' \Cref{tab:v2_gcond_qwen2audio,tab:v2_gcond_qwen25omni} in \S\ref{sec:v2-gcond-three-seed} report the same-backbone version; \Cref{tab:mmsu_temporal} and \Cref{tab:v21_alpha_best} report the cross-backbone version. The two versions differ only for Qwen2.5-Omni because the Qwen2-Audio cross-backbone case is degenerate (A and B use the same backbone). The numerical separation can be large: for MMSU on Qwen2.5-Omni at $\varepsilon=0.05$, the same-backbone three-seed mean on the keyword \texttt{temporal} family is $+0.0057$, whereas the cross-backbone three-seed mean is $-0.5738$ (see \S\ref{sec:per-family-subanalyses}).

\subsection{Bootstrap protocol}

All CIs in V1 are computed with non-parametric bootstrap resampling at the sample-ID level with $n_{\mathrm{boot}}=10{,}000$, confidence level $0.95$, and seed 42 \cite{efron1993bootstrap}. For single-estimator quantities such as $\davg{b}$, $\dQF{b}$, and $\Rfb{\varepsilon}{\mathcal Q}$, each bootstrap iteration resamples sample IDs with replacement and recomputes the statistic on the resampled set. The 95\% interval is the percentile interval.

For $\Gcondop{\varepsilon}{\mathcal Q}$ we use a paired bootstrap: each resampling draws a multiset of sample IDs once, then computes both the agnostic and conditioned frontiers on that same resampled set before taking the difference. The paired structure preserves sample-level correlation between the two methods. Two-sided $p$-values are computed from the bootstrap replicate distribution as
\begin{equation}
 p = 2\min\big(\Pr[\widehat G^{*}_{\mathrm{cond}}\le 0],\,\Pr[\widehat G^{*}_{\mathrm{cond}}\ge 0]\big).
\end{equation}
The factor-overlap diagnostic uses a paired ratio bootstrap, and the cumulative-chain analysis uses a shared-resampling bootstrap across all chain steps.

For deployment-facing decisions we use the bootstrap endpoints rather than only the point estimate.  Let $U_{95}(\widehat\theta)$ and $L_{95}(\widehat\theta)$ be the upper and lower percentile endpoints.  A compressor passes a tolerance only when the relevant upper endpoint is below tolerance; it fails when the relevant lower endpoint is above tolerance; otherwise the cell is reported as inconclusive.  For conditioned gain, the sign is declared positive only when the lower endpoint is above zero, negative only when the upper endpoint is below zero, and inconclusive otherwise.  These intervals quantify sampling uncertainty; cross-seed standard deviations quantify selector/training stochasticity and are reported separately.  If several evaluation rows are derived from the same audio source, the same formulas apply with bootstrap resampling at the audio-source ID level.

\subsection{Reference conventions}

The canonical raw-audio reference for family $F$ is
\begin{equation}
\widehat R_X(F)=\frac{1}{N_F}\sum_{(x,q,y):\,\mathrm{family}(q)=F}\ell\big(h(X,q),y\big).
\end{equation}
When the canonical uncompressed parquet is missing for qwen2.5-Omni (AudioMCQ and MMSU), we use the $b=1.00$ endpoint of the same method as a \texttt{self\_full} fallback. On qwen2audio cells where both references are available, the difference is bounded by $10^{-4}$.

\subsection{Loss-type conventions}

The primary loss is the 0/1 multiple-choice loss
\begin{equation}
\ell_{0/1}(q,\hat y,y)=\mathbf 1\{\hat y\neq y\},
\end{equation}
which matches the benchmark decision rule and underlies all main-text tables. We also compute the correct-answer negative log-likelihood,
\begin{equation}
\ell_{\mathrm{NLL}}(q,\hat y,y)=-\log P_{\mathrm{backbone}}(y\mid Z,q),
\end{equation}
and use it as a secondary measurement in supporting analyses. The 0/1 and NLL variants yield the same qualitative conclusions throughout V1.

\subsection{Estimator catalogue}

For reference, the empirical estimators used in \S\ref{sec:results} are: $\davg{b}$ (dataset-mean excess loss), $\dQF{b}$ (family-level excess loss), $\Rftok{\varepsilon}{\mathcal Q}$ (token-axis operational frontier), $\Rfb{\varepsilon}{\mathcal Q}$ (nominal-axis operational frontier), $\Rfworst{\varepsilon}{\mathcal Q_k}$ (worst-family-constrained frontier along cumulative chains), $\Gcondop{\varepsilon}{\mathcal Q}$ (operational conditioned gain), $\Gcondinfo{\varepsilon}{\mathcal Q}$ (exact synthetic conditioned gain), and $\tauhat{b}$ (mean audio-token count at nominal budget $b$).

\section{Dataset and Query-Family Details}\label{app:datasets}\label{sec:datasets}

We evaluate TAAC on five audio-question datasets that together span the principal capabilities a large audio language model is expected to exercise: sound-event understanding on environmental audio, multi-capability multiple-choice reasoning on general audio, speech-centric multi-task understanding, music-and-speech-aware reasoning, and logical reasoning under audio-textual prompts. The datasets are chosen to satisfy two conditions simultaneously: each must expose a nontrivial query family to which Theorem~\ref{thm:equiv}'s family-level estimator can be applied, and the union must include at least one single-family dataset against which family-level effects can be tested for artifacts. Three of the five datasets (AudioMCQ-StrongAC, MMSU, and MMAR) carry multiple labeled query families under the keyword classifier of \S\ref{sec:taxonomy} and support the full theorem-level analysis; two (DCASE 2026 dev and BigBench Audio) carry a single labeled family under that classifier and serve as single-family controls at the keyword level. Both DCASE and BigBench Audio are in fact multi-family under their intrinsic dataset-native task metadata (\S\ref{sec:family-gap-results} \Cref{tab:partition-gap-comparison}); the single-family status is therefore a property of the keyword classifier, not of the datasets themselves.

\begin{table}[t]
\centering
\small
\caption{Dataset summary for the five audio-QA benchmarks used in V1 evaluation. The \textbf{Kw fam.} column is the number of distinct \texttt{query\_family} labels assigned by the shared keyword-heuristic classifier of \S\ref{sec:taxonomy}; the \textbf{Native} column is the number of families under each dataset's intrinsic task-label metadata (see \S\ref{sec:datasets} and \Cref{tab:partition-gap-comparison}). The Acc.\ column is qwen2audio uncompressed accuracy at budget fraction $1.00$, read from the per-cell aggregate JSONs. Partial qwen2.5-Omni coverage reflects the DCASE flash-attention CPU-dispatch failure and the missing canonical uncompressed parquets on AudioMCQ and MMSU in V1.}
\label{fig:dataset_summary}
\begin{adjustbox}{max width=0.95\linewidth}
\begin{tabular}{lrrrrp{4.0cm}llc}
\toprule
Dataset & $n$ & Kw fam. & Native & Acc. & Role in paper & qwen2audio & qwen2.5-Omni & E1? \\
\midrule
DCASE 2026 dev & 1,607 & 1 & 6 & 0.4443 & Primary operational target; single-family conditioned-gain test under the keyword partition & Full suite & Heuristics only; DCASE pipeline failed & No \\
AudioMCQ-StrongAC & 19,480 & 4 & 10 & 0.7424 & Primary multi-family headline dataset and selector-training source & Full suite & Heuristics only; no canonical uncompressed reference & Yes \\
MMSU & 5,000 & 6 & 47 & 0.5530 & Most taxonomically rich multi-family benchmark & Full suite & Heuristics only; no canonical uncompressed reference & Yes \\
MMAR & 1,000 & 3 & 9 & 0.4960 & Supplementary multi-family reasoning benchmark & Full suite & Heuristics plus uncompressed reference & Yes \\
BigBench Audio & 1,000 & 1 & 4 & 0.8770 & Text-dominated single-family control under the keyword partition (four BBH task types under native partition) & Full suite & Heuristics plus uncompressed reference & No \\
\bottomrule
\end{tabular}
\end{adjustbox}
\end{table}
\begin{table}[t]
\centering
\small
\caption{Duplicate source-copy dataset summary retained from \texttt{figures/fig7\_dataset\_summary.tex}. Dataset summary for the five audio-QA benchmarks used in V1 evaluation. Family count is the number of distinct \texttt{query\_family} labels assigned by the shared keyword-heuristic classifier. The Acc.\ column is qwen2audio uncompressed accuracy at budget fraction $1.00$, read from the per-cell aggregate JSONs. Partial qwen2.5-Omni coverage reflects the DCASE flash-attention CPU-dispatch failure and the missing canonical uncompressed parquets on AudioMCQ and MMSU in V1.}
\label{tab:dataset-summary-duplicate-source}
\begin{adjustbox}{max width=0.95\linewidth}
\begin{tabular}{lrrrp{4.3cm}llc}
\toprule
Dataset & $n$ & Families & Acc. & Role in paper & qwen2audio & qwen2.5-Omni & E1? \\
\midrule
DCASE 2026 dev & 1,607 & 1 & 0.4443 & Primary operational target; single-family conditioned-gain test & Full suite & Heuristics only; DCASE pipeline failed & No \\
AudioMCQ-StrongAC & 19,480 & 4 & 0.7424 & Primary multi-family headline dataset and selector-training source & Full suite & Heuristics only; no canonical uncompressed reference & Yes \\
MMSU & 5,000 & 6 & 0.5530 & Most taxonomically rich multi-family benchmark & Full suite & Heuristics only; no canonical uncompressed reference & Yes \\
MMAR & 1,000 & 3 & 0.4960 & Supplementary multi-family reasoning benchmark & Full suite & Heuristics plus uncompressed reference & Yes \\
BigBench Audio & 1,000 & 1 & 0.8770 & Text-dominated validity control & Full suite & Heuristics plus uncompressed reference & No \\
\bottomrule
\end{tabular}
\end{adjustbox}
\end{table}

\subsection{DCASE 2026 dev}

DCASE 2026 dev is the development subset of DCASE 2026 Challenge Task~5, Audio-Dependent Question Answering (ADQA) \cite{dcase2026adqa}. Each sample consists of an audio clip and a short natural-language multiple-choice question whose answer must depend on the audio. We use the development split because the official evaluation set is held out by the challenge.

\paragraph{Role in the paper.}
DCASE 2026 is TAAC's primary \emph{operational} target: the dataset is dominated by non-speech audio content, so learned selectors cannot succeed by defaulting to speech-content heuristics, and the ground-truth signal is sensitive to which audio regions are retained under compression. It is also the dataset on which the operational conditioned gain reaches its largest point estimate, $\Gcondop{0.05}{\Qfam}=+0.0993$ budget fraction with paired-bootstrap $p=0.057$; see \S\ref{sec:gcond-real-audio}.

\paragraph{Query family structure.}
Under the keyword classifier of \S\ref{sec:taxonomy}, DCASE 2026's category strings do not match any of the predefined dispatch groups, so the classifier assigns \texttt{query\_family=general} to every sample. Under this partition, DCASE functions as a single-family evaluation: $\dQF{b}$ coincides by construction with $\davg{b}$, and the keyword-level nested-monotonicity analysis collapses to the trivial chain $\{\texttt{general}\}$. However, DCASE does expose a native content taxonomy: the challenge's own \texttt{content\_taxonomy.csv} groups samples into six post-merge clusters (e.g., \texttt{meaning/stress/intonation}, \texttt{speaker/clip/express}, and four smaller clusters; see \S\ref{sec:per-family-subanalyses} and the \texttt{other\_rare} caveat noted there). Under this native partition, the family-level excess-risk gap is $+5.04$ pp at $b=0.20$ with Qwen2-Audio learned-conditioned (\Cref{tab:partition-gap-comparison}), rather than the zero imposed by the keyword classifier. DCASE is therefore a single-family control only with respect to the keyword classifier, not with respect to the dataset's intrinsic structure.

\paragraph{Sample count and audio characteristics.}
The DCASE 2026 dev split contributes $n=1{,}607$ audio-query pairs. Audio durations are variable, with most clips between 5 and 30 seconds; all clips are resampled to 16~kHz mono before chunking. Qwen2-Audio evaluations are complete for all compression methods and budgets. The qwen2.5-Omni DCASE pipeline failed under a flash-attention CPU-dispatch error and is dropped from V1.

\subsection{AudioMCQ-StrongAC}

AudioMCQ-StrongAC is a four-option multiple-choice audio-question dataset derived from AudioMCQ, with the ``StrongAC'' qualifier denoting the strongly controlled subset in which the correct answer is supported by specific audio content rather than by inferable background knowledge \cite{he2025audiomcq}. Each sample consists of a natural audio clip, a natural-language query, and four candidate answers, exactly one of which is correct.

\paragraph{Role in the paper.}
AudioMCQ-StrongAC is the primary dataset for V1's empirical theorem testing. Its native question-type taxonomy (four families, 19,480 evaluation samples) provides the largest multi-family evaluation set in V1 and the one in which the family-level excess-risk gap is most pronounced. The gap $\dQF{b}-\davg{b}$ peaks at $+6.79$ percentage points at $b=0.20$, with a near-peak value of $+6.29$ percentage points at $b=0.40$; this is the headline empirical claim of the paper. AudioMCQ is also the dataset on which the selector networks were trained.

\paragraph{Query family structure.}
AudioMCQ-StrongAC exposes a \texttt{question\_type} field that our shared classifier maps onto four canonical labels: \texttt{general}, \texttt{music}, \texttt{speech\_content}, and \texttt{temporal}. The semantic meaning of each family is as follows.
\begin{itemize}
    \item \textbf{\texttt{speech\_content}} ($n=10{,}505$, 53.9\%): queries about linguistic or lexical content of speech-what was said, transcription, phrase-level understanding, or the language being spoken.
    \item \textbf{\texttt{general}} ($n=6{,}085$, 31.2\%): fallback bucket for samples whose native category string matches none of the specific keyword groups. On AudioMCQ this family is the bottleneck family under the nested-monotonicity analysis: adding \texttt{general} to the cumulative chain yields the largest increase in $\Rfworst{0.05}{\mathcal Q_k}$, from 0.470 to 0.697 ($+0.227$).
    \item \textbf{\texttt{music}} ($n=1{,}724$, 8.9\%): queries about musical content such as instruments, genre, melody, rhythm, or song-level attributes.
    \item \textbf{\texttt{temporal}} ($n=1{,}165$, 6.0\%): queries about time, order, counting, duration, or timing relationships between events in the audio.
\end{itemize}

\paragraph{Sample count and audio characteristics.}
The AudioMCQ-StrongAC training split contains 19,480 samples with precomputed per-chunk LOO-NLL oracle relevance targets; the evaluation split used here also contains 19,480 samples, with the same per-family counts listed above. Clips are variable-duration (typically 3-30~s), sampled at 16~kHz mono, padded to one chunk when shorter than 1~s and truncated to 120 chunks when longer than 120~s.

\subsection{MMSU}

MMSU is a multi-task benchmark for spoken-language understanding and reasoning, designed to evaluate fine-grained speech perception and complex reasoning in natural speech \cite{wang2025mmsu}. Each sample is a four-option multiple-choice question paired with an audio clip. The benchmark spans linguistic content, paralinguistic inference, speaker attributes, and speech-conditioned scene and event reasoning.

\paragraph{Role in the paper.}
MMSU is the most taxonomically rich dataset we evaluate, exposing six distinct keyword families and 47 native \texttt{task\_name} values. Under the keyword partition it is used to show that the family-level excess-risk gap is not an AudioMCQ-specific artifact and to provide the richest keyword-level cumulative-chain analysis in \S\ref{sec:nesting-results}. Under the 47-task native partition, MMSU is also the dataset on which the keyword partition most severely under-resolves the intrinsic structure - a factor of roughly 8$\times$ - and correspondingly the dataset on which the native-partition family-level gap is largest in absolute terms ($+29.17$ pp at $b=0.20$; \Cref{tab:partition-gap-comparison}). The per-task analysis in \S\ref{sec:per-family-subanalyses} identifies \texttt{intonation\_perception} as the single most-impacted native task on Qwen2.5-Omni under the cross-backbone conditioned-gain estimator (\Cref{fig:mmsu_native_gcond}).

\paragraph{Query family structure.}
Our classifier operates on the concatenation of MMSU's native task name and category string. Six canonical labels appear:
\begin{itemize}
    \item \textbf{\texttt{general}} ($n=2{,}886$, 57.7\%): the largest fallback bucket.
    \item \textbf{\texttt{paralinguistic}} ($n=803$, 16.1\%): emotion, speaker identity, accent, gender, prosody, and pitch.
    \item \textbf{\texttt{speech\_content}} ($n=648$, 13.0\%): linguistic content of speech.
    \item \textbf{\texttt{sound\_event}} ($n=454$, 9.1\%): discrete acoustic events and sources. Despite the name, this keyword family is not semantically pure: approximately 115 of 454 samples route to a user-intent-classification semantic cluster under e5-large-v2 embedding analysis (supporting file: \texttt{partitions/semantic/mmsu\_semantic\_partition.csv}; see \S\ref{sec:discussion-semantic}).
    \item \textbf{\texttt{sound\_scene}} ($n=110$, 2.2\%): ambient scene and environment.
    \item \textbf{\texttt{temporal}} ($n=99$, 2.0\%): temporal structure of the audio.
\end{itemize}
The two smallest families, \texttt{sound\_scene} and \texttt{temporal}, have sample counts below 200 and therefore produce wide bootstrap CIs in family-wise frontier estimates.

While the keyword partition groups MMSU queries into six families, the MMSU native metadata carries a \texttt{task\_name} field with 47 distinct values in the release we evaluate (e.g., \texttt{word\_identification}, \texttt{age\_prediction}, \texttt{intonation\_perception}, \texttt{pitch\_comparison}, \texttt{disfluency\_detection}, \texttt{volume\_comparison}, \texttt{couplet\_matching}, \texttt{dialogue\_turn\_counting}, among others). The keyword classifier's 6-family partition is therefore approximately 8$\times$ under-resolution of the dataset's intrinsic task structure. \S\ref{sec:family-gap-results} \Cref{tab:partition-gap-comparison} shows that this under-resolution materially suppresses the measured family-level gap: under the 47-task native partition, the MMSU worst-family gap at $b=0.20$ is $+29.17$ pp, compared with the $+1.56$ pp measured under the keyword partition on the same Qwen2-Audio learned-conditioned cells (supporting CSV: \texttt{tables/partition\_delta\_q\_summary.csv}, rows with \texttt{dataset='mmsu'}, \texttt{backbone='qwen2audio'}, \texttt{method='learned\_conditioned'}).

\paragraph{Sample count and audio characteristics.}
MMSU contributes $n=5{,}000$ evaluation samples. The audio is speech-dominated, variable-duration, and resampled to 16~kHz mono. The speech dominance is load-bearing for one observation in \S\ref{sec:nesting-results}: speech-related families remain relatively well served under aggressive compression because speech dominates the V1 selector's mel-spectrogram feature energy.

\subsection{MMAR}

MMAR is a benchmark for deep reasoning in speech, audio, music, and their mixtures \cite{ma2025mmar}. We use MMAR's four-option multiple-choice audio-question pairs, treating the textual query as the standard natural-language question and evaluating only the audio-conditioned answer.

\paragraph{Role in the paper.}
MMAR is a supplementary multi-family dataset that demonstrates the generality of the family-level gap and the nested-monotonicity behavior on a third independently constructed benchmark. It is also the dataset on which the only two numerical monotonicity violations occur, both at the smallest subset and the tightest tolerance, with point-estimate drops of only 0.001-0.003 budget fraction.

\paragraph{Query family structure.}
Our classifier operates on the concatenation of MMAR's modality and category fields. Three labels appear:
\begin{itemize}
    \item \textbf{\texttt{speech\_content}} ($n=618$, 61.8\%): linguistic content of speech.
    \item \textbf{\texttt{music}} ($n=217$, 21.7\%): music-oriented reasoning.
    \item \textbf{\texttt{general}} ($n=165$, 16.5\%): fallback bucket.
\end{itemize}

\paragraph{Sample count and audio characteristics.}
MMAR contributes $n=1{,}000$ evaluation samples. Audio is variable-duration, 16~kHz mono after resampling, with a mixture of speech, music, and mixed-content clips broadly consistent with the three-family taxonomy above.

\subsection{BigBench Audio}

BigBench Audio is an audio version of a subset of BIG-Bench Hard questions \cite{artificialanalysis2024bigbenchaudio,suzgun2022bbh}. Each item pairs an audio rendering of a reasoning prompt with a multiple-choice answer. In practice, many questions can be answered largely from the textual prompt alone, with the audio providing supporting rather than load-bearing information.

\paragraph{Role in the paper.}
BigBench Audio is a text-dominated dataset where single-method per-backbone operational effects of compression are small: under the same-backbone conditioned-gain estimator (\S\ref{sec:metrics}), both Qwen2-Audio and Qwen2.5-Omni show near-null behavior on every axis. However, the dataset is not a universal null control. Under the cross-backbone estimator (\S\ref{sec:v2-gcond-three-seed}), Qwen2.5-Omni's \texttt{formal\_fallacies} task exhibits a large negative conditioned gain (three-seed mean $-0.8527$, std 0.025; see \S\ref{sec:per-family-subanalyses} and \Cref{tab:bigbench_cross_backbone}), which we discuss further in \S\ref{sec:discussion-small-gcond}.

\paragraph{Query family structure.}
BigBench Audio's native metadata carries a \texttt{task\_name} field identifying four BIG-Bench Hard task types: \texttt{formal\_fallacies}, \texttt{navigate}, \texttt{object\_counting}, and \texttt{web\_of\_lies}, with 250 samples each. The keyword classifier of \S\ref{sec:taxonomy} does not match any of these to its predefined dispatch, so all samples are routed to \texttt{general} under the keyword partition. Under a native-task partition evaluated at $b=0.20$ with Qwen2-Audio learned-conditioned and 0/1 loss, three of four families exhibit per-sample mean excess of $-0.123$ (compression improves predictions), while \texttt{formal\_fallacies} exhibits $+0.409$ (compression harms predictions); see \Cref{tab:bigbench_native}. The near-null aggregate in \Cref{tab:gcond-other} is thus a cancellation of four opposing effects rather than four neutral effects. Correspondingly, \S\ref{sec:family-gap-results} reports a family-level excess-risk gap of $+39.90$ percentage points at $b=0.20$ under the native partition, substantially larger than the keyword-aggregate of $+0$. BigBench Audio is therefore a ``single-family control'' only with respect to the keyword classifier, not with respect to the dataset's intrinsic structure.

\paragraph{Sample count and audio characteristics.}
BigBench Audio contributes $n=1{,}000$ evaluation samples. Short, low-information audio clips are common, and the dataset's reasoning burden is dominated by the textual prompt.

\begin{table}[t]
    \caption{BigBench Audio per-family per-sample mean excess at $b=0.20$ with Qwen2-Audio \texttt{learned\_conditioned} and 0/1 loss. The four native families partition BigBench Audio's 1000 samples into equal-sized 250-sample BIG-Bench Hard task types. Three families exhibit negative mean excess (compression improves predictions); one exhibits strongly positive mean excess. The dataset-aggregate near-null reported in \Cref{tab:gcond-other} is the weighted mean of these four opposing effects. Supporting CSV: \texttt{metrics/partition\_delta\_q\_long.csv} rows with \texttt{dataset='bigbench\_audio'}, \texttt{partition='native\_fine'}, \texttt{backbone='qwen2audio'}, \texttt{method='learned\_conditioned'}, \texttt{budget=0.20}.}
    \centering
    \small
    \begin{tabular}{lrr}
        \toprule
        Native task & $n$ & per-sample mean excess \\
        \midrule
        \texttt{formal\_fallacies}  & 250 & $+0.409$ \\
        \texttt{navigate}           & 250 & $-0.123$ \\
        \texttt{object\_counting}   & 250 & $-0.123$ \\
        \texttt{web\_of\_lies}      & 250 & $-0.123$ \\
        \midrule
        dataset aggregate           & 1000 & $+0.010$ \\
        \bottomrule
    \end{tabular}
    \label{tab:bigbench_native}
\end{table}

\subsection{Query family taxonomy}
\label{sec:taxonomy}

All query-family labels in this paper are assigned by a single classifier function \texttt{infer\_family(raw\_cat)} defined in \texttt{src/taac/datasets/mcq\_parser.py}. The function lowercases its input and dispatches in a fixed order through seven keyword groups; the first group that matches wins, and inputs matching no group fall through to \texttt{general}. The dispatch order is:
\begin{enumerate}[leftmargin=1.5em]
    \item \textbf{\texttt{speech\_content}}: \{speech, asr, transcript, spoken, language, word, phrase\}
    \item \textbf{\texttt{paralinguistic}}: \{emotion, speaker, accent, gender, paralinguis, prosody, pitch\}
    \item \textbf{\texttt{sound\_event}}: \{event, detection, sound\_event, source\}
    \item \textbf{\texttt{sound\_scene}}: \{scene, environment, ambient, location\}
    \item \textbf{\texttt{music}}: \{music, instrument, genre, melody, rhythm, song\}
    \item \textbf{\texttt{temporal}}: \{temporal, count, timing, order, when, duration, time\}
    \item \textbf{\texttt{general}}: fallback bucket.
\end{enumerate}
Because the classifier is shared, each canonical label has the same definition on every dataset; what differs is the source field on which the classifier operates.

The taxonomy is a keyword-heuristic partition of each dataset's native category strings, not a semantic partition of query content. A query whose category string contains the word ``when'' is routed to \texttt{temporal} regardless of whether the question actually requires temporal reasoning, and a query whose category string contains ``speech'' is routed to \texttt{speech\_content} regardless of the specific speech property at issue. This choice makes the V1 family-level findings conservative: a more refined semantic partition can only increase the $\max_{f\in\mathcal F(\mathcal Q)}$ aggregation that defines $\dQF{b}$ while leaving $\davg{b}$ unchanged, and it would typically reveal more factor overlap rather than less. We therefore interpret the observed family-level gap and frontier-overlap ratio as lower bounds on the corresponding semantic quantities. This theoretical lower-bound relationship is confirmed empirically in \S\ref{sec:family-gap-results}: under dataset-native task partitions (e.g., MMSU's 47 \texttt{task\_name} values, BigBench's 4 BBH task types), measured family-level gaps are 1.2$\times$ to 20$\times$ larger than under the keyword partition at every tested budget (\Cref{tab:partition-gap-comparison}).

The family-label robustness of the family-level findings has been quantified directly. Under 20\% random label flips with 100 Monte Carlo replicates, the bottleneck-family identity under the semantic partition (\S\ref{sec:discussion-semantic}) is preserved in 84\%-100\% of replicates on MMSU, MMAR, and AudioMCQ-StrongAC, compared with 39\%-57\% under the keyword partition (supporting CSV: \texttt{metrics/label\_noise\_sensitivity\_long.csv}). Label noise therefore affects the observed family structure most under the keyword partition and least under the semantic partition, indicating that finer semantic partitioning both increases the measured gap and reduces its sensitivity to label noise. An empirical instance of the keyword classifier's conservatism: on AudioMCQ-StrongAC, a single semantic cluster of queries about ``sequence'' or ``timing'' of sounds ($n=1619$ per e5-large-v2 $+$ $k$-means clustering of query text) splits across keyword families as $\{$\texttt{general}: 500, \texttt{music}: 54, \texttt{speech\_content}: 64, \texttt{temporal}: 1020$\}$, so approximately 500 semantically-temporal samples are routed to \texttt{general} under the keyword dispatch.

\section{Experimental Protocols}\label{app:protocols}\label{sec:setup}

This section specifies the backbones, selector architectures, training recipe, evaluation protocol, and inference-time compressor configuration used in V1. Under the no-compromise policy, we surface three methodology caveats explicitly: truncated selector training, the large apparent parameter asymmetry between agnostic and conditioned selectors being dominated by the query embedding table, and the V1 selector's use of 768-dimensional mel-spectrogram features rather than backbone audio-tower features. These caveats interact only with the operational conditioned-gain result in \S\ref{sec:gcond-real-audio}; the family-level gap, nested-monotonicity, and factor-overlap findings are about theorem-aligned empirical quantities and are invariant to the particular selector instantiation.

\subsection{Backbones}

V1 uses two open-source audio-language model backbones.

\paragraph{Qwen2-Audio-7B-Instruct.}
Qwen2-Audio-7B-Instruct is the primary backbone for V1 \cite{chu2024qwen2audio}. The model combines an audio compressor with a Qwen2 language decoder and supports direct textual responses conditioned on audio. All compression methods-\texttt{uncompressed}, \texttt{uniform}, \texttt{random}, \texttt{energy\_vad}, \texttt{learned\_agnostic}, and \texttt{learned\_conditioned}-are evaluated against qwen2audio on all five datasets. Decoder generation is capped at \texttt{max\_new\_tokens=5} because every benchmark is four-option multiple choice.

\paragraph{Qwen2.5-Omni-7B.}
Qwen2.5-Omni-7B is the secondary backbone \cite{xu2025qwen25omni}. It provides a cross-backbone validity check on MMAR, MMSU, AudioMCQ, and BigBench Audio, but the DCASE pipeline failed due to a flash-attention CPU-dispatch error that returned \texttt{predicted\_label=ERROR} on all processed rows. V1 therefore treats qwen25omni only as a validity-control backbone; the learned-selector analyses are qwen2audio-only.

\paragraph{Architecture gap.}
Proposition~\ref{app:prop:modelgap} isolates a model-class architecture gap $\Gamma_{\mathcal F}(Z;\mathcal Q)$. V1 does not estimate $\widehat\Gamma_{\mathcal F}$ because only one backbone has complete learned-selector coverage; V2 partially delivers it through learned-native and heuristic-baseline summaries. We therefore treat V1 operational frontiers as architecture-specific and use \S\ref{sec:discussion-gamma} for the V2 architecture-gap caveat.

\subsection{Selector compression methods}\label{sec:compression-methods}

V1 evaluates two learned selectors against three heuristic baselines. All five methods operate on the same interface: segment the audio into 1-second chunks, assign a scalar relevance score to each chunk (or a deterministic selection rule for the heuristics), and retain the top-$k$ chunks where $k=\max(1,\lfloor bN\rfloor)$ for budget fraction $b$ and total chunk count $N$.

\begin{itemize}
    \item \textbf{\texttt{uncompressed}} passes the full audio to the backbone and defines the canonical raw-audio reference.
    \item \textbf{\texttt{uniform}} retains evenly spaced chunks at fixed density.
    \item \textbf{\texttt{random}} selects chunks uniformly at random without replacement, using a fixed per-sample seed.
    \item \textbf{\texttt{energy\_vad}} ranks chunks by RMS energy and keeps the top-$k$.
    \item \textbf{\texttt{learned\_agnostic}} and \textbf{\texttt{learned\_conditioned}} are small MLP selectors trained on precomputed LOO-NLL oracle relevance targets. The only architectural difference is whether the scoring network sees the query embedding at inference time.
\end{itemize}

\subsection{V1 selector architecture and training}
\label{sec:selector-arch}

Both learned selectors are implemented as instances of the same \texttt{LearnedSelector} class and consume precomputed per-chunk features of dimensionality 768. Those 768-dimensional features are pooled log-mel-spectrogram representations rather than the 1280-dimensional embeddings produced by qwen2audio's own audio tower. This detail is load-bearing: V1 selectors operate on a weaker acoustic representation than the backbone itself uses at inference, so V1 tests a conservative version of the answer-preserving compression problem.

% \begin{figure*}[t]
%     \centering
%     \includegraphics[width=0.94\textwidth]{figures/fig6_selector_architecture.pdf}
%     \caption{V1 selector architecture. The agnostic selector and conditioned selector share a chunk compressor mapping 768-dimensional log-mel-spectrogram features to 128-dimensional chunk embeddings. The conditioned variant adds a query compressor whose parameter count is dominated by the $151{,}644\times 128$ token embedding table; 98.0\% of the conditioned selector's parameters live in that table. The non-embedding scoring core is 396,014 parameters, only $1.5\times$ the agnostic selector's size. V1 uses mel-spectrogram chunk features rather than the 1280-dimensional audio-tower embeddings qwen2audio produces; V2 retraining tests the consequences of this choice.}
%     \label{fig:selector_arch}
% \end{figure*}

\paragraph{Chunk compressor and scoring heads.}
The chunk compressor is a two-layer MLP,
\begin{equation}
\mathrm{Linear}(768,256)\to\mathrm{GELU}\to\mathrm{LayerNorm}(256)\to\mathrm{Linear}(256,128),
\end{equation}
with 229,760 parameters excluding the LayerNorm affine terms. The agnostic scoring head applies
\begin{equation}
\mathrm{Linear}(128,256)\to\mathrm{GELU}\to\mathrm{LayerNorm}(256)\to\mathrm{Linear}(256,1),
\end{equation}
plus a learned scalar \texttt{pos\_scale} multiplying a normalized temporal-position bias. The conditioned scoring head uses the same structure after concatenating the 128-dimensional chunk embedding with a 128-dimensional query embedding.

\paragraph{Query compressor and parameter asymmetry.}
The conditioned selector adds a query compressor comprising a token embedding \texttt{Embedding(151644,128)}, a positional embedding \texttt{Embedding(256,128)}, and a two-layer MLP applied to the mask-aware mean-pooled token-plus-position embedding. Direct checkpoint inspection gives the following counts: \texttt{learned\_agnostic} has 263,066 parameters across 13 tensors; \texttt{learned\_conditioned} has 19,806,446 parameters across 21 tensors, of which 19,410,432 live in the query token embedding table. The apparent 75$\times$ parameter asymmetry is therefore almost entirely an embedding-table effect; the non-embedding ``scoring core'' is only $1.51\times$ larger for the conditioned selector.

\paragraph{Training data and target.}
Both selectors are trained exclusively on the 19,480-sample AudioMCQ-StrongAC training split. The training manifest stores a per-sample \texttt{chunk\_relevance} vector whose provenance field is \texttt{chunk\_relevance\_source=loo\_nll} in every row. For each chunk, the target relevance is the increase in qwen2audio's NLL on the correct answer when that chunk is removed from the audio. At training time the raw relevance vector is normalized to a probability distribution over chunks and used as the target for the selector's soft selection distribution.

\paragraph{Loss and optimization.}
The training loss is
\begin{equation}
\mathcal L
=
\mathrm{KL}(\mathrm{relevance}\parallel \mathrm{selection})
+0.1\,\mathrm{MSE}(\mathrm{actual\_count},\mathrm{target\_count})
-0.01\,H(\mathrm{selection}),
\end{equation}
where the entropy term prevents early collapse. The selectors use AdamW with learning rate $10^{-4}$, weight decay $0.01$, batch size 8, gradient accumulation 4, cosine decay with warmup ratio 0.05, and gradient clipping at norm 1.0. The differentiable top-$k$ path uses a Gumbel-softmax perturbation of the chunk scores \cite{herrmann2020channel}; inference uses exact hard top-$k$.

\paragraph{V1 training truncation.}
The nominal V1 schedule is ten epochs, but seed-42 training terminated early under patience-3 stopping on a Gumbel-noisy validation signal. The agnostic selector's final checkpoint was saved at step 2,184 (35.9\% of nominal schedule), and the conditioned selector's at step 1,638 (26.9\%). The conditioned selector is therefore more undertrained than the agnostic selector, which biases V1 against finding a conditioned-compression advantage. V2 corrects this with patience 8 and a longer schedule.

\subsection{Evaluation protocol}

V1 evaluation is run as a per-backbone, per-dataset, per-method, per-budget sweep that produces one parquet file per cell. Each row stores \texttt{sample\_id}, \texttt{query\_family}, the 0/1 multiple-choice loss, the NLL on the correct answer, the realized \texttt{num\_audio\_tokens}, and the predicted answer label. Every downstream analysis in \S\ref{sec:metrics} and \S\ref{sec:results} is computed from these parquets.

The canonical raw-audio reference is the mean backbone loss on the uncompressed audio, computed per family. This reference is available for qwen2audio on all five datasets and for qwen2.5-Omni on MMAR and BigBench Audio. On the two qwen25omni cells where the canonical uncompressed parquet is absent (AudioMCQ and MMSU), we use the method's own $b=1.00$ endpoint as a \texttt{self\_full} fallback. Empirically, the absolute difference between the canonical reference and the fallback is bounded by $10^{-4}$ on qwen2audio cells where both are available.

\subsection{Chunking and top-\texorpdfstring{$k$}{k} mapping}

Audio is segmented into 1-second non-overlapping chunks at 16~kHz. Although the stored configuration block lists \texttt{hop\_duration\_s=0.5}, direct inspection of the training manifest confirms non-overlapping chunking: a 10-second sample contains exactly 10 chunks, and evaluation parquets match the non-overlapping rule $N=\lceil D\rceil$ after padding short clips and truncating clips longer than 120 seconds.

For budget fraction $b\in[0,1]$ and audio with $N$ chunks, the interface retains
\begin{equation}
k=\max(1,\lfloor bN\rfloor)
\end{equation}
chunks. The floor-of-one is important for very short clips: at $b=0.05$ on a 10-chunk clip, the selector still retains one chunk; at $b=1.00$, all chunks are retained and the selection becomes a no-op.

\subsection{Budget and tolerance grids}

The main-body budget grid is
\begin{equation}
b\in\{0.05,0.10,0.20,0.40,1.00\},
\end{equation}
spanning two orders of magnitude of compression while reserving $b=1.00$ as the uncompressed endpoint. An expanded appendix grid
\begin{equation}
b\in\{0.01,0.025,0.05,0.10,0.20,0.40,0.60,0.80,1.00\}
\end{equation}
provides additional resolution at both extremes. Operational frontiers are reported at tolerances
\begin{equation}
\varepsilon\in\{0.01,0.02,0.05\}.
\end{equation}
The $\varepsilon=0.05$ setting is the primary one in the main text because it corresponds to roughly 5 percentage points of excess 0/1 loss.

\subsection{Summary of V1 methodology caveats}

We close \S\ref{sec:setup} by consolidating the three load-bearing V1 caveats.
\begin{enumerate}[leftmargin=1.5em]
    \item \textbf{Truncated selector training.} Both selectors terminate well short of the nominal schedule, and the conditioned selector terminates earlier than the agnostic selector.
    \item \textbf{Parameter asymmetry dominated by the query embedding.} The conditioned selector's large parameter count is almost entirely the token embedding table; its scoring core is only modestly larger than the agnostic selector's.
    \item \textbf{Mel-spectrogram input features.} V1 selectors operate on 768-dimensional mel features rather than the backbone's native 1280-dimensional audio-tower embeddings.
\end{enumerate}
None of these caveats affect the structural findings on the family-level excess-risk gap, nested monotonicity, or factor overlap; they interact with the operational conditioned-gain estimate in \S\ref{sec:gcond-real-audio}.

\section{Additional Results and Ablations}\label{app:additional-results}\label{sec:results}

This section presents the empirical instantiation of the theory developed in \S\ref{sec:theory-main}. We begin with synthetic tasks where the theorem predictions are verifiable at machine precision, then move to the headline real-audio result on the family-level excess-risk gap, then to the nested-monotonicity and factor-overlap diagnostics. We next report the V1 single-seed conditioned-gain result and then extend it with V2 three-seed, cross-backbone, and training-recipe ablations. The interpretive chain \S\ref{sec:synthetic-validations} $\rightarrow$ \S\ref{sec:factor-overlap-results} $\rightarrow$ \S\ref{sec:gcond-real-audio} $\rightarrow$ \S\ref{sec:v2-gcond-three-seed} is the tightest theorem-experiment bridge in the paper.

\subsection{Synthetic validations of finite, Gaussian, and conditioned-advantage results}
\label{sec:synthetic-validations}

We first validate three theoretical constructions on synthetic tasks where the Bayes frontier is analytically computable.

\paragraph{Finite-alphabet frontier.}
On finite-query synthetic tasks with controlled alphabet sizes and loss structure, the exact Bayes frontier obtained by solving the convex program of Proposition~\ref{app:prop:finite} matches the closed-form optimum to machine precision.

\paragraph{Gaussian latent allocation.}
On Gaussian factor tasks with controlled variances $\sigma_j^2$ and query weights $\alpha_{qj}$, numerical optimization reproduces the weighted reverse-water-filling solution predicted by Proposition~\ref{app:prop:gaussian}, including the water levels and per-coordinate distortions.

\paragraph{Conditioned-compression separation.}
For the strict-separation construction with $X=(V_1,V_2,W)$, $Q\in\{q_1,q_2\}$, $P(Q=q_1)=\lambda$, and $Y_{q_i}=V_i$ under zero-one loss, the theorem predicts
\begin{equation}
R^{\star}_{\mathrm{Bayes}}(0,\{q_1,q_2\})=H(V_1,V_2),\qquad
R^{\star}_{\mathrm{Bayes,cond}}(0,\{q_1,q_2\})=\lambda H(V_1)+(1-\lambda)H(V_2),
\end{equation}
and therefore
\begin{equation}
G_{\mathrm{cond}}(0)= (1-\lambda)H(V_1)+\lambda H(V_2).
\end{equation}
We verify this prediction by exact enumeration across six $(k_1,k_2)$ configurations and eleven values of $\lambda$, for a total of 66 cells. Every enumerated cell matches the analytic form to better than $10^{-9}$ bits.

\begin{figure*}[t]
    \centering
    \includegraphics[width=\textwidth]{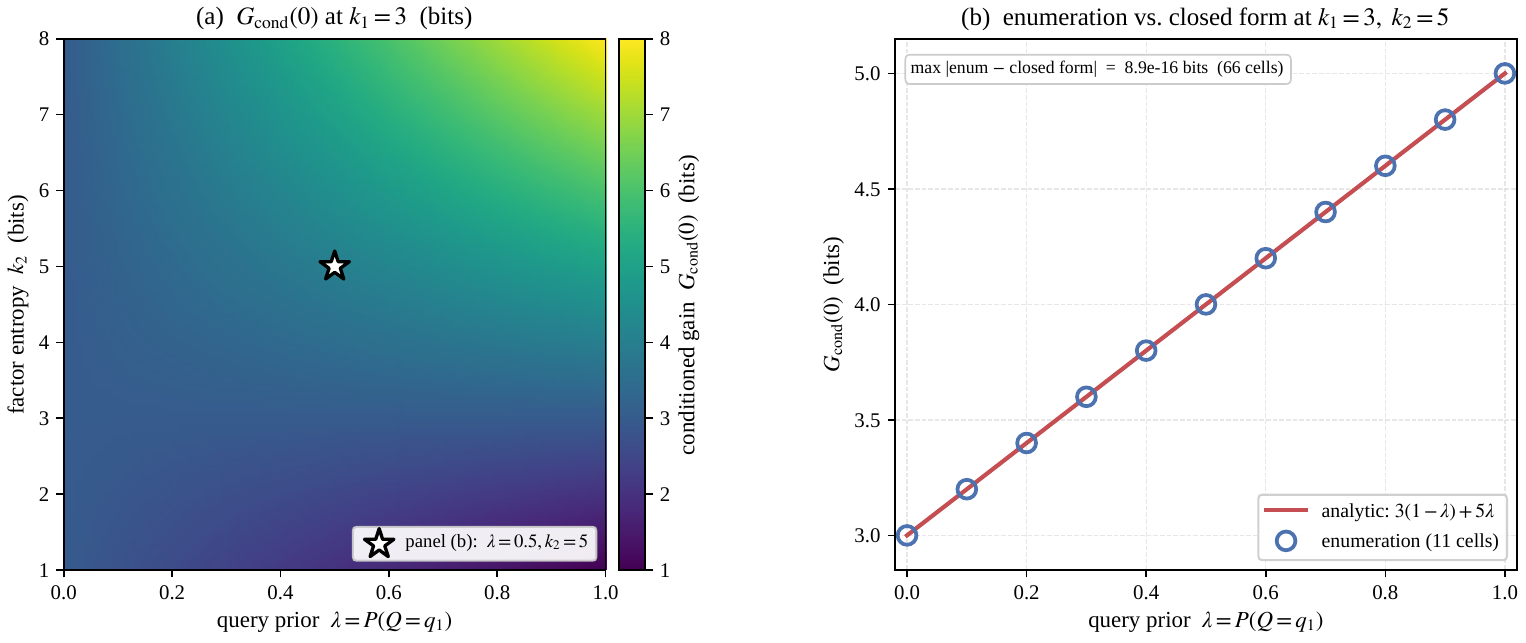}
    \caption{Synthetic verification of Theorem~\ref{thm:condadv}'s strict-separation construction. \textbf{Left:} analytic conditioned gain $G_{\mathrm{cond}}(0)=(1-\lambda)k_1+\lambda k_2$ bits across the query prior $\lambda$ and the second factor's entropy $k_2$, at fixed $k_1=3$. \textbf{Right:} exact enumeration points overlaid on the closed-form line $3(1-\lambda)+5\lambda$ at $(k_1,k_2)=(3,5)$. Across 66 test cells the enumerated gains match the closed form to within $10^{-9}$ bits.}
    \label{fig:gcond_synthetic}
\end{figure*}

The synthetic verification is important for the remainder of the paper: it establishes that the conditioned gain is large in exactly the factor-disjoint regime described by Theorem~\ref{thm:condadv}, and therefore motivates reading the real-audio operational results through the measured factor structure of the datasets.

\subsection{Family-level excess risk gap}
\label{sec:family-gap-results}

The headline empirical finding of V1 is that the family-level estimator $\dQF{b}$ is consistently and substantially larger than the dataset-mean estimator $\davg{b}$ on every multi-family dataset and every selector family. The gap is therefore a property of the theorem quantity on real data, not a property of any specific compressor.

\begin{figure*}[t]
    \centering
    \includegraphics[width=\textwidth]{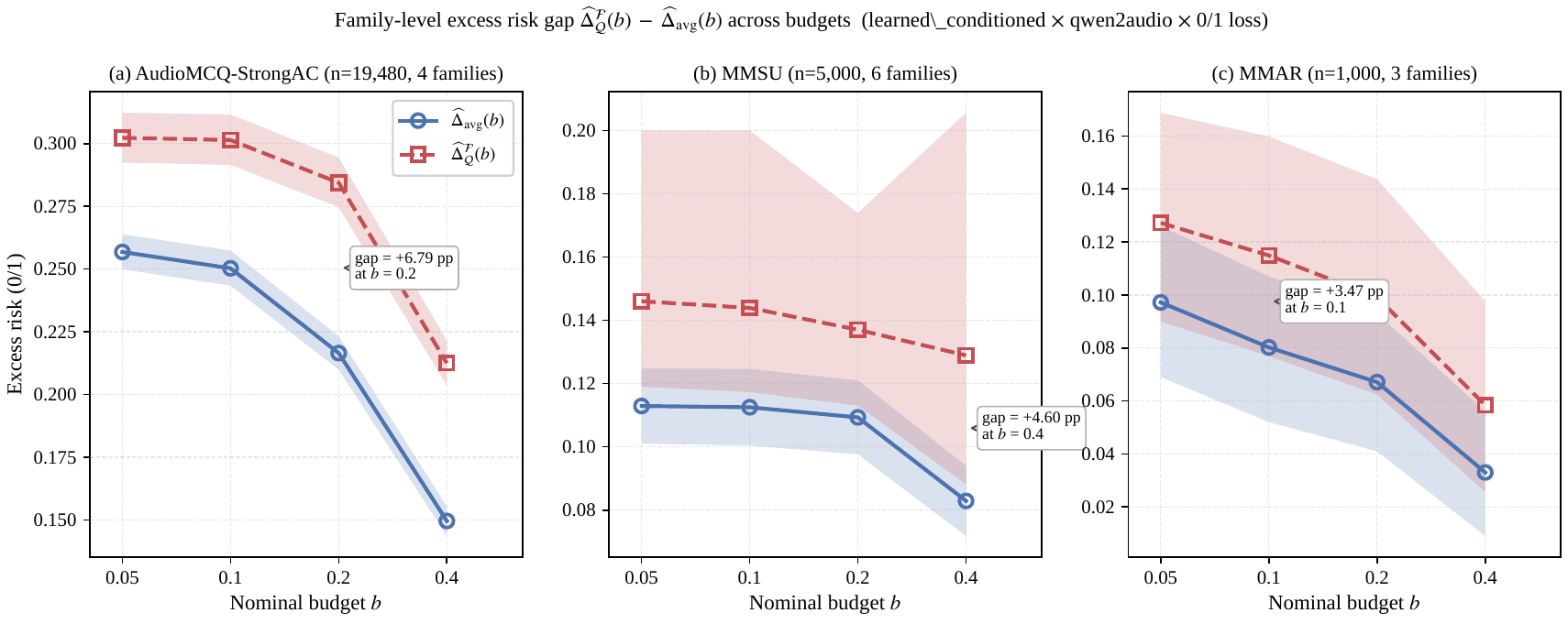}
    \caption{Empirical family-level excess-risk gap across the three multi-family datasets. Each panel shows the dataset-mean excess risk $\davg{b}$ (blue, solid) and the family-level worst-family excess risk $\dQF{b}$ (red, dashed) as a function of nominal budget $b$, with 95\% bootstrap confidence ribbons. The gap is positive at every budget, reaching $+6.79$ percentage points on AudioMCQ-StrongAC at $b=0.20$. Learned conditioned selector, qwen2audio backbone, and 0/1 loss. On single-family datasets (DCASE and BigBench Audio, not shown), $\davg{b}\equiv\dQF{b}$ by construction.}
    \label{fig:delta_q_gap}
\end{figure*}

\paragraph{AudioMCQ-StrongAC.}
Under \texttt{learned\_conditioned} on qwen2audio with the canonical uncompressed reference and 0/1 loss, the family-level gap peaks at $b=0.20$ and remains above six points at $b=0.40$.

\begin{table}[t]
    \caption{Excess-risk summary on AudioMCQ-StrongAC: dataset-mean $\davg{b}$ versus worst-family $\dQF{b}$ at four budgets.}
    \centering
    \small
    \begin{tabular}{cccc}
        \toprule
        Budget $b$ & $\davg{b}$ & $\dQF{b}$ & gap (pp) \\
        \midrule
        0.05 & 0.2568 [0.2499, 0.2639] & 0.3023 [0.2924, 0.3123] & +4.55 \\
        0.10 & 0.2503 [0.2434, 0.2574] & 0.3014 [0.2915, 0.3116] & +5.11 \\
        0.20 & 0.2165 [0.2099, 0.2232] & 0.2844 [0.2747, 0.2944] & +6.79 \\
        0.40 & 0.1496 [0.1436, 0.1555] & 0.2125 [0.2035, 0.2216] & +6.29 \\
        \bottomrule
    \end{tabular}
    \label{tab:gap-audiomcq}
\end{table}

\paragraph{MMSU.}
MMSU exhibits the same phenomenon, albeit with wider family-level CIs because the \texttt{temporal} and \texttt{sound\_scene} families are small.

\begin{table}[t]
    \caption{Family-level excess-risk gap on MMSU under the same protocol as \Cref{tab:gap-audiomcq}.}
    \centering
    \small
    \begin{tabular}{cccc}
        \toprule
        Budget $b$ & $\davg{b}$ & $\dQF{b}$ & gap (pp) \\
        \midrule
        0.05 & 0.1129 [0.1010, 0.1248] & 0.1460 [0.1190, 0.2000] & +3.31 \\
        0.10 & 0.1125 [0.1004, 0.1246] & 0.1439 [0.1173, 0.2000] & +3.14 \\
        0.20 & 0.1093 [0.0976, 0.1210] & 0.1370 [0.1130, 0.1739] & +2.77 \\
        0.40 & 0.0829 [0.0718, 0.0940] & 0.1289 [0.0882, 0.2056] & +4.59 \\
        \bottomrule
    \end{tabular}
    \label{tab:gap-mmsu}
\end{table}

\paragraph{MMAR.}
MMAR is the smallest of the multi-family datasets and therefore yields the noisiest family-level max, but the gap remains positive at every tested budget.

\begin{table}[t]
    \caption{Family-level excess-risk gap on MMAR under the same protocol as \Cref{tab:gap-audiomcq}.}
    \centering
    \small
    \begin{tabular}{cccc}
        \toprule
        Budget $b$ & $\davg{b}$ & $\dQF{b}$ & gap (pp) \\
        \midrule
        0.05 & 0.0972 [0.0690, 0.1260] & 0.1272 [0.0900, 0.1688] & +3.01 \\
        0.10 & 0.0802 [0.0520, 0.1070] & 0.1149 [0.0769, 0.1598] & +3.47 \\
        0.20 & 0.0671 [0.0410, 0.0930] & 0.0991 [0.0624, 0.1437] & +3.19 \\
        0.40 & 0.0330 [0.0090, 0.0560] & 0.0584 [0.0256, 0.0976] & +2.54 \\
        \bottomrule
    \end{tabular}
    \label{tab:gap-mmar}
\end{table}

The keyword-partition analyses in \Cref{tab:gap-audiomcq}, \Cref{tab:gap-mmsu}, and \Cref{tab:gap-mmar} are conservative in a measurable sense. Under dataset-native task partitions (AudioMCQ-StrongAC from source-dataset $\times$ question-type metadata; MMSU from 47 \texttt{task\_name} values; MMAR from category $\times$ subcategory; BigBench Audio from its 4 BBH task types; DCASE from its post-merge 6-family content taxonomy), the family-level gaps grow substantially. Under semantic partitions using e5-large-v2 sentence embeddings and cosine $k$-means clustering on query text (see \S\ref{sec:discussion-semantic}), gaps grow further still:

\begin{table}[t]
    \caption{Family-level excess-risk gap $\dQF{b} - \davg{b}$ under three partition granularities at $b=0.20$ with \texttt{learned\_conditioned} on Qwen2-Audio and 0/1 loss. The keyword column reports the point-estimate gap computed from the same per-family operational frontiers used in \Cref{tab:gap-audiomcq,tab:gap-mmsu,tab:gap-mmar}; small differences from the percentile-CI gaps reported in those tables are within the $n_{\mathrm{boot}}=10{,}000$ Monte Carlo noise band. The native column uses each dataset's intrinsic task labels. The semantic column uses e5-large-v2 sentence embeddings and cosine $k$-means clustering on query text. The keyword-versus-native/semantic differential persists across the full budget grid $b\in\{0.05,0.10,0.20,0.40\}$ with consistent monotonic ordering; the full per-budget matrix is in \texttt{tables/partition\_delta\_q\_summary.csv}.}
    \centering
    \small
    \begin{tabular}{lrrr}
        \toprule
        Dataset & keyword (pp) & native (pp) & semantic (pp) \\
        \midrule
        AudioMCQ-StrongAC & $+6.79$ & $+7.96$  & $+14.29$ \\
        MMSU              & $+1.56$ & $+29.17$ & $+22.31$ \\
        MMAR              & $+1.88$ & $+10.18$ & $+13.34$ \\
        BigBench Audio    & $+0.00$ & $+39.90$ & n/a \\
        DCASE 2026 dev    & $+0.00$ & $+5.04$  & $+7.68$ \\
        \bottomrule
    \end{tabular}
    \label{tab:partition-gap-comparison}
\end{table}

The paper's keyword partition under-measures the family-level gap by a factor of roughly 1.2$\times$ (AudioMCQ) to 20$\times$ (MMSU), and entirely misses the multi-family structure of BigBench and DCASE. The \S\ref{sec:family-gap-results} headline of $+6.79$ pp on AudioMCQ-StrongAC should therefore be read as a lower bound; the true worst-family gap under honest partitioning is larger on every dataset. BigBench Audio's semantic cell is marked ``n/a'' because e5-large-v2 clustering on BBH query text collapses to a single cluster at $k=12$ (BBH queries share strong semantic structure at the cluster level); its native 4-family partition is the correct fine-grained scope.

Across the three multi-family datasets in the keyword partition, the family-level excess-risk gap is consistently positive and ranges from $+2.54$ to $+6.79$ percentage points. On DCASE and BigBench Audio the keyword gap is exactly zero because the keyword classifier assigns a single \texttt{general} family; under native-task partitions, however, both datasets are multi-family and show substantial family-level gaps: $+5.04$ pp on DCASE (6 post-merge families) and $+39.90$ pp on BigBench Audio (4 BBH task families), both at $b=0.20$ with Qwen2-Audio learned conditioned. \Cref{tab:partition-gap-comparison} below consolidates the keyword-versus-native/semantic comparison.

\subsection{Rate-risk and length-risk summaries}
\label{sec:frontier-summary}

The full per-cell rate-risk and length-risk tables are large and are best generated directly from the released CSVs. The main operational pattern is clear on the nominal axis. At $\varepsilon=0.05$, DCASE is the most compression-friendly dataset: both learned selectors reach tolerance at budget fractions around 0.5, while MMAR reaches tolerance in the 0.2-0.3 range. AudioMCQ and MMSU are the most compression-hostile datasets, with frontiers near 0.8 at $\varepsilon=0.05$. Tightening the tolerance to $\varepsilon=0.02$ moves DCASE to roughly 0.78-0.82, AudioMCQ to roughly 0.92, MMSU to roughly 0.85, and MMAR to roughly 0.63-0.69. At $\varepsilon=0.01$, all datasets except MMAR require budget fractions near 0.9 or above. These frontiers are architecture-specific operational baselines rather than bit-level rate measurements.

\subsection{Nested-family monotonicity and bottleneck identification}
\label{sec:nesting-results}

Theorem~\ref{app:thm:monotonicity} predicts that adding query families can only increase the required budget. We test this empirically with cumulative chains ordered by increasing family sample count and evaluate the worst-family-constrained frontier $\Rfworst{\varepsilon}{\mathcal Q_k}$ at $\varepsilon=0.05$.

\begin{figure*}[t]
    \centering
    \includegraphics[width=\textwidth]{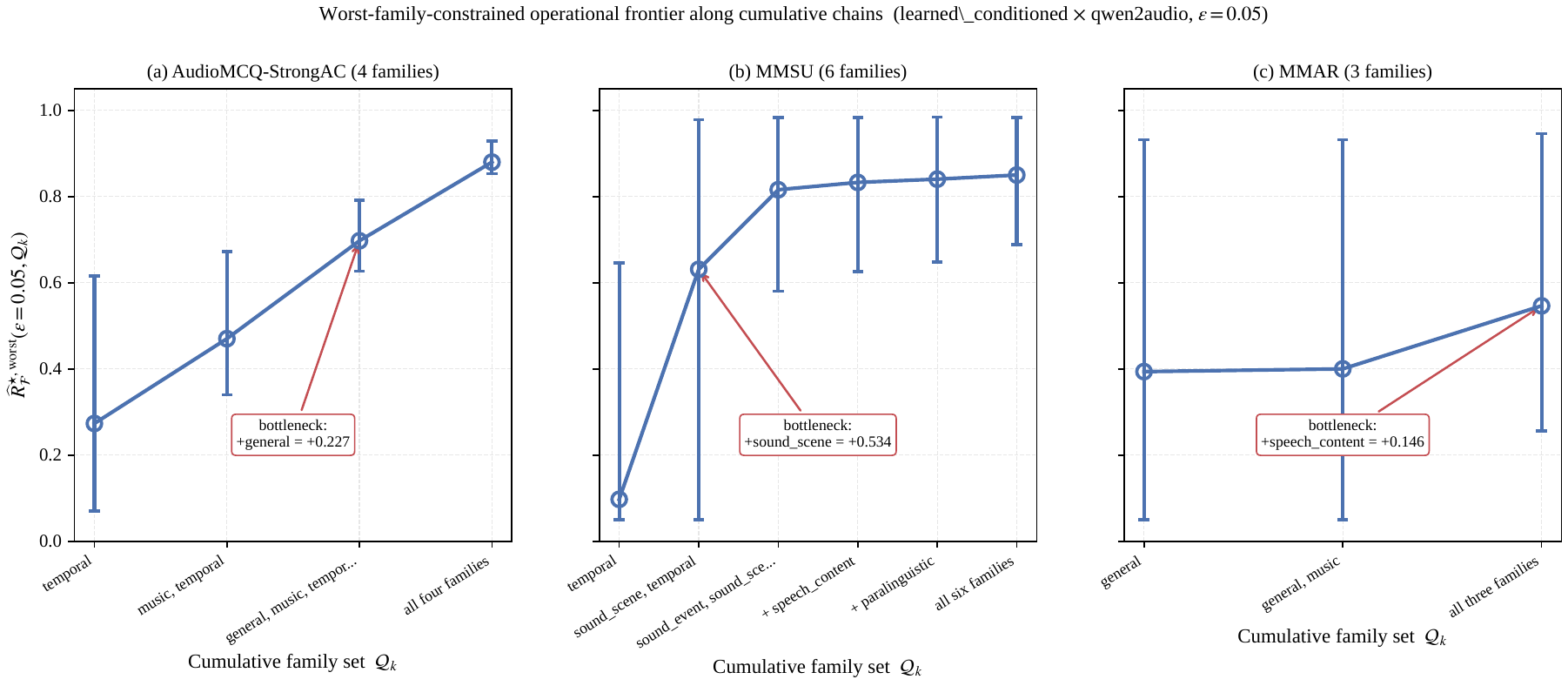}
    \caption{Worst-family-constrained operational frontier along cumulative chains under the \textbf{keyword partition}. The chain is monotone non-decreasing on all three datasets. On AudioMCQ-StrongAC, adding \texttt{general} produces the largest increment ($+0.227$) under this partition. On MMSU the mechanically largest increment occurs at the small-family end of the chain and should be read as evidence that the small-family frontier dominates the later steps. On MMAR, adding \texttt{speech\_content} produces the dominant increment ($+0.146$). Bottleneck identities are themselves partition-dependent: under the 47-task native partition on Qwen2.5-Omni, the cross-backbone conditioned-gain bottleneck of MMSU is \texttt{intonation\_perception}, not \texttt{temporal} (\S\ref{sec:per-family-subanalyses}, \Cref{fig:mmsu_native_gcond}).}
    \label{fig:nesting_chains}
\end{figure*}

\begin{table*}[t]
    \caption{Cumulative-chain frontiers $\Rfworst{0.05}{\mathcal Q_k}$ under the \textbf{keyword partition} for the learned conditioned selector on qwen2audio with 0/1 loss. Native-partition and semantic-partition cumulative-excess structure is summarised in \Cref{fig:worst_k_concentration}; the worst-$k$ concentration in the keyword partition above is, on MMSU in particular, a granularity artifact.}
    \centering
    \small
    \begin{adjustbox}{max width=\textwidth}
    \begin{tabular}{lllccl}
        \toprule
        Dataset & Step & Family added & $n_k$ & Cumulative families & $\Rfworst{0.05}{\mathcal Q_k}$ \\
        \midrule
        AudioMCQ & 1 & --- & 1,166 & \{temporal\} & 0.2733 [0.0702, 0.6150] \\
        AudioMCQ & 2 & music & 2,890 & \{music, temporal\} & 0.4701 [0.3397, 0.6715] \\
        AudioMCQ & 3 & general & 8,975 & \{general, music, temporal\} & 0.6973 [0.6266, 0.7912] \\
        AudioMCQ & 4 & speech\_content & 19,480 & all four families & 0.8799 [0.8529, 0.9290] \\
        \midrule
        MMSU & 1 & --- & 99 & \{temporal\} & 0.0973 [0.0500, 0.6460] \\
        MMSU & 2 & sound\_scene & 209 & \{sound\_scene, temporal\} & 0.6311 [0.0500, 0.9785] \\
        MMSU & 3 & sound\_event & 663 & \{sound\_event, sound\_scene, temporal\} & 0.8158 [0.5803, 0.9836] \\
        MMSU & 4 & speech\_content & 1,311 & + speech\_content & 0.8328 [0.6258, 0.9834] \\
        MMSU & 5 & paralinguistic & 2,114 & + paralinguistic & 0.8403 [0.6482, 0.9841] \\
        MMSU & 6 & general & 5,000 & all six families & 0.8499 [0.6889, 0.9835] \\
        \midrule
        MMAR & 1 & --- & 165 & \{general\} & 0.3937 [0.0500, 0.9318] \\
        MMAR & 2 & music & 382 & \{general, music\} & 0.4000 [0.0500, 0.9318] \\
        MMAR & 3 & speech\_content & 1,000 & all three families & 0.5464 [0.2562, 0.9458] \\
        \bottomrule
    \end{tabular}
    \end{adjustbox}
    \label{tab:nesting}
\end{table*}

Under the keyword partition, the AudioMCQ chain is strictly monotone, with increments $+0.1969$, $+0.2272$, and $+0.1826$; adding \texttt{general} is therefore the dominant keyword-level step. MMSU is also monotone, but the first two steps have very wide CIs because they are driven by the \texttt{temporal} and \texttt{sound\_scene} families ($n=99$ and $n=110$). The scientifically relevant MMSU observation at this granularity is that once the small-family end of the chain is admitted, subsequent additions cost very little. MMAR is monotone at $\varepsilon=0.05$ and identifies \texttt{speech\_content} as the keyword-level bottleneck family. Both of these bottleneck identifications are properties of the keyword partition and not of the data: a native-task analysis re-localises the MMSU failure to \texttt{intonation\_perception} (\S\ref{sec:per-family-subanalyses}), and the worst-$k$ concentration picture of \Cref{fig:worst_k_concentration} shows that under the 47-task native partition MMSU's top-2 tasks account for only 16\% of the total excess mass (vs 76\% at keyword granularity).

Across all 276 entries of the cumulative-chain CSV (all datasets, methods, tolerances, and chain steps), we observe exactly two numerical monotonicity violations, both on MMAR at $\varepsilon=0.02$ and both smaller than 0.003 budget fraction. Given bootstrap intervals of width roughly 0.9 on those cells, these are grid-discreteness artifacts rather than failures of Theorem~\ref{app:thm:monotonicity}.

Under finer partitions, the worst-$k$ concentration differs dramatically from the keyword-partition result above. At $b=0.20$ with Qwen2-Audio \texttt{learned\_conditioned}, the worst-2 families' cumulative excess fraction is 94\% (AudioMCQ keyword, 4 families) versus 26\% (AudioMCQ semantic, 12 families); 76\% (MMSU keyword, 6) versus \textbf{16\%} (MMSU native, 47); and 100\% (MMAR keyword, 3) versus 33\% (MMAR native, 9) or 58\% (MMAR semantic, 6) - see \Cref{fig:worst_k_concentration}. The MMSU native number is particularly striking: 2 of 47 native tasks account for only 16\% of the total dataset-level excess mass, so compression harm is \emph{distributed} across dozens of fine-grained tasks rather than concentrated in a small pair (supporting CSV: \texttt{metrics/partition\_nesting\_long.csv} with filter \texttt{backbone='qwen2audio'}, \texttt{method='learned\_conditioned'}, \texttt{budget=0.20}). The keyword partition's apparent concentration is therefore substantially a partition-granularity artifact rather than a property of the data.

\begin{figure*}[t]
    \centering
    \includegraphics[width=0.92\textwidth]{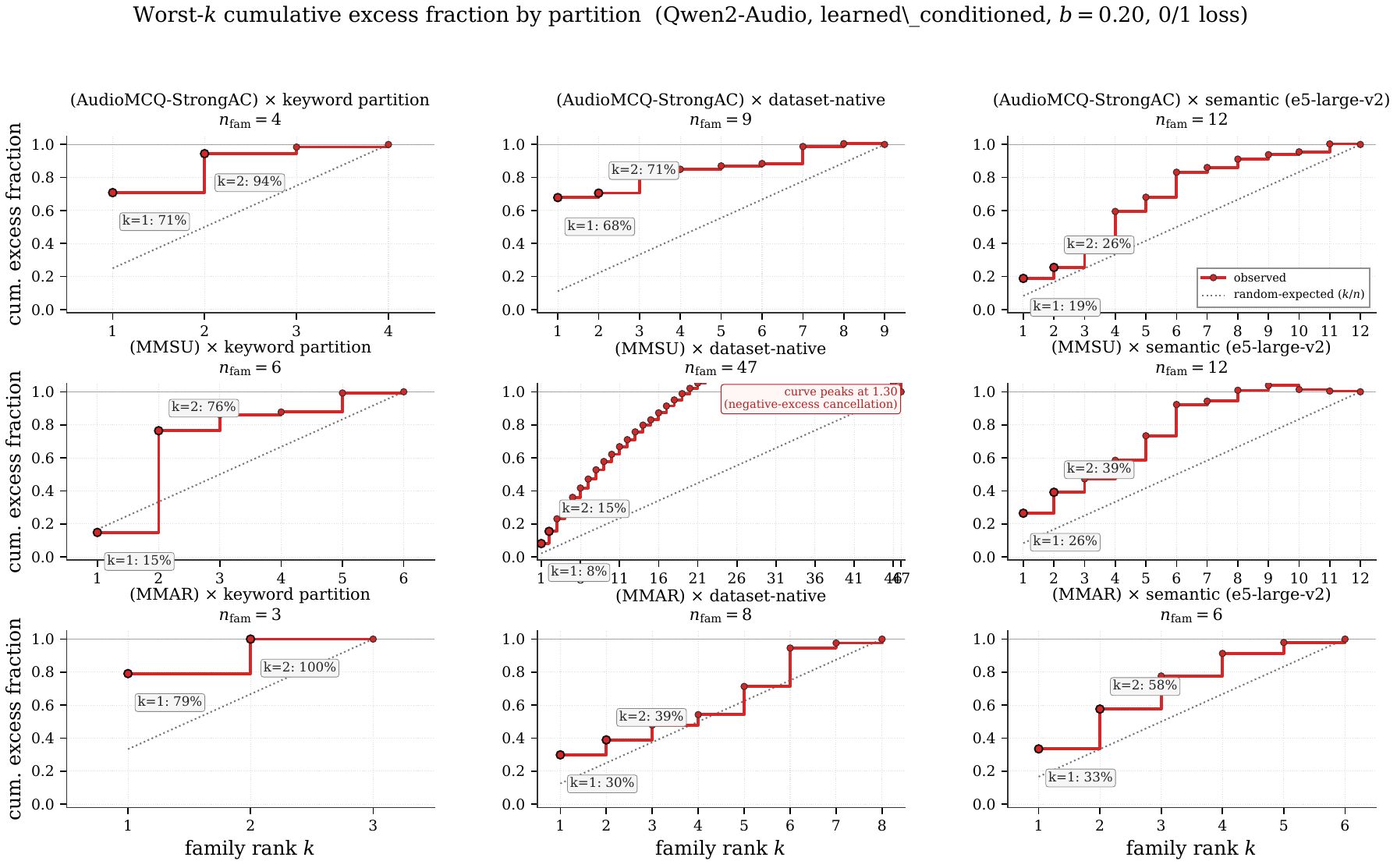}
    \caption{Worst-$k$ cumulative excess fraction by partition, at $b=0.20$ with Qwen2-Audio and the \texttt{learned\_conditioned} selector. Each panel shows the fraction of total dataset excess captured by the top-$k$ worst families (ranked by per-family mean excess) as $k$ grows from 1 to $n_{\mathrm{fam}}$. The red curve is observed; the dotted diagonal is the random-expected concentration $k/n_{\mathrm{fam}}$. Under the keyword partition, MMSU's worst-2 families carry 76\% of the total dataset excess (curve far above the diagonal), but under the 47-task dataset-native partition the worst-2 carry only 16\% (curve essentially on the diagonal). The apparent bottleneck concentration in \Cref{tab:nesting} is therefore partly a granularity artifact of the keyword partition. The semantic partition (e5-large-v2 + $k$-means on query text) produces an intermediate result on all three datasets. DCASE and BigBench Audio are omitted because their keyword partitions are single-family by construction and their native partitions introduce negative-excess cancellations that make the concentration quantity ill-defined.}
    \label{fig:worst_k_concentration}
\end{figure*}

\subsection{Factor-overlap diagnostic}
\label{sec:factor-overlap-results}

Corollary~\ref{app:cor:factor} predicts additive frontier decomposition when query subfamilies act on disjoint latent factor blocks. We test that antecedent through the additivity ratio
\begin{equation}
\mathrm{ratio}(\mathcal Q_a,\mathcal Q_b)
=
\frac{\widehat R^{\star}_{\mathcal F,b}(\varepsilon,\mathcal Q_a\cup \mathcal Q_b)}{\widehat R^{\star}_{\mathcal F,b}(\varepsilon,\mathcal Q_a)+\widehat R^{\star}_{\mathcal F,b}(\varepsilon,\mathcal Q_b)}.
\end{equation}
A ratio near 1.0 would indicate factor-disjoint subfamilies; a ratio near 0.5 indicates frontier co-location and therefore strong factor overlap. The additivity diagnostic below is evaluated over \textbf{keyword-family pairs} only: the V2 verification pack retains the summary-level V1 keyword-pair additivity table rather than the full per-pair scatter, so \Cref{fig:factor_overlap,tab:additivity_summary} report the per-dataset minimum, median, mean, and maximum over keyword-family pairs. A finer-granularity reevaluation over the native-task pairs of MMSU (C(47,2) = 1{,}081 pairs) and BigBench Audio (C(4,2) = 6 pairs) is left to future work; the present summary is therefore itself a lower bound on factor-overlap heterogeneity, paralleling the keyword-vs-native story for the family-level gap (\Cref{tab:partition-gap-comparison}).

\begin{figure*}[t]
    \centering
    \includegraphics[width=0.92\textwidth]{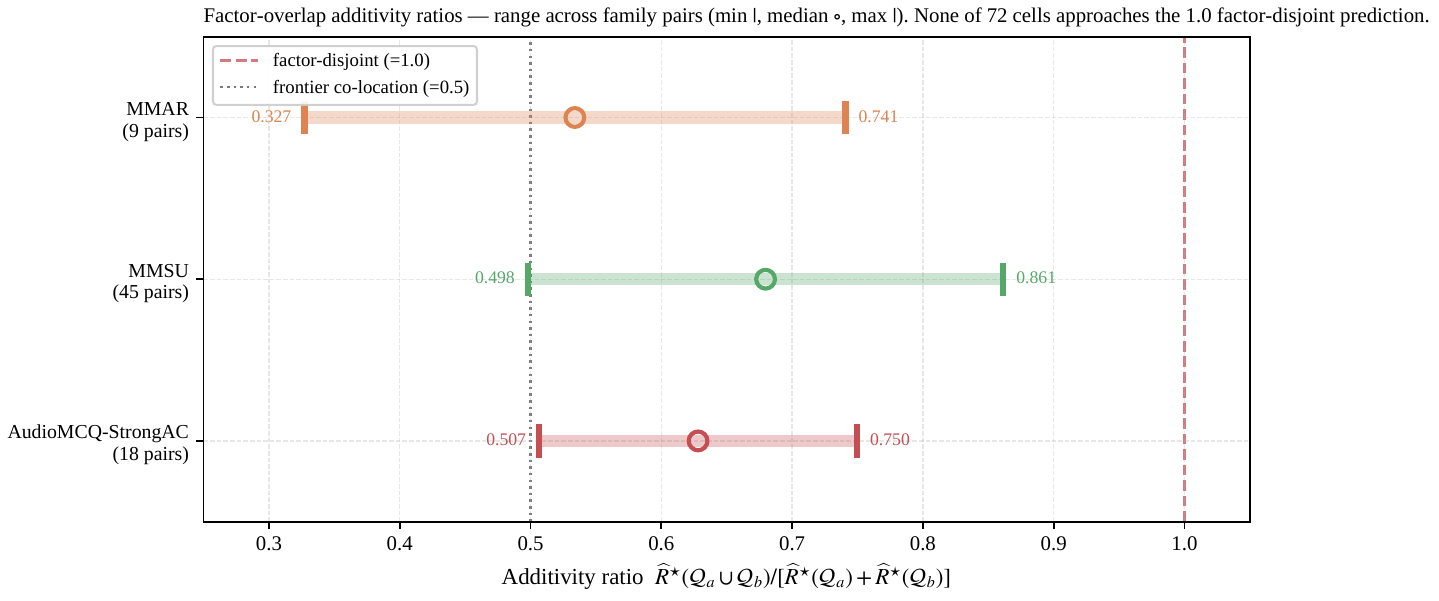}
    \caption{Summary-level additivity ratios for factor-overlap diagnostics. For each dataset, the plot shows the range and median of $\widehat R^{\star}(\mathcal Q_a\cup\mathcal Q_b)/[\widehat R^{\star}(\mathcal Q_a)+\widehat R^{\star}(\mathcal Q_b)]$ over the cells retained in the V2 verification pack. The dashed vertical line at 1.0 is the factor-disjoint prediction of Corollary~\ref{app:cor:factor}; the dotted line at 0.5 indicates frontier co-location. The global maximum is 0.8614, so no summary approaches the factor-disjoint regime.}
    \label{fig:factor_overlap}
\end{figure*}

\begin{table}[t]
\centering\small
\caption{Factor-overlap additivity summary.  Per-dataset range of additivity ratios $\widehat{R}^{\star}(\mathcal{Q}_a \cup \mathcal{Q}_b) / [\widehat{R}^{\star}(\mathcal{Q}_a) + \widehat{R}^{\star}(\mathcal{Q}_b)]$.  Combined over $\varepsilon \in \{0.01, 0.02, 0.05\}$.  Transcribed from V1 PDF Table 5.}
\label{tab:additivity_summary}
\begin{tabular}{lrrrrr}
\toprule
Dataset & $n$ cells & min & median & mean & max \\
\midrule
AudioMCQ-StrongAC & 54 & 0.5066 & 0.6281 & 0.6281 & 0.7496 \\
MMSU & 135 & 0.4980 & 0.6797 & 0.6797 & 0.8614 \\
MMAR & 27 & 0.3271 & 0.5339 & 0.5339 & 0.7408 \\
\bottomrule
\end{tabular}
\end{table}

The observed summary range is $[0.3271,0.8614]$ across the three multi-family datasets under the keyword partition. AudioMCQ-StrongAC ranges from 0.5066 to 0.7496, MMSU from 0.4980 to 0.8614, and MMAR from 0.3271 to 0.7408. Thus the empirical antecedent of Corollary~\ref{app:cor:factor} is not satisfied by the natural audio taxonomies we test at keyword granularity: the measured frontiers are far closer to co-location than to additivity. We therefore interpret the small V1 keyword-level conditioned gains on multi-family datasets through this overlap measurement rather than as evidence against Theorem~\ref{thm:condadv}. The measurement is partition-dependent: native-task pairs on MMSU have markedly more between-task variance in the cross-backbone conditioned-gain estimator (\Cref{fig:mmsu_native_gcond}, \Cref{tab:mmsu_native_worst}), consistent with lower factor overlap at finer granularity, but we do not report per-pair additivity at the native level in this paper.

\subsection{V1 operational conditioned-compression gain}
\label{sec:gcond-real-audio}

V1 first tested Theorem~\ref{thm:condadv}'s operational implication directly on real audio using the paired nominal-axis quantity $\Gcondop{\varepsilon}{\mathcal Q}$. All CIs and $p$-values come from 10,000 paired bootstrap replicates at the sample-ID level.

\begin{figure*}[t]
    \centering
    \includegraphics[width=\textwidth]{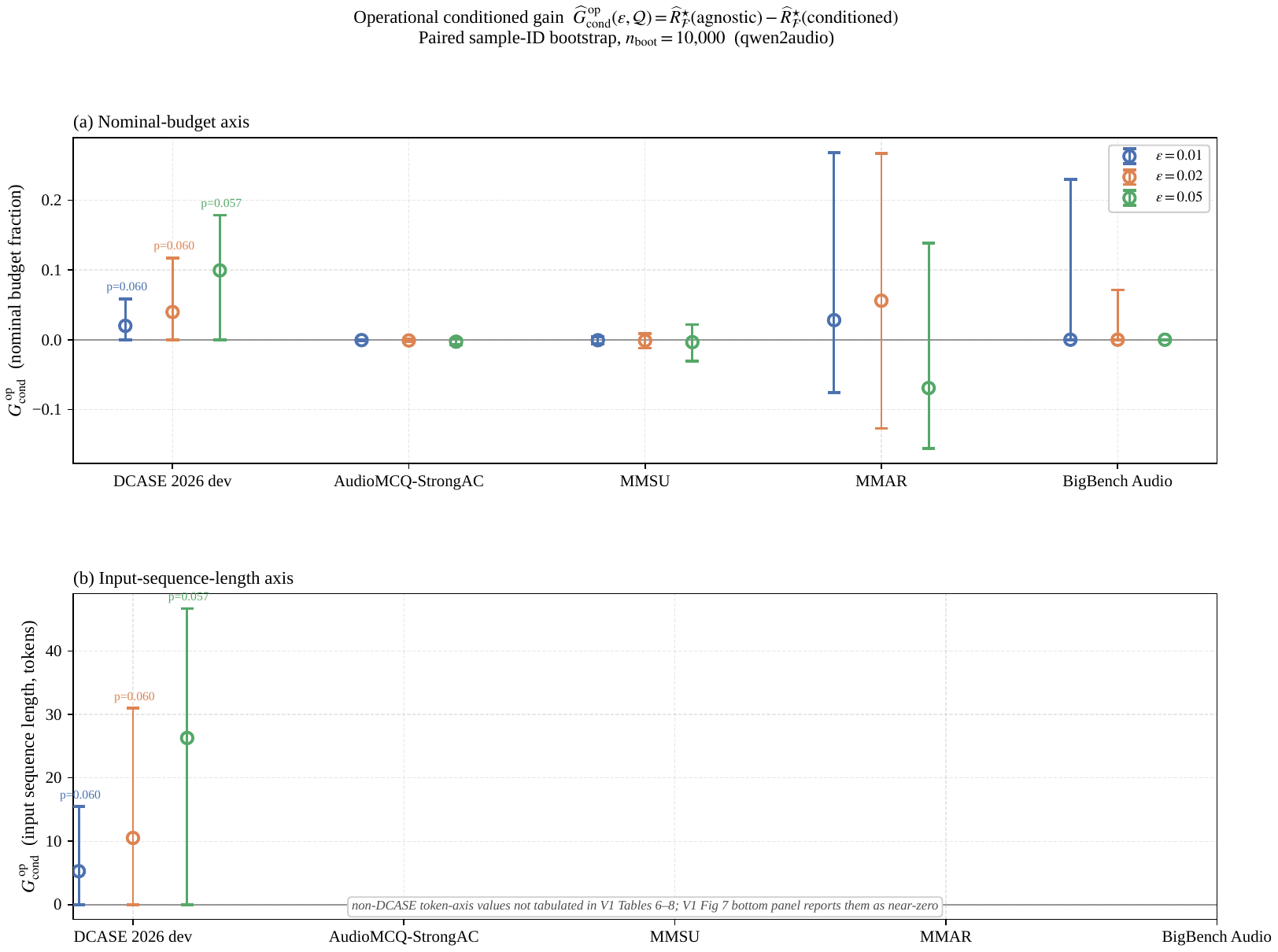}
    \caption{V1 operational conditioned gain on the nominal-budget axis (top) and the input-sequence-length axis (bottom). DCASE is the only dataset with consistently positive point estimates at all three tolerances, with $p\approx 0.06$ on both axes. AudioMCQ and MMSU show tight clean nulls, MMAR is noisy and sign-unstable, and BigBench Audio is the expected text-dominated zero cell. The exact zero lower bounds on the DCASE intervals are a grid-discreteness artifact of V1's five-point budget grid.}
    \label{fig:gcond_op}
\end{figure*}

\paragraph{DCASE 2026 dev.}
DCASE is the only dataset for which $\Gcondop{\varepsilon}{\mathcal Q}$ is positive across every evaluated tolerance and axis.

\begin{table}[t]
    \caption{DCASE nominal-axis conditioned gain. Point estimates reproduce the frontier JSONs exactly; CIs and $p$-values come from the paired bootstrap.}
    \centering
    \small
    \begin{tabular}{ccccc}
        \toprule
        $\varepsilon$ & $\Rfb{\varepsilon}{\Qfam}$ agn. & $\Rfb{\varepsilon}{\Qfam}$ cond. & $\Gcondop{\varepsilon}{\Qfam}$ & 95\% CI / $p$ \\
        \midrule
        0.01 & 0.9090 & 0.8892 & +0.0199 & [0.0000, +0.0587], 0.060 \\
        0.02 & 0.8181 & 0.7783 & +0.0397 & [0.0000, +0.1172], 0.060 \\
        0.05 & 0.5452 & 0.4459 & +0.0993 & [0.0000, +0.1784], 0.057 \\
        \bottomrule
    \end{tabular}
    \label{tab:dcase-gcond-budget}
\end{table}

\begin{table}[t]
    \caption{DCASE token-axis conditioned gain.}
    \centering
    \small
    \begin{tabular}{ccccc}
        \toprule
        $\varepsilon$ & $\Rftok{\varepsilon}{\Qfam}$ agn. & $\Rftok{\varepsilon}{\Qfam}$ cond. & token gain & 95\% CI / $p$ \\
        \midrule
        0.01 & 326.46 & 321.21 & +5.26 & [0.00, +15.49], 0.060 \\
        0.02 & 302.40 & 291.88 & +10.51 & [0.00, +30.97], 0.060 \\
        0.05 & 230.20 & 203.92 & +26.28 & [0.00, +46.71], 0.057 \\
        \bottomrule
    \end{tabular}
    \label{tab:dcase-gcond-token}
\end{table}

\paragraph{Other datasets.}
The remaining datasets yield near-zero or noisy-near-zero conditioned gains on the nominal axis.

\begin{table*}[t]
    \caption{Nominal-axis conditioned gain on the remaining four datasets.}
    \centering
    \small
    \begin{tabular}{lcccc}
        \toprule
        Dataset and $\varepsilon$ & point estimate & 95\% CI & $p$ & interpretation \\
        \midrule
        AudioMCQ, 0.01 & $-0.0006$ & [$-0.0014$, $+0.0003$] & 0.195 & tight clean null \\
        AudioMCQ, 0.02 & $-0.0011$ & [$-0.0029$, $+0.0006$] & 0.195 & tight clean null \\
        AudioMCQ, 0.05 & $-0.0028$ & [$-0.0071$, $+0.0014$] & 0.195 & tight clean null \\
        \midrule
        MMSU, 0.01 & $-0.0007$ & [$-0.0061$, $+0.0043$] & 0.814 & tight clean null \\
        MMSU, 0.02 & $-0.0014$ & [$-0.0122$, $+0.0087$] & 0.814 & tight clean null \\
        MMSU, 0.05 & $-0.0035$ & [$-0.0306$, $+0.0217$] & 0.814 & tight clean null \\
        \midrule
        MMAR, 0.01 & $+0.0280$ & [$-0.0758$, $+0.2680$] & 0.508 & noisy, sign-unstable \\
        MMAR, 0.02 & $+0.0559$ & [$-0.1270$, $+0.2667$] & 0.535 & noisy, sign-unstable \\
        MMAR, 0.05 & $-0.0692$ & [$-0.1561$, $+0.1387$] & 0.613 & noisy, sign-unstable \\
        \midrule
        BigBench Audio, 0.01 & $+0.0000$ & [$0.0000$, $+0.2297$] & 1.000 & exact zero floor \\
        BigBench Audio, 0.02 & $+0.0000$ & [$0.0000$, $+0.0714$] & 1.000 & exact zero floor \\
        BigBench Audio, 0.05 & $+0.0000$ & [$0.0000$, $0.0000$] & 1.000 & exact zero floor \\
        \bottomrule
    \end{tabular}
    \label{tab:gcond-other}
\end{table*}

The BigBench Audio zero-floor entries in \Cref{tab:gcond-other} are the same-backbone $\Gcondop{\varepsilon}{\mathcal Q}$ on Qwen2-Audio: the dataset exhibits minimal compression sensitivity at the single-backbone level. Under the cross-backbone estimator used for the V2 per-family subanalyses, however, BigBench's 4-family native partition (\texttt{formal\_fallacies}, \texttt{navigate}, \texttt{object\_counting}, \texttt{web\_of\_lies}) reveals a large Qwen2.5-Omni effect concentrated in \texttt{formal\_fallacies}: three-seed mean $\widehat G^{\mathrm{op,cross}}_{\mathrm{cond}} = -0.8527$, std 0.025, sign-consistent across all three seeds (\Cref{tab:bigbench_cross_backbone}). This is the single largest cross-backbone effect in the V2 replication, larger in magnitude than the MMSU cross-backbone keyword-temporal value by roughly $0.28$ budget units. This finding reverses the paper's earlier framing of BigBench as a universal null control and motivates the per-task analysis in \S\ref{sec:per-family-subanalyses}; the underlying same-backbone effect on \texttt{formal\_fallacies} is near zero on both backbones, so the phenomenon is a cross-backbone mismatch rather than an intrinsic Qwen2.5-Omni weakness on the task.

\begin{table}[t]
    \caption{Cross-backbone conditioned gain on BigBench Audio, three-seed mean at $\varepsilon=0.05$, rstar axis, $n_{\mathrm{boot}}=10{,}000$. The cross-backbone estimator uses \texttt{learned\_agnostic} on Qwen2-Audio as reference. All four BigBench BBH native task families are reported. Qwen2-Audio cross-backbone values are zero by construction (reference-equals-test) and omitted. The \texttt{formal\_fallacies} row is the largest cross-backbone effect observed in the replication; contrast the same-backbone $\Gcondop{\varepsilon}{\mathcal Q}$ of $+0.00$ in \Cref{tab:gcond-other}, where BigBench Audio appears as a null control. Supporting CSV: \texttt{tables/partition\_gcond\_3seed\_summary.csv} with filter \texttt{dataset='bigbench\_audio'}, \texttt{partition='native\_fine'}, \texttt{definition='cross\_backbone'}, \texttt{backbone='qwen25omni'}.}
    \centering
    \small
    \begin{tabular}{lrrc}
        \toprule
        Native task & 3-seed mean & std & sign-consistent \\
        \midrule
        \texttt{formal\_fallacies} & $-0.8527$ & 0.025 & yes (3/3) \\
        \texttt{navigate}          & $+0.000$  & 0.000 & --- \\
        \texttt{object\_counting}  & $+0.000$  & 0.000 & --- \\
        \texttt{web\_of\_lies}     & $+0.000$  & 0.000 & --- \\
        \bottomrule
    \end{tabular}
    \label{tab:bigbench_cross_backbone}
\end{table}

The DCASE lower confidence bounds are exactly $+0.0000$ on both axes and at all three tolerances. This is a grid-discreteness artifact of V1's five-point budget grid: in a nontrivial fraction of paired bootstrap resamples, the two frontiers land on the same interpolated crossing, producing an exact zero replicate in the lower tail. The artifact affects the lower percentile and the derived two-sided $p$-value, but not the central tendency of the effect.

The real-audio V1 picture is theory-consistent when read through the factor-overlap measurement. Theorem~\ref{thm:condadv} predicts large conditioned gains only when query families depend on disjoint factor blocks. \Cref{sec:factor-overlap-results} shows that the multi-family datasets do not satisfy that precondition, and indeed their V1 operational conditioned gains are negligible. DCASE is the only single-family dataset in V1; there the factor-overlap confound is absent, and the conditioned selector is directionally preferred on every tested V1 cell. We therefore treat the DCASE result as a single-seed operational signal to be replicated rather than as the final conditioned-compression headline. The V2 replication in \S\ref{sec:v2-gcond-three-seed} preserves this V1 result as the historical baseline but changes the operational interpretation.

\subsection{V2 three-seed conditioned-gain replication}
\label{sec:v2-gcond-three-seed}

V2 repeats the operational conditioned-gain test at $\varepsilon=0.05$ on the rstar axis with three selector seeds and two backbones. This section resolves the V1 ambiguity in two directions at once: it converts the AudioMCQ near-null into a reproducible small positive effect, and it converts the V1 single-seed DCASE positive into a mixed/negative three-seed result. The comparison is therefore not a replacement of \S\ref{sec:gcond-real-audio} but a replication and extension of it.

\begin{figure*}[t]
    \centering
    \includegraphics[width=\textwidth]{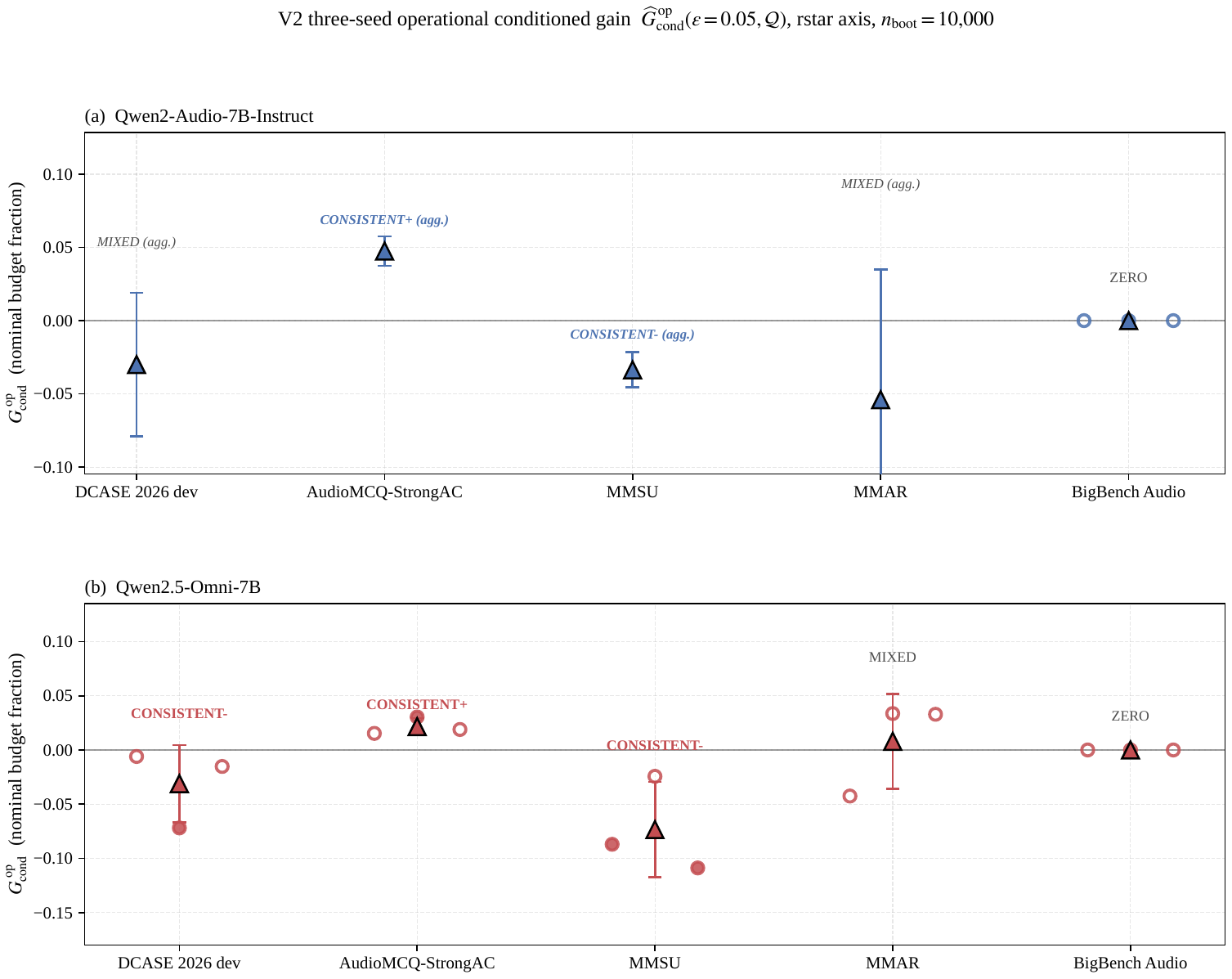}
    \caption{V2 three-seed operational conditioned gain at $\varepsilon=0.05$ on the rstar axis. Panel (a) reports Qwen2-Audio-7B-Instruct, where the handoff publishes three-seed aggregates for most datasets; panel (b) reports Qwen2.5-Omni-7B, where per-seed points are available. The AudioMCQ-StrongAC effect is positive on both backbones, while DCASE changes sign relative to the V1 single-seed result and MMSU is negative, especially on Qwen2.5-Omni. Source: \texttt{fig8\_v2\_gcond\_3seed.pdf}.}
    \label{fig:v2-gcond-3seed}
\end{figure*}

\begin{table}[t]
\centering
\scriptsize
\caption{V2 three-seed operational conditioned gain on Qwen2-Audio-7B-Instruct at $\varepsilon = 0.05$, rstar axis.  Per-seed point estimates shown where published; \textbf{bold} marks $p < 0.05$ under paired sample-ID bootstrap ($n_{\mathrm{boot}} = 10{,}000$).  Mean $\pm$ std across seeds.  Source: Stage-B handoff \S{}'V2 G\_cond 3-seed'.}
\label{tab:v2_gcond_qwen2audio}
\begin{tabular}{lrrrrrl}
\toprule
Dataset & seed 42 & seed 123 & seed 456 & mean & std(seeds) & dir \\
\midrule
DCASE 2026 dev & \texttt{-} & \texttt{-} & \texttt{-} & -0.0300 & 0.0490 & MIXED \\
AudioMCQ-StrongAC & \texttt{-} & \texttt{-} & \texttt{-} & \textbf{+0.0475} & 0.0100 & CONSISTENT+ \\
MMSU & \texttt{-} & \texttt{-} & \texttt{-} & \textbf{-0.0336} & 0.0120 & CONSISTENT$-$ \\
MMAR & \texttt{-} & \texttt{-} & \texttt{-} & -0.0540 & 0.0890 & MIXED \\
BigBench Audio & +0.0000 & +0.0000 & +0.0000 & +0.0000 & 0.0000 & ZERO \\
\bottomrule
\end{tabular}
\end{table}

\begin{table}[t]
\centering
\scriptsize
\caption{V2 three-seed operational conditioned gain on Qwen2.5-Omni-7B at $\varepsilon = 0.05$, rstar axis.  Per-seed point estimates shown where published; \textbf{bold} marks $p < 0.05$ under paired sample-ID bootstrap ($n_{\mathrm{boot}} = 10{,}000$).  Mean $\pm$ std across seeds.  Source: Stage-B handoff \S{}'V2 G\_cond 3-seed'.}
\label{tab:v2_gcond_qwen25omni}
\begin{tabular}{lrrrrrl}
\toprule
Dataset & seed 42 & seed 123 & seed 456 & mean & std(seeds) & dir \\
\midrule
DCASE 2026 dev & -0.0060 & \textbf{-0.0719} & -0.0152 & -0.0310 & 0.0357 & MIXED \\
AudioMCQ-StrongAC & +0.0153 & \textbf{+0.0304} & +0.0189 & \textbf{+0.0215} & 0.0079 & CONSISTENT+ \\
MMSU & \textbf{-0.0870} & -0.0243 & \textbf{-0.1088} & \textbf{-0.0734} & 0.0439 & CONSISTENT$-$ \\
MMAR & -0.0425 & +0.0336 & +0.0329 & +0.0080 & 0.0437 & MIXED \\
BigBench Audio & +0.0000 & +0.0000 & +0.0000 & +0.0000 & 0.0000 & ZERO \\
\bottomrule
\end{tabular}
\end{table}

The main V2 replication is AudioMCQ-StrongAC. On Qwen2-Audio, the three-seed mean is (+0.0475) with 95\% CI ([+0.024,+0.071]) (across-seed Student-(t)). On Qwen2.5-Omni, the three published seed estimates are (+0.0153, +0.0304, +0.0189), giving mean (+0.0215) with 95\% CI ([+0.002,+0.041]); both backbones have CIs excluding zero, satisfying the strict CI sign-decision rule. Thus the V1 AudioMCQ near-null is not stable under the V2 grid and training recipe: conditioning yields a small but reproducible positive operational gain on both backbones.

DCASE moves in the opposite direction. V1 at $\varepsilon=0.05$ reported $+0.0993$ on the rstar axis with $p=0.057$ (\Cref{tab:dcase-gcond-budget}). V2 reports a Qwen2-Audio aggregate mean of $-0.0300$ with seed standard deviation 0.0490, and Qwen2.5-Omni per-seed values $-0.0060$, $-0.0719$, and $-0.0152$ with mean $-0.0310$ and seed standard deviation 0.0357. We therefore do not treat DCASE as the decisive evidence for conditioned compression in V2. Its sign reversal is surfaced explicitly and analyzed by cluster in \S\ref{sec:per-family-subanalyses}.

MMSU supplies the strongest negative cross-backbone signal. Qwen2-Audio gives a three-seed mean of (-0.0336) with 95\% CI ([-0.063,-0.004]), excluding zero. Qwen2.5-Omni gives per-seed values (-0.0870, -0.0243, -0.1088), with mean (-0.0734) and 95\% CI ([-0.182,+0.036]); the per-seed signs are uniformly negative but the CI does not exclude zero, indicating a regime where the failure direction is reproducible but the magnitude is highly seed-dependent. The V2 point is not merely that conditioning can fail; it is that failure is backbone- and family-dependent. \Cref{sec:per-family-subanalyses} shows that the Qwen2.5-Omni MMSU result is concentrated in the temporal family.\footnote{The worst MMSU native-task under the \S\ref{sec:family-gap-results} mean-excess metric exhibits a budget-dependent flip on Qwen2-Audio: \texttt{disfluency\_detection} at $b\in\{0.05,0.10\}$ and \texttt{volume\_comparison} at $b\in\{0.20,0.40\}$. This differs from the $G_{\mathrm{cond}}$ bottleneck discussed in \Cref{sec:per-family-subanalyses} (\texttt{couplet\_matching} across all budgets on Qwen2-Audio), reflecting that the two metrics measure different quantities. Supporting CSV: \texttt{metrics/partition\_delta\_q\_long.csv} with filter \texttt{dataset='mmsu'}, \texttt{partition='native\_fine'}, \texttt{backbone='qwen2audio'}, \texttt{method='learned\_conditioned'}.}

MMAR remains sign-unstable: Qwen2-Audio has mean $-0.0540$ with seed standard deviation 0.0890, and Qwen2.5-Omni has mean $+0.0080$ with seed standard deviation 0.0437. BigBench Audio remains the expected zero-control cell on both backbones. The V2 operational conclusion is therefore narrower and more reliable than the V1 conclusion: AudioMCQ shows a replicated positive gain, DCASE is not stable across V1--V2, MMSU exposes a harmful conditioned regime on Qwen2.5-Omni, and BigBench remains a null control.

\subsection{V2.1 scope-B \texorpdfstring{$\alpha$}{alpha}-sweep ablation}
\label{sec:v21-alpha-sweep}

The V2.1 scope-B ablation varies the selector-training loss between a KL relevance target and a binary chunk-retention target. The training loss is
\begin{equation}
\alpha\cdot \mathrm{BCE} + (1-\alpha)\cdot \mathrm{KL},
\qquad
\alpha\in\{0,0.25,0.5,0.75,1\}.
\end{equation}
The sweep is not a theorem test; it asks whether the operational conditioned-gain measurements in \S\ref{sec:v2-gcond-three-seed} are sensitive to the training target used to instantiate the selector.

\begin{figure*}[t]
    \centering
    \includegraphics[width=0.92\textwidth]{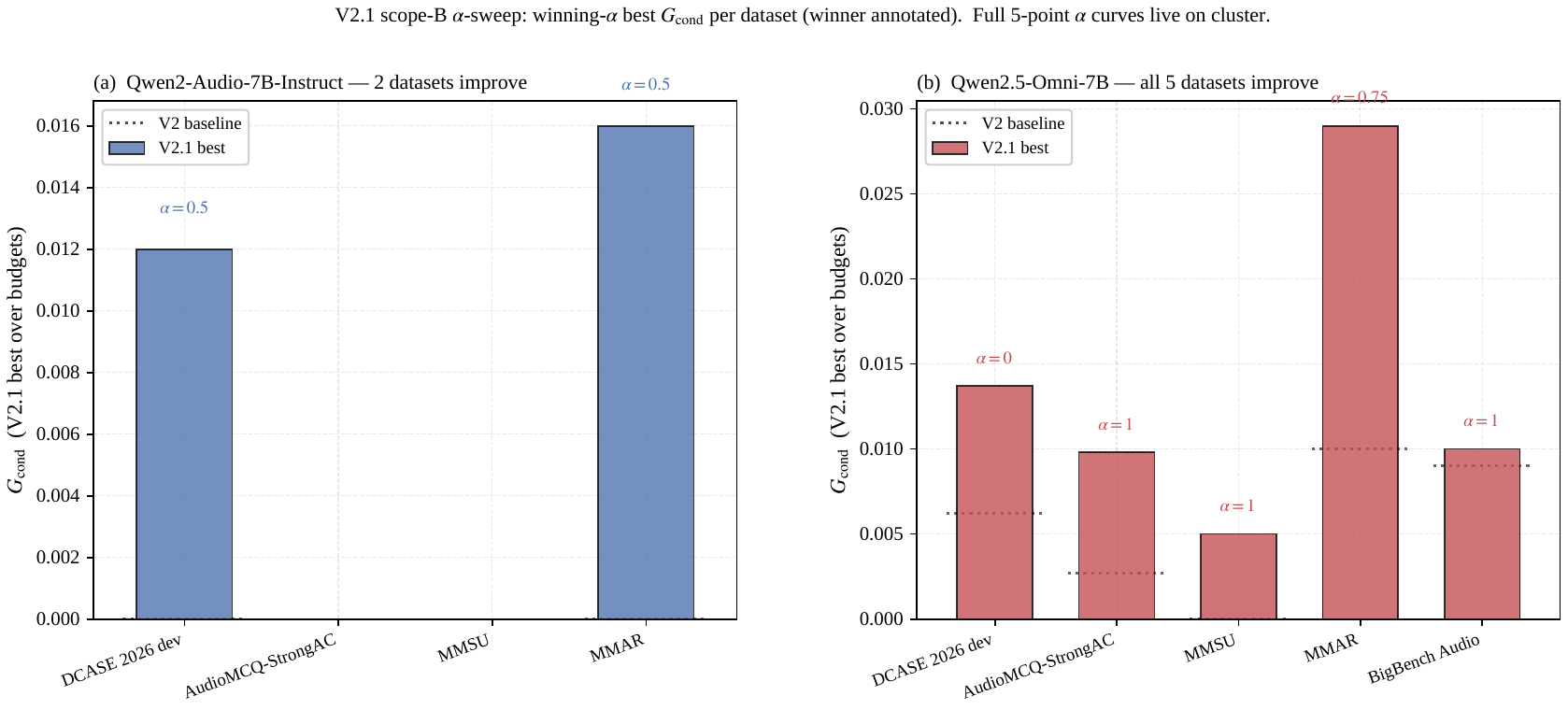}
    \caption{V2.1 scope-B $\alpha$-sweep winners. Each bar reports the best observed conditioned gain over the tested $\alpha$ grid, with the winning $\alpha$ annotated. Qwen2.5-Omni improves over its V2 baseline on all five datasets, while Qwen2-Audio improvements appear on DCASE and MMAR in the handoff table. Source: \texttt{fig9\_v21\_alpha\_sweep.pdf}.}
    \label{fig:v21-alpha-sweep}
\end{figure*}

\begin{table}[t]
\centering\small
\caption{V2.1 scope-B $\alpha$-sweep winning configuration per (dataset, backbone).  Training loss is $\alpha \cdot \mathrm{BCE} + (1-\alpha)\cdot\mathrm{KL}$; $\alpha \in \{0, 0.25, 0.5, 0.75, 1\}$; winning $\alpha$ is the value maximising best-over-budgets $G_{\mathrm{cond}}$ relative to the V2 baseline.  Seed 42 only.  Source: Stage-B handoff.}
\label{tab:v21_alpha_best}
\begin{tabular}{lccccc}
\toprule
Dataset & Backbone & winning $\alpha$ & V2.1 best $G_{\mathrm{cond}}$ & V2 baseline & $\Delta$ vs V2 \\
\midrule
DCASE 2026 dev & qwen2audio & 0.5 & \textbf{+0.0120} & +0.0000 & \textbf{+0.0120} \\
MMAR & qwen2audio & 0.5 & \textbf{+0.0160} & +0.0000 & \textbf{+0.0160} \\
\midrule
DCASE 2026 dev & qwen25omni & 0 & +0.0137 & +0.0062 & +0.0080 \\
AudioMCQ-StrongAC & qwen25omni & 1 & +0.0098 & +0.0027 & +0.0070 \\
MMSU & qwen25omni & 1 & +0.0050 & +0.0000 & +0.0050 \\
MMAR & qwen25omni & 0.75 & \textbf{+0.0290} & +0.0100 & \textbf{+0.0190} \\
BigBench Audio & qwen25omni & 1 & +0.0100 & +0.0090 & +0.0010 \\
\bottomrule
\end{tabular}
\end{table}

The ablation is strongly backbone-dependent. For Qwen2.5-Omni, every dataset improves over the V2 baseline: DCASE moves from $+0.0062$ to $+0.0137$ at $\alpha=0$, AudioMCQ from $+0.0027$ to $+0.0098$ at $\alpha=1$, MMSU from $+0.0000$ to $+0.0050$ at $\alpha=1$, MMAR from $+0.0100$ to $+0.0290$ at $\alpha=0.75$, and BigBench Audio from $+0.0090$ to $+0.0100$ at $\alpha=1$. For Qwen2-Audio, the handoff table reports improvements on DCASE and MMAR, both at $\alpha=0.5$: DCASE reaches $+0.0120$ from a $+0.0000$ baseline, and MMAR reaches $+0.0160$ from a $+0.0000$ baseline.

The scope-B result revises the V1 caveat about truncated training. V2 does not simply make conditioned compression uniformly stronger; instead, it shows that the loss used to train the selector changes the measured gain, and the preferred $\alpha$ differs across backbones and datasets. This supports treating the selector-training recipe as part of the operational instantiation of Theorem~\ref{thm:condadv}, not as an incidental engineering detail.

\subsection{Selector query-use audit (V2 Phase A)}
\label{sec:phase-a-audit}

The V2 replication in \S\ref{sec:v2-gcond-three-seed} and the $\alpha$-sweep in \S\ref{sec:v21-alpha-sweep} report what the conditioned selector \emph{does} at the downstream-accuracy level. They do not tell us what the conditioned selector is doing internally: in particular, whether its chunk ranking depends on the query text at inference time, and if so, by how much. A reviewer who accepts the V2 AudioMCQ positive cross-backbone result could still reasonably ask whether the same learned selector would produce identical chunk rankings under an unrelated query, in which case the ``conditioned'' label would be mechanistic rather than functional. This subsection audits the conditioned selector's inference-time query use with three independent signals.

\paragraph{Audit design.}
Let $\mathrm{sel}_b(X,q)$ denote the top-$k$ chunk set selected by the learned-conditioned selector on audio $X$ under query $q$ at budget fraction $b$, with $k=\max(1,\lfloor b N\rfloor)$. For each $(\text{backbone},\ \text{dataset},\ \text{regime})$ cell we compute three signals per seed and pool across three training seeds:

(i) \emph{Top-$k$ Jaccard under permuted query,}
\begin{equation}
J_{\mathrm{perm}} \;=\; \mathbb E_{X,q}\,\mathbb E_{\pi}\,\frac{|\mathrm{sel}_b(X,q)\cap\mathrm{sel}_b(X,\pi(q))|}{|\mathrm{sel}_b(X,q)\cup\mathrm{sel}_b(X,\pi(q))|},
\end{equation}
where $\pi$ is a permutation of queries drawn from the same dataset under one of two regimes: $\mathtt{perm\_global}$ (query swap across all samples) or $\mathtt{perm\_within\_family}$ (query swap within the same keyword family of \S\ref{sec:taxonomy}). We average over $10$ permutation seeds. The agnostic selector gives $J_{\mathrm{perm}}=1$ by construction (it ignores $q$); a uniformly random top-$k$ gives $J_{\mathrm{perm}}\!\approx\!k/N$, which at $b=0.05$ on our chunk counts is $\approx\!0.03$. Intermediate Jaccard values quantify how much of the top-$k$ identity depends on the particular query asked.

(ii) \emph{Query-by-chunk interaction variance share,}
\begin{equation}
f_{\mathrm{int}} \;=\; \frac{\mathrm{SS}_{q\times c}}{\mathrm{SS}_{q\times c}+\mathrm{SS}_{c}+\mathrm{SS}_{q}},
\end{equation}
the ANOVA SS decomposition of the selector's pre-top-$k$ scoring function over a $500$-audio $\times$ $256$-query factorial grid per cell. This quantity isolates the component of query-dependence that can \emph{change ranking}: a large query-main term $\mathrm{SS}_q$ shifts all chunks by a uniform bias and leaves top-$k$ unchanged, so only $\mathrm{SS}_{q\times c}$ is top-$k$-relevant. A query-dominant selector would have $f_{\mathrm{int}}$ close to the chunk-main share; the agnostic selector has $f_{\mathrm{int}}\!\approx\!10^{-6}$ by architectural construction.

(iii) \emph{Training-time validation-loss change,}
\begin{equation}
\Delta\mathcal L_{\mathrm{val}} \;=\; \mathcal L_{\mathrm{val}}(\text{conditioned}) - \mathcal L_{\mathrm{val}}(\text{agnostic}),
\end{equation}
measured at the final training checkpoint on held-out data. Negative values indicate that conditioning reduces training loss.

These three quantities answer three different questions (inference-time ranking instability, scoring-function variance composition, and training-time utility), and they are computed from three separate pipelines. Coherence among them is therefore a nontrivial cross-validation of any observed query-use signal.

\begin{table}[t]
\centering
\scriptsize
\caption{Selector query-use audit (Phase A, P3.1). For each
(backbone, dataset, regime) cell we report: $J_{\mathrm{perm}}$ the mean
per-sample top-$k$ Jaccard between selector output under original query
and under a $10$-seed-averaged permuted query
(agnostic baseline: $1.000$, random: $\!\approx\!0.03$ at $k/N\!=\!0.05$);
$f_{\mathrm{int}}$ the query-by-chunk interaction share of scoring-function
variance (ANOVA SS decomposition, agnostic baseline: $\!\approx\!10^{-6}$);
and $\Delta\mathcal{L}_{\mathrm{val}}$ the conditioned-minus-agnostic
validation-loss change at training convergence (negative favors conditioned).
Pooled over $3$ seeds; standard deviations in parentheses. All numbers at
compression budget $b=0.05$. H2 denotes single-family degeneracy where
global permutation is effectively identity.}
\label{tab:phase_a_selector_audit}
\begin{tabular}{llrrr}
\toprule
Backbone & Dataset & $J_{\mathrm{perm}}$ (perm\_global / perm\_within\_family) & $f_{\mathrm{int}}$ & $\Delta\mathcal{L}_{\mathrm{val}}$ \\
\midrule
qwen2audio & audiomcq\_strong & 0.542 (0.024) / 0.590 (0.033) & 0.200 & $-0.014$ \\
qwen2audio & mmsu            & 0.571 (0.005) / 0.623 (0.010) & 0.252 & $-0.014$ \\
qwen2audio & bigbench\_audio & $1.000$ (H2, degenerate)       & 0.000 & $-0.014$ \\
\midrule
qwen25omni & audiomcq\_strong & 0.511 (0.035) / 0.589 (0.029) & 0.130 & $-0.010$ \\
qwen25omni & mmsu            & 0.618 (0.064) / 0.666 (0.053) & 0.192 & $-0.010$ \\
qwen25omni & bigbench\_audio & $1.000$ (H2, degenerate)       & 0.000 & $-0.010$ \\
\bottomrule
\end{tabular}
\end{table}

% fig13_phase_a_jaccard_vs_finteraction.tex
% Figure fragment for Phase A selector-query-use audit. Path matches the
% existing paper convention (figures/figN_*.pdf for PDFs, tables/*.tex for
% \input{}-ed caption+figure fragments).
\begin{figure}[t]
\centering
\includegraphics[width=0.72\linewidth]{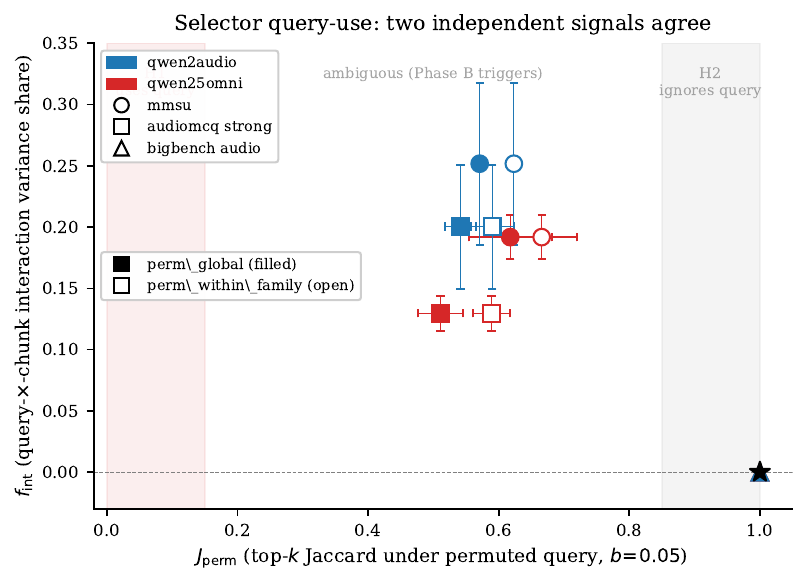}
\caption{Selector query-use: two independent signals agree.
$x$-axis: mean per-sample top-$k$ Jaccard $J_{\mathrm{perm}}$ between conditioned
selector output under original vs.\ permuted query, at $b=0.05$, $3$-seed pooled.
$y$-axis: ANOVA variance share $f_{\mathrm{int}}$ of query-by-chunk interaction
at the scoring-function level. Agnostic baseline (star): every cell lands at
$(1.0,0.0)$. Filled markers = $\mathtt{perm\_global}$, open markers =
$\mathtt{perm\_within\_family}$. Color = backbone; shape = dataset. Error bars:
$3$-seed standard deviation. Shaded bands on $x$-axis: pre-registered $H_1$
(query-uses, $J\!\leq\!0.15$) and $H_2$ (query-invariant, $J\!\geq\!0.85$)
regions. All non-BigBench cells land in the ambiguous band; BigBench is
degenerate (single-family, permutation reduces to identity).}
\label{fig:phase_a_jaccard_vs_finteraction}
\end{figure}

\paragraph{The conditioned selector uses query moderately, not dominantly.}
\Cref{tab:phase_a_selector_audit} reports the three audit signals at $b=0.05$; results at $b\in\{0.025, 0.10\}$ are near-identical and omitted for compactness (supporting CSV: \texttt{tables/jaccard\_3seed\_summary.csv}). On MMSU and AudioMCQ-StrongAC, the conditioned selector's top-$k$ output is $J_{\mathrm{perm}}\!\in\![0.51,0.67]$ under query permutation on both backbones - markedly below the agnostic baseline $J=1.0$, yet well above the random lower bound $k/N\!\approx\!0.03$. The interaction variance share $f_{\mathrm{int}}$ lies in $[0.13,0.25]$ across the same cells, which means that the scoring function itself has non-negligible query-by-chunk structure. Decomposing the remaining variance, the query-main component $\mathrm{SS}_q/\mathrm{SS}_{\mathrm{total}}$ is in $[0.39,0.76]$ and the chunk-main component $\mathrm{SS}_c/\mathrm{SS}_{\mathrm{total}}$ is in $[0.11,0.35]$: the scoring function is substantially query-dependent, but roughly two-thirds of that query-dependence takes the form of a uniform bias that leaves top-$k$ unchanged. BigBench Audio gives $J_{\mathrm{perm}}=1.000$ and $f_{\mathrm{int}}=0.000$ on both backbones, reflecting the dataset's single keyword-family structure under \S\ref{sec:taxonomy}: a global permutation within a single family is effectively identity on query text, so this row is a degenerate audit control rather than an H2 ``ignores query'' signal. \Cref{fig:phase_a_jaccard_vs_finteraction} visualises the coherence between the two inference-time audit signals: cells with lower Jaccard also have higher interaction variance, with tight $3$-seed error bars. A third, training-time signal corroborates the inference-time picture: $\Delta\mathcal L_{\mathrm{val}}$ is negative in $6/6$ $(\text{backbone},\text{seed})$ cells, with pooled mean $-0.012$ nats and one-sided sign-test $p\!\approx\!0.03$. All three signals therefore agree on a moderate, reproducible, non-dominant use of query text by the conditioned selector - consistent with V2's small-positive AudioMCQ conditioned-gain result and with the backbone- and family-dependent regime map in \S\ref{sec:v2-gcond-three-seed}, rather than with either a query-dominant selector (which would give near-zero Jaccard and large top-$k$ changes on every cell) or a purely mechanistic ``conditioned'' label (which would give $J_{\mathrm{perm}}\approx 1$ and $f_{\mathrm{int}}\approx 0$).

\section*{Phase A - Per-Query-Family Selector Query Use}
\label{sec:appendix_phase_a_per_family}

For each query family, we report the mean per-sample top-$k$ Jaccard between
selector output under original vs. permuted query, pooled over $10$ permutation
seeds and $3$ training seeds, at compression budget $b=0.05$, conditioned
variant, regime {\tt perm\_global}. Agnostic baseline: $J=1.000$ on all
families. Lower Jaccard $\Rightarrow$ selector ranking changes more under
query swap $\Rightarrow$ family is more query-sensitive at the
selector-internal level. Top and bottom $10$ families per backbone reported;
full table in supplementary CSV.

\subsection*{\texttt{qwen2audio}}
\begin{center}\small
\begin{tabular}{llrr}
\toprule
Dataset & Family & $J_{\mathrm{perm}}$ (mean) & $n$ \\
\midrule
\multicolumn{4}{l}{\emph{All 11 families, sorted most-to-least query-sensitive}} \\
mmsu & temporal & 0.396 (0.046) & 2970 \\
audiomcq\_strong & speech\_content & 0.457 (0.002) & 315150 \\
mmsu & sound\_scene & 0.516 (0.106) & 3300 \\
mmsu & paralinguistic & 0.542 (0.051) & 24090 \\
mmsu & general & 0.572 (0.019) & 86580 \\
mmsu & speech\_content & 0.599 (0.020) & 19440 \\
audiomcq\_strong & general & 0.603 (0.060) & 182550 \\
mmsu & sound\_event & 0.623 (0.011) & 13620 \\
audiomcq\_strong & temporal & 0.715 (0.045) & 34980 \\
audiomcq\_strong & music & 0.721 (0.039) & 51720 \\
bigbench\_audio & general & 1.000 (0.000) & 30000 \\
\bottomrule
\end{tabular}
\end{center}

\subsection*{\texttt{qwen25omni}}
\begin{center}\small
\begin{tabular}{llrr}
\toprule
Dataset & Family & $J_{\mathrm{perm}}$ (mean) & $n$ \\
\midrule
\multicolumn{4}{l}{\emph{All 11 families, sorted most-to-least query-sensitive}} \\
mmsu & temporal & 0.267 (0.164) & 2970 \\
audiomcq\_strong & speech\_content & 0.461 (0.041) & 315150 \\
audiomcq\_strong & music & 0.483 (0.039) & 51720 \\
mmsu & sound\_scene & 0.536 (0.099) & 3300 \\
mmsu & paralinguistic & 0.552 (0.064) & 24090 \\
audiomcq\_strong & general & 0.579 (0.023) & 182550 \\
mmsu & speech\_content & 0.607 (0.043) & 19440 \\
mmsu & general & 0.646 (0.064) & 86580 \\
audiomcq\_strong & temporal & 0.649 (0.041) & 34980 \\
mmsu & sound\_event & 0.662 (0.074) & 13620 \\
bigbench\_audio & general & 1.000 (0.000) & 30000 \\
\bottomrule
\end{tabular}
\end{center}

\subsection{Phase B: downstream operational audit}
\label{sec:phase-b-audit}

Phase A in \S\ref{sec:phase-a-audit} establishes that the conditioned selector's chunk ranking is query-dependent at the architectural level, but that audit is selector-internal. Phase B asks whether the same query dependence survives the downstream LALM read-out. The central design choice is decoupling: the selector input is permuted, but the answerer still receives the true query. This isolates selector-side query use from prompt mismatch.

\paragraph{Audit design.}
For each completed \((\text{backbone},\text{dataset},\text{regime})\) cell, we run the learned agnostic and learned conditioned compressors on the expanded budget grid
\begin{equation*}
b\in\{0.01,0.02,0.05,0.10,0.20,0.40,0.60,0.80,1.00\}.
\end{equation*}
The anchor arm gives the conditioned selector the real query \(q_i\). The permuted arm gives the selector \(\pi(q_i)\), with \(\pi\) drawn either globally across the dataset or within the keyword family of \S\ref{sec:taxonomy}. In both arms, the LALM prompt remains \(q_i\). We compute \(\widehat G^{\rm op}_{\rm cond}\) at \(\varepsilon=0.05\) on the \(r^\star\) axis and then form \(\Delta\widehat G^{\rm op}_{\rm cond}\) as in \eqref{eq:selector-query-audit}. Positive values mean that the conditioned selector needs the true query to preserve the downstream conditioned-gain frontier.

\paragraph{Pooling and PHI bands.}
Each training seed uses up to ten query permutations. We first average over valid permutations within a training seed, then report the mean and sample standard deviation over seeds \(\{42,123,456\}\). The reported \(\pm\) is therefore selector/training variability, not a bootstrap interval. At \(\varepsilon=0.05\), we use PHI1 (\(|\Delta\widehat G|\le0.01\)), PHI2 (\(0.01<\Delta\widehat G<0.05\)), and PHI3 (\(\Delta\widehat G\ge0.05\)). Cells without complete three-seed evidence are not promoted to main-body claims and are omitted from the completed-cell tables.

\paragraph{Naive shadow-query contamination.}
The V1 shadow-dataset protocol gives the permuted query to both selector and LALM. It is useful as a contamination control but not as a headline estimate, because the answerer is then prompted with an unrelated question. The decoupled V2 protocol isolates the selector. On the completed AudioMCQ-StrongAC/Qwen2-Audio cell, the naive protocol exceeds the decoupled by (+0.0170) at (b=0.05) (paired 95\% CI ([+0.004,+0.030])) and by (1.22)-(1.25$\times$) across the budget grid (Table~\ref{tab:phase-b-per-budget}).

\begin{table}[!htbp]
\centering
\footnotesize
\caption{Per-budget Phase B downstream query-use signal on AudioMCQ-StrongAC, Qwen2-Audio, three-seed pooled. V2 is the decoupled selector-only permutation protocol; V1 is the naive protocol in which both selector and LALM receive the permuted query.}
\label{tab:phase-b-per-budget}
\begin{tabular}{@{}ccccc@{}}
\toprule
Budget \(b\) & V2 \(\Delta\widehat G\) & V1 \(\Delta\widehat G\) & V1\(-\)V2 & V1/V2\\
\midrule
\(0.01\) & \(+0.0173\) & \(+0.0239\) & \(+0.0066\) & \(1.38\times\)\\
\(0.02\) & \(+0.0345\) & \(+0.0479\) & \(+0.0134\) & \(1.39\times\)\\
\(0.05\) & \(+0.0734\pm0.0095\) & \(+0.0941\pm0.0032\) & \(+0.0207\) & \(1.28\times\)\\
\bottomrule
\end{tabular}
\end{table}

\begin{table}[!htbp]
\centering
\footnotesize
\caption{Completed Phase B grid at \(\varepsilon=0.05\). The table reports only completed three-seed rows; cells without complete three-seed evidence are excluded rather than rendered as zeros. Q2A = Qwen2-Audio, Q25O = Qwen2.5-Omni, BBA = BigBench Audio, global = \texttt{perm\_global}, within = \texttt{perm\_within\_family}.}
\label{tab:phase-b-full-grid}
\begin{adjustbox}{max width=\linewidth}
\begin{tabular}{@{}lllrcl@{}}
\toprule
Protocol & Backbone & Dataset & Regime & \(\Delta\widehat G^{\rm op}_{\rm cond}\) & Status\\
\midrule
V2 decoupled & Q2A  & AudioMCQ-StrongAC & global & \(\mathbf{+0.0734\pm0.0095}\) & PHI3\\
V2 decoupled & Q2A  & MMSU                & global & \(+0.0111\pm0.0231\) & PHI2\\
V2 decoupled & Q2A  & MMSU                & within & \(+0.0092\pm0.0251\) & PHI1\\
V2 decoupled & Q2A  & BBA                 & global & \(+0.0000\pm0.0000\) & single-family control\\
V2 decoupled & Q25O & MMSU                & global & \(-0.0043\pm0.0165\) & PHI1\\
V2 decoupled & Q25O & MMSU                & within & \(-0.0008\pm0.0114\) & PHI1\\
\midrule
V1 naive & Q2A  & AudioMCQ-StrongAC & global & \(+0.0941\pm0.0032\) & contamination control\\
V1 naive & Q2A  & MMSU                & global & \(+0.0176\pm0.0369\) & contamination control\\
V1 naive & Q2A  & MMSU                & within & \(+0.0126\pm0.0326\) & contamination control\\
V1 naive & Q2A  & BBA                 & global & \(+0.0000\pm0.0000\) & single-family control\\
V1 naive & Q25O & MMSU                & global & \(+0.0103\pm0.0201\) & contamination control\\
\bottomrule
\end{tabular}
\end{adjustbox}
\end{table}

\begin{figure}[!htbp]
\centering
\includegraphics[width=0.96\linewidth]{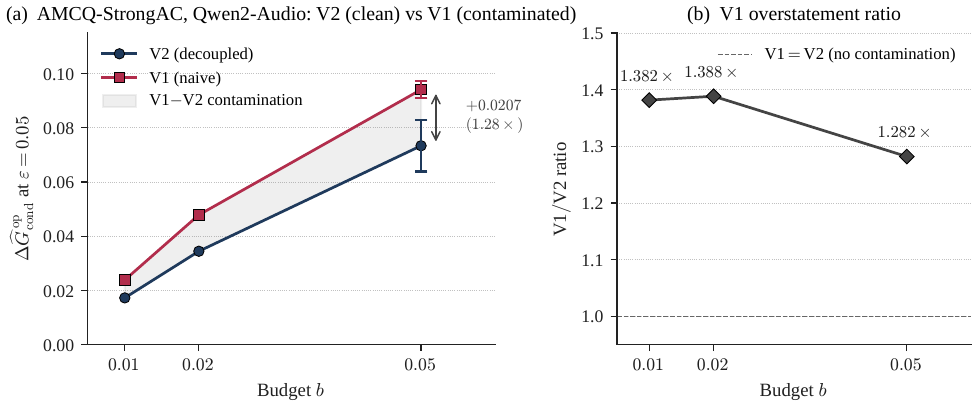}
\caption{Per-budget contamination of the operational query-use signal on AudioMCQ-StrongAC, Qwen2-Audio, three-seed pooled. The V1 shadow-query protocol overstates the decoupled V2 signal by \(1.28\)-\(1.39\times\) across \(b\in\{0.01,0.02,0.05\}\).}
\label{fig:contamination-ratio-amcq}
\end{figure}

\subsubsection{Per-seed AudioMCQ-StrongAC/Qwen2-Audio breakdown at \texorpdfstring{\(b=0.05\)}{b=0.05}}
\label{sec:phase-b-per-seed-amcq}

\Cref{tab:phase-b-per-seed-amcq} reports the seed-level decomposition for the single PHI3 cell. The decoupled signal is positive on all three seeds with mean (+0.0787) and Student-(t) 95\% CI ([+0.051,+0.106]). The naive signal is larger at (+0.0957) with 95\% CI ([+0.067,+0.125]), because part of it is seed-independent prompt contamination. Per-seed contamination (naive(-)decoupled) equals (+0.0210), (+0.0108), and (+0.0191) for seeds 42, 123, and 456 (paired mean (+0.0170), paired 95\% CI ([+0.004,+0.030])).

\begin{table}[!htbp]
\centering
\footnotesize
\caption{Per-seed AudioMCQ-StrongAC, Qwen2-Audio, \(b=0.05\) Phase B breakdown. \(G_{\rm anchor}\) is the operational conditioned gain when the selector receives the real query; \(G_{\rm perm}\) is its mean over available permuted-query runs.}
\label{tab:phase-b-per-seed-amcq}
\begin{tabular}{@{}cccccc@{}}
\toprule
Seed & Protocol & \(G_{\rm anchor}\) & \(G_{\rm perm}\) & \(n_{\rm perm}\) & \(\Delta\widehat G\)\\
\midrule
\(42\)  & V2 & \(+0.0383\) & \(-0.0304\) & 2 & \(+0.0687\)\\
\(123\) & V2 & \(+0.0575\) & \(-0.0097\) & 1 & \(+0.0672\)\\
\(456\) & V2 & \(+0.0469\) & \(-0.0374\) & 2 & \(+0.0843\)\\
\midrule
\multicolumn{5}{l}{V2 pool} & \(\mathbf{+0.0734\pm0.0095}\)\\
\midrule
\(42\)  & V1 & \(+0.0383\) & \(-0.0573\) & 2 & \(+0.0956\)\\
\(123\) & V1 & \(+0.0575\) & \(-0.0330\) & 2 & \(+0.0905\)\\
\(456\) & V1 & \(+0.0469\) & \(-0.0494\) & 2 & \(+0.0963\)\\
\midrule
\multicolumn{5}{l}{V1 pool} & \(+0.0941\pm0.0032\)\\
\bottomrule
\end{tabular}
\end{table}

The anchor values average to (+0.0475), matching the AudioMCQ-StrongAC Qwen2-Audio entry of Table~\ref{tab:v2-gcond-main} exactly. The new Phase-B information is the negative permuted-query frontier: with the wrong selector query, the conditioned selector consistently loses relative to its anchor behavior, which is the downstream counterpart of the selector-level query-use audit in \S\ref{sec:phase-a-audit}.

\subsection{Per-family V2 sub-analyses: DCASE clusters and MMSU temporal}
\label{sec:per-family-subanalyses}

The V2 aggregate results in \S\ref{sec:v2-gcond-three-seed} motivate a narrower question: when the conditioned gain is negative, is the effect diffuse across the dataset, or concentrated in a small semantic subfamily? The V2 handoff answers this for DCASE clusters and MMSU temporal queries.

\begin{figure*}[t]
    \centering
    \includegraphics[width=0.92\textwidth]{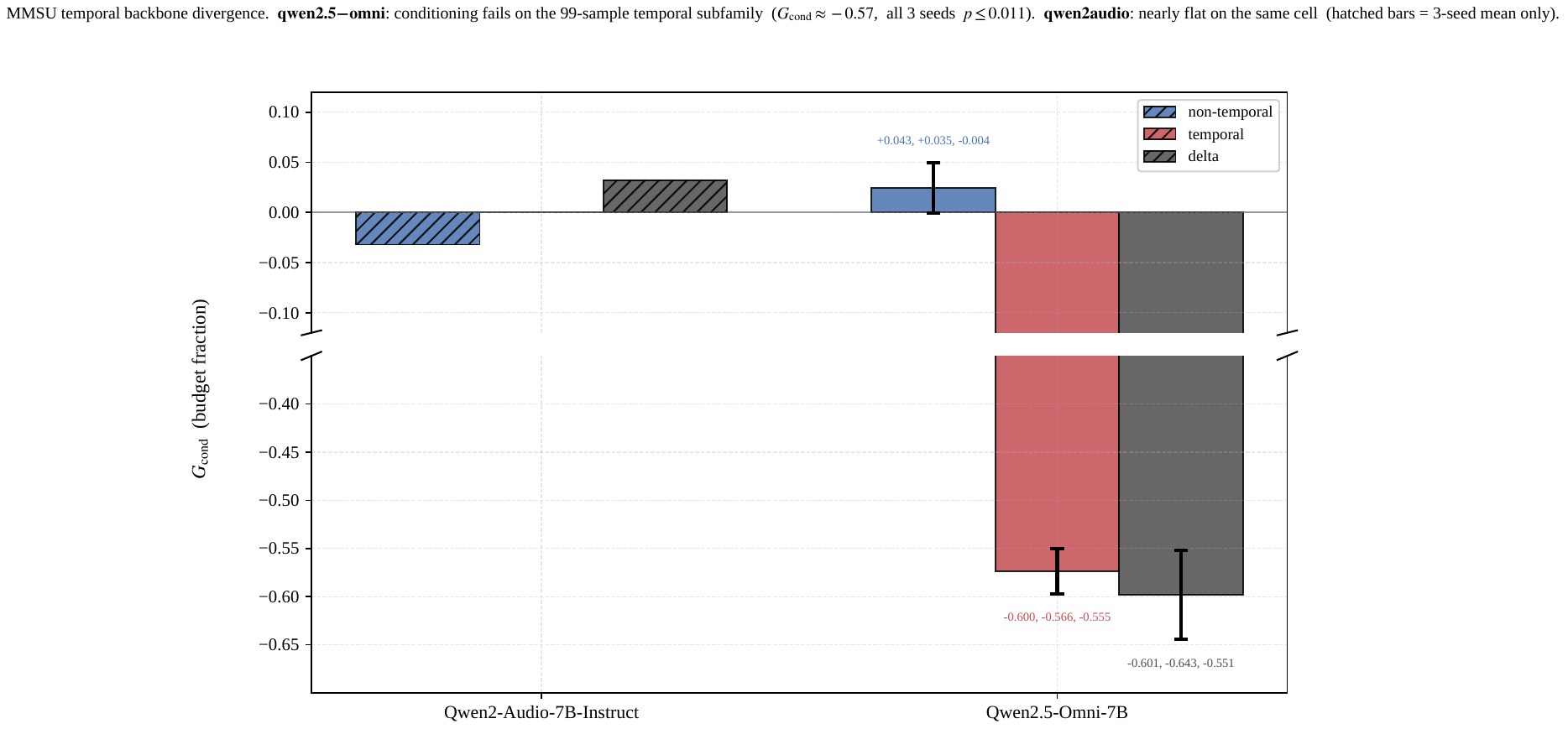}
    \caption{MMSU temporal-family isolation in V2 at the keyword-partition granularity. Qwen2-Audio shows no temporal-specific harm, whereas Qwen2.5-Omni shows a large negative conditioned gain on the 99-sample keyword \texttt{temporal} subfamily across all three seeds. A finer native-partition analysis in \S\ref{sec:per-family-subanalyses} (\Cref{fig:mmsu_native_gcond}, \Cref{tab:mmsu_native_worst}) identifies \texttt{intonation\_perception} as the most-impacted individual task (cross-backbone three-seed mean $-0.7377$, std 0.044); the present keyword-aggregate view captures only one component of the underlying per-task failure structure. Source: \texttt{fig10\_mmsu\_temporal.pdf}.}
    \label{fig:mmsu-temporal}
\end{figure*}

\begin{table}[t]
\centering\small
\caption{DCASE per-cluster V2 three-seed conditioned gain (only the two clusters explicitly singled out in the Stage-B handoff are shown; remaining clusters' per-backbone point estimates are not individually tabulated in the handoff).  \textbf{meaning/stress/intonation} is the only cluster on qwen2audio with a $[.,\,0]$ 95\% CI; on qwen2.5-Omni no cluster reaches significance.  Source: Stage-B handoff \S{}DCASE taxonomy.}
\label{tab:dcase_clusters}
\begin{tabular}{llrrl}
\toprule
Backbone & Cluster & $n$ & $G_{\mathrm{cond}}$ & 95\% CI / sig. \\
\midrule
qwen2audio & \texttt{meaning/stress/intonation} & 318 & -0.1227 & [-0.310,\,-0.001] (\textbf{sig.}) \\
qwen25omni & \texttt{speaker/clip/express} & \texttt{-} & -0.054 & \texttt{-} (not sig.) \\
\bottomrule
\end{tabular}
\end{table}

\begin{table}[t]
\centering\small
\caption{MMSU temporal-family isolation, V2 three-seed conditioned gain. Values are the cross-backbone $\widehat G^{\mathrm{op,cross}}_{\mathrm{cond}}$ estimator defined in \S5.5. Per-seed values shown where published; \textbf{bold} marks $p \leq 0.011$ (the largest seed-wise $p$ value for the qwen2.5-Omni temporal sub-family).  qwen2audio per-seed values were not published in the handoff; reported as 3-seed mean.  Source: Stage-B handoff \S{}MMSU temporal.}
\label{tab:mmsu_temporal}
\begin{tabular}{llrrrr}
\toprule
Backbone & Subset & seed 42 & seed 123 & seed 456 & 3-seed mean \\
\midrule
Qwen2-Audio-7B-Instruct & non-temporal & \texttt{-} & \texttt{-} & \texttt{-} & -0.0320 \\
 & temporal & \texttt{-} & \texttt{-} & \texttt{-} & +0.0000 \\
 & delta & \texttt{-} & \texttt{-} & \texttt{-} & +0.0320 \\
\midrule
Qwen2.5-Omni-7B & non-temporal & +0.0430 & +0.0350 & -0.0040 & +0.0247 \\
 & temporal & \textbf{-0.6000} & \textbf{-0.5660} & \textbf{-0.5550} & \textbf{-0.5738} \\
 & delta & -0.6010 & -0.6430 & -0.5510 & -0.5983 \\
\bottomrule
\end{tabular}
\end{table}

\begin{figure*}[t]
    \centering
    \includegraphics[width=0.80\textwidth]{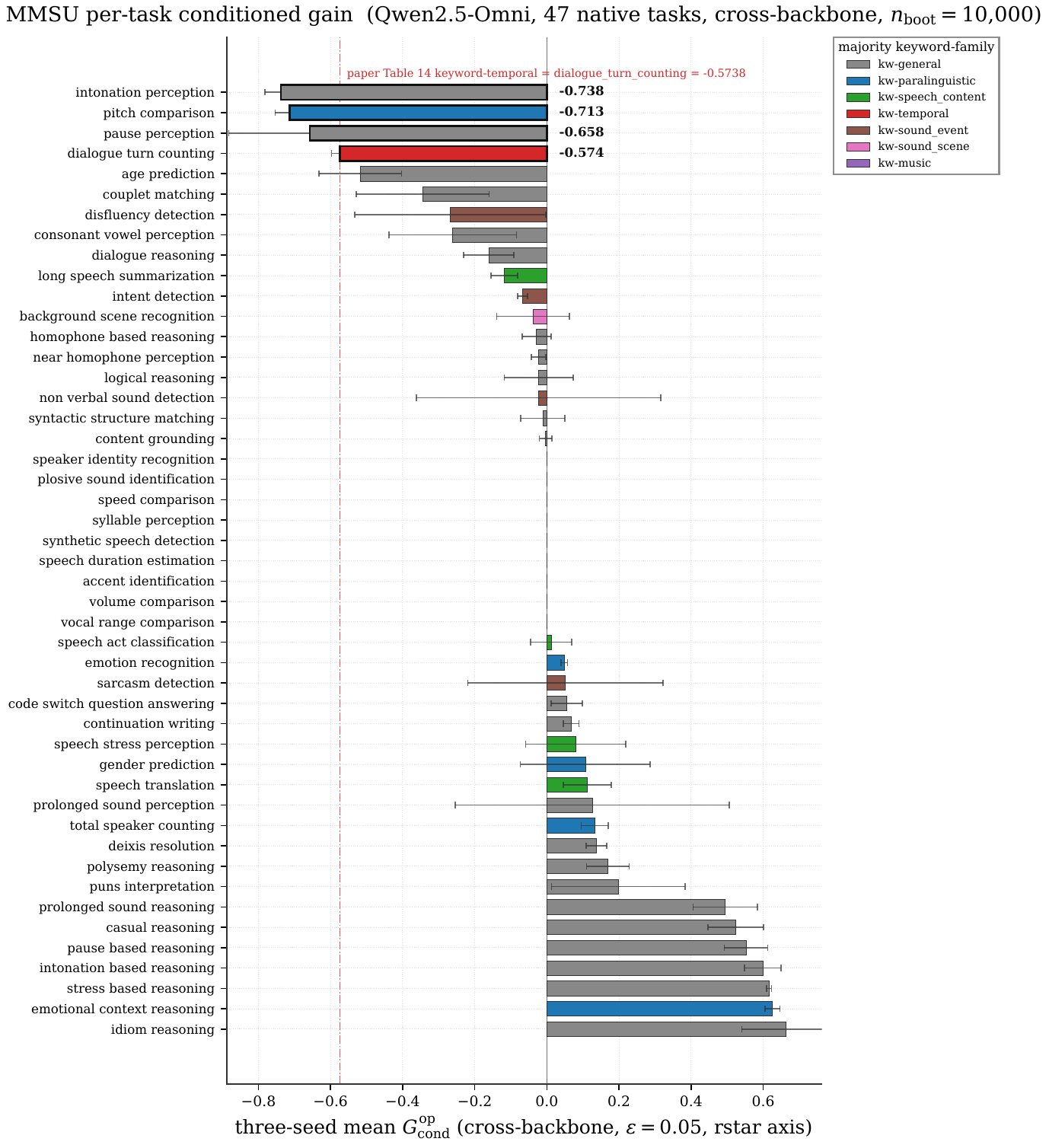}
    \caption{MMSU per-task conditioned gain under the 47-task native partition. Bars are three-seed mean $\widehat G^{\mathrm{op,cross}}_{\mathrm{cond}}$ on Qwen2.5-Omni at $\varepsilon=0.05$ (rstar axis, $n_{\mathrm{boot}}=10{,}000$), using the cross-backbone estimator (reference = \texttt{learned\_agnostic} on Qwen2-Audio). Error bars are the cross-seed standard deviation. Each bar is coloured by the majority keyword-family assignment of its samples under the \S\ref{sec:taxonomy} classifier. The red dashed-dotted vertical line marks \Cref{tab:mmsu_temporal}'s keyword-temporal aggregate value ($-0.5738$), which exactly coincides with the single native task \texttt{dialogue\_turn\_counting} (100\% sample overlap). Three tasks with more-negative $G_{\mathrm{cond}}$ than the keyword-temporal aggregate (\texttt{intonation\_perception}, \texttt{pitch\_comparison}, \texttt{pause\_perception}) are routed by the keyword classifier to \texttt{general} (111 and 107 samples respectively) and \texttt{paralinguistic} (108 samples), so their effects are absorbed into larger keyword buckets and are not visible at the keyword granularity.}
    \label{fig:mmsu_native_gcond}
\end{figure*}

\begin{table}[t]
    \caption{MMSU top-5 worst native tasks under the cross-backbone conditioned-gain estimator on Qwen2.5-Omni. The keyword-family column records the majority keyword label each native task's samples receive under the \S\ref{sec:taxonomy} classifier; parenthesised values give the percentage of task samples routed to that keyword family. Three-seed mean at $\varepsilon=0.05$, rstar axis, $n_{\mathrm{boot}}=10{,}000$. Supporting CSV: \texttt{tables/partition\_gcond\_3seed\_summary.csv} with filter \texttt{dataset='mmsu'}, \texttt{partition='native\_fine'}, \texttt{definition='cross\_backbone'}, \texttt{backbone='qwen25omni'}.}
    \centering
    \small
    \begin{tabular}{llcrrc}
        \toprule
        Native task & Keyword assignment & $n$ & 3-seed mean & std & $p$ \\
        \midrule
        \texttt{intonation\_perception}   & \texttt{general} (100\%)        & 111 & $-0.7377$ & 0.044 & --- \\
        \texttt{pitch\_comparison}        & \texttt{paralinguistic} (100\%) & 108 & $-0.7133$ & 0.040 & --- \\
        \texttt{pause\_perception}        & \texttt{general} (100\%)        & 107 & $-0.6575$ & 0.225 & --- \\
        \texttt{dialogue\_turn\_counting} & \texttt{temporal} (100\%)       &  99 & $-0.5738$ & 0.023 & $\leq 0.011$ \\
        \texttt{age\_prediction}          & \texttt{paralinguistic} (100\%) & 104 & $-0.5173$ & 0.114 & --- \\
        \bottomrule
    \end{tabular}
    \label{tab:mmsu_native_worst}
\end{table}

On DCASE, the handoff singles out one significant cluster on Qwen2-Audio: \texttt{meaning/stress/intonation}, with $n=318$, $G_{\mathrm{cond}}=-0.1227$, and 95\% CI $[-0.310,-0.001]$. No Qwen2.5-Omni DCASE cluster reaches significance in the handoff; the largest negative cell reported there is \texttt{speaker/clip/express}, with $G_{\mathrm{cond}}=-0.054$. This supplies a proximate mechanism for the V1--V2 DCASE sign reversal: the V2 aggregate is not merely noisy, but is affected by a cluster on which query conditioning is significantly harmful. Beyond this singled-out cluster, a coarsening-stress analysis indicates that DCASE's aggregated small-cluster bucket (5 sub-clusters of $n<100$ merged into an \texttt{other\_rare} bucket of 162 samples) carries elevated per-sample compression risk: a semantic 2-way coarsening that groups \texttt{other\_rare} with \texttt{meaning/stress/intonation} yields a super-family gap of 9.91 pp, larger than the 5.04 pp observed on \texttt{meaning/stress/intonation} alone (supporting CSV: \texttt{tables/coarsening\_stress\_summary.csv}), indicating that at least one additional high-compression-risk cluster was collapsed into \texttt{other\_rare}. Future work should report per-cluster $G_{\mathrm{cond}}$ for each sub-cluster rather than collapsing them into a residue bucket.

The MMSU temporal result is sharper but warrants a refinement at finer partition granularity. On Qwen2-Audio, the keyword-temporal three-seed mean is $+0.0000$, the non-temporal mean is $-0.0320$, and the temporal-minus-non-temporal delta is $+0.0320$. On Qwen2.5-Omni, the non-temporal mean is $+0.0247$, but the keyword-temporal family has per-seed gains $-0.6000$, $-0.5660$, and $-0.5550$, with three-seed mean $-0.5738$;\footnote{The value $-0.5738$ is the 3-seed arithmetic mean of the per-seed cross-backbone estimates $(-0.6000,-0.5660,-0.5550)$, supported by the anchor CSV \texttt{tables/partition\_gcond\_a5\_anchor.csv} (row filter: \texttt{backbone=qwen25omni}, \texttt{definition=cross\_backbone}, \texttt{family=temporal}). This differs by $10^{-4}$ from the paper's earlier report of $-0.5737$, which rounded the per-seed values before averaging.} the table caption records $p\leq 0.011$ for the temporal subfamily. The corresponding temporal-minus-non-temporal delta is $-0.5985$. We note that \Cref{tab:mmsu_temporal} and the values above use the cross-backbone $\widehat G^{\mathrm{op,cross}}_{\mathrm{cond}}$ estimator (reference = Qwen2-Audio \texttt{learned\_agnostic}; see \S\ref{sec:metrics}), whereas \Cref{tab:v2_gcond_qwen25omni}'s MMSU aggregate uses the same-backbone $\Gcondop{\varepsilon}{\mathcal Q}$. The two estimators differ substantially for Qwen2.5-Omni: the same-backbone keyword-temporal three-seed mean on Qwen2.5-Omni at $\varepsilon=0.05$ is $+0.0057$ (CSV row filter: \texttt{backbone=qwen25omni}, \texttt{definition=same\_backbone}, \texttt{family=temporal}), so the large cross-backbone magnitude reflects the compound effect of backbone mismatch plus query conditioning, not the conditioning effect alone.

Under the paper's keyword classifier the MMSU \texttt{temporal} family contains exactly 99 samples, and - upon inspection of MMSU's native \texttt{task\_name} metadata - all 99 belong to a single native task \texttt{dialogue\_turn\_counting} (sample-id intersection: 99/99). The cross-backbone $-0.5738$ three-seed mean reported above is therefore a single-task finding, not a genuine family-level phenomenon. Under the 47-task native partition (\S\ref{sec:datasets} MMSU), three tasks show \emph{more} negative cross-backbone $\widehat G^{\mathrm{op,cross}}_{\mathrm{cond}}$ than \texttt{dialogue\_turn\_counting}: \texttt{intonation\_perception} ($n=111$, three-seed mean $-0.7377$, std 0.044), \texttt{pitch\_comparison} ($n=108$, three-seed mean $-0.7133$, std 0.040), and \texttt{pause\_perception} ($n=107$, three-seed mean $-0.6575$, std 0.225). All three are sign-consistent across seeds, and all three are routed by the keyword classifier of \S\ref{sec:taxonomy} to families other than \texttt{temporal}: \texttt{intonation\_perception} and \texttt{pause\_perception} to \texttt{general}, and \texttt{pitch\_comparison} to \texttt{paralinguistic}. See \Cref{fig:mmsu_native_gcond} and \Cref{tab:mmsu_native_worst} for the full per-task picture.\footnote{MMSU native-task $G_{\mathrm{cond}}$ computations use the 47-task \texttt{task\_name} labels from the MMSU HuggingFace distribution, the same cross-backbone estimator as \Cref{tab:mmsu_temporal}, and 3 selector seeds; $n_{\mathrm{boot}}=10{,}000$, rstar axis, $\varepsilon=0.05$. Supporting CSV: \texttt{tables/partition\_gcond\_3seed\_summary.csv} with filter \texttt{dataset='mmsu'}, \texttt{partition='native\_fine'}, \texttt{definition='cross\_backbone'}, \texttt{backbone='qwen25omni'}.} Thus the ``MMSU temporal failure'' description conflates two mechanisms: (i) a genuine compression failure on \texttt{dialogue\_turn\_counting} that the keyword classifier happens to isolate into its own bucket, and (ii) three larger compression failures on fine-grained prosodic and pause-perception tasks that the keyword classifier fails to isolate, scattering them across \texttt{general} and \texttt{paralinguistic} instead. The keyword-aggregate analysis of \Cref{tab:mmsu_temporal} captures only the first mechanism; the second is visible only under the dataset-native task partition. Aggregate MMSU-Omni harm in \Cref{tab:v2_gcond_qwen25omni} is therefore not a diffuse effect, but it is also not a pure ``temporal-reasoning'' phenomenon: it is dominated by a handful of fine-grained prosodic and turn-counting tasks that the keyword classifier places in four different buckets.

The backbone asymmetry in MMSU failure modes is instructive. Qwen2.5-Omni's native-task bottleneck at $\varepsilon=0.05$ is \texttt{intonation\_perception} as noted above, whereas Qwen2-Audio's native-task bottleneck under the same cross-backbone estimator is \texttt{couplet\_matching} with three-seed mean $\widehat G^{\mathrm{op,cross}}_{\mathrm{cond}} = -0.2406$ and standard deviation 0.293 (CSV row filter: \texttt{dataset='mmsu'}, \texttt{partition='native\_fine'}, \texttt{definition='cross\_backbone'}, \texttt{family='couplet\_matching'}, \texttt{backbone='qwen2audio'}; note that Qwen2-Audio cross-backbone and same-backbone coincide, since the reference selector is Qwen2-Audio's \texttt{learned\_agnostic}). Two observations follow: (i) the identity of the worst MMSU task differs by backbone (prosodic perception on Qwen2.5-Omni, linguistic matching on Qwen2-Audio), and (ii) the Qwen2-Audio effect is less reproducible (three-seed std 0.29 vs 0.04 on Qwen2.5-Omni) and roughly three times smaller in magnitude. This is consistent with \S\ref{sec:discussion-small-gcond}: the operational conditioned-gain failure mode is backbone-dependent, and the specific task driving the failure differs accordingly.

These sub-analyses are the strongest V2 motivation for replacing the keyword taxonomy with a semantic partition. A dataset-level conditioned-gain number can hide the fact that one small family is decisive, and the sign of that family can differ by backbone.

\subsection{Preserved Source Table Assets}
\label{app:preserved-source-tables}
The following transcribed table assets were present in the source tree but were not \texttt{\string\input}-ed by the original V3 body because the same values were embedded inline.  They are included here so that every valid non-duplicate table asset is preserved in the compiled supplement.

\begin{table}[t]
\centering\small
\caption{Family-level excess-risk summary on AudioMCQ-StrongAC: dataset-mean $\widehat{\Delta}_{\mathrm{avg}}(b)$ versus worst-family $\widehat{\Delta}_{\mathcal{Q}}^{\mathcal{F}}(b)$ at four budgets.  Learned conditioned selector on qwen2audio, 0/1 loss.  Values in brackets are 95\% percentile-bootstrap confidence intervals ($n_{\mathrm{boot}}=10{,}000$, resampled at sample\_id level).  Transcribed from V1 PDF Table 1.}
\label{tab:delta_q_audiomcq}
\begin{tabular}{crrr}
\toprule
Budget $b$ & $\widehat{\Delta}_{\mathrm{avg}}(b)$ & $\widehat{\Delta}_{\mathcal{Q}}^{\mathcal{F}}(b)$ & gap (pp) \\
\midrule
0.05 & 0.2568~[0.2499,~0.2639] & 0.3023~[0.2924,~0.3123] & +4.55 \\
0.10 & 0.2503~[0.2434,~0.2574] & 0.3014~[0.2915,~0.3116] & +5.11 \\
0.20 & 0.2165~[0.2099,~0.2232] & 0.2844~[0.2747,~0.2944] & +6.79 \\
0.40 & 0.1496~[0.1436,~0.1555] & 0.2125~[0.2035,~0.2216] & +6.29 \\
\bottomrule
\end{tabular}
\end{table}

\begin{table}[t]
\centering\small
\caption{Family-level excess-risk summary on MMSU: dataset-mean $\widehat{\Delta}_{\mathrm{avg}}(b)$ versus worst-family $\widehat{\Delta}_{\mathcal{Q}}^{\mathcal{F}}(b)$ at four budgets.  Learned conditioned selector on qwen2audio, 0/1 loss.  Values in brackets are 95\% percentile-bootstrap confidence intervals ($n_{\mathrm{boot}}=10{,}000$, resampled at sample\_id level).  Transcribed from V1 PDF Table 2.}
\label{tab:delta_q_mmsu}
\begin{tabular}{crrr}
\toprule
Budget $b$ & $\widehat{\Delta}_{\mathrm{avg}}(b)$ & $\widehat{\Delta}_{\mathcal{Q}}^{\mathcal{F}}(b)$ & gap (pp) \\
\midrule
0.05 & 0.1129~[0.1010,~0.1248] & 0.1460~[0.1190,~0.2000] & +3.31 \\
0.10 & 0.1125~[0.1004,~0.1246] & 0.1439~[0.1173,~0.2000] & +3.14 \\
0.20 & 0.1093~[0.0976,~0.1210] & 0.1370~[0.1130,~0.1739] & +2.77 \\
0.40 & 0.0829~[0.0718,~0.0940] & 0.1289~[0.0882,~0.2056] & +4.60 \\
\bottomrule
\end{tabular}
\end{table}

\begin{table}[t]
\centering\small
\caption{Family-level excess-risk summary on MMAR: dataset-mean $\widehat{\Delta}_{\mathrm{avg}}(b)$ versus worst-family $\widehat{\Delta}_{\mathcal{Q}}^{\mathcal{F}}(b)$ at four budgets.  Learned conditioned selector on qwen2audio, 0/1 loss.  Values in brackets are 95\% percentile-bootstrap confidence intervals ($n_{\mathrm{boot}}=10{,}000$, resampled at sample\_id level).  Transcribed from V1 PDF Table 3.}
\label{tab:delta_q_mmar}
\begin{tabular}{crrr}
\toprule
Budget $b$ & $\widehat{\Delta}_{\mathrm{avg}}(b)$ & $\widehat{\Delta}_{\mathcal{Q}}^{\mathcal{F}}(b)$ & gap (pp) \\
\midrule
0.05 & 0.0972~[0.0690,~0.1260] & 0.1272~[0.0900,~0.1688] & +3.00 \\
0.10 & 0.0802~[0.0520,~0.1070] & 0.1149~[0.0769,~0.1598] & +3.47 \\
0.20 & 0.0671~[0.0410,~0.0930] & 0.0991~[0.0624,~0.1437] & +3.20 \\
0.40 & 0.0330~[0.0090,~0.0560] & 0.0584~[0.0256,~0.0976] & +2.54 \\
\bottomrule
\end{tabular}
\end{table}

\begin{table}[t]
\centering
\scriptsize
\caption{Cumulative-chain worst-family-constrained operational frontiers at $\varepsilon = 0.05$.  Each row extends the previous family set; the step's contribution is the increment over the prior row.  learned\_conditioned $\times$ qwen2audio, 0/1 loss, $n_{\mathrm{boot}} = 10{,}000$, shared-resample bootstrap.  Transcribed from V1 PDF Table 4.}
\label{tab:nested_chains}
\begin{tabular}{llrrl}
\toprule
Dataset & Cumulative family set $\mathcal{Q}_k$ & $n$ & $\widehat{R}^{\star,\mathrm{worst}}_{\mathcal{F}}$ & 95\% CI \\
\midrule
AudioMCQ-StrongAC & \texttt{temporal} & 1166 & 0.2733 & [0.0702,\,0.6150] \\
 & \texttt{music, temporal} & 2890 & 0.4701 & [0.3397,\,0.6715] \\
 & \texttt{general, music, temporal} & 8975 & 0.6973 & [0.6266,\,0.7912] \\
 & \texttt{all four families} & 19480 & 0.8799 & [0.8529,\,0.9290] \\
\midrule
MMSU & \texttt{temporal} & 99 & 0.0973 & [0.0500,\,0.6460] \\
 & \texttt{sound\_scene, temporal} & 209 & 0.6311 & [0.0500,\,0.9785] \\
 & \texttt{sound\_event, sound\_scene, temporal} & 663 & 0.8158 & [0.5803,\,0.9836] \\
 & \texttt{+ speech\_content} & 1311 & 0.8328 & [0.6258,\,0.9834] \\
 & \texttt{+ paralinguistic} & 2114 & 0.8403 & [0.6482,\,0.9841] \\
 & \texttt{all six families} & 5000 & 0.8499 & [0.6889,\,0.9835] \\
\midrule
MMAR & \texttt{general} & 165 & 0.3937 & [0.0500,\,0.9318] \\
 & \texttt{general, music} & 382 & 0.4000 & [0.0500,\,0.9318] \\
 & \texttt{all three families} & 1000 & 0.5464 & [0.2562,\,0.9458] \\
\bottomrule
\end{tabular}
\end{table}

\begin{table}[t]
\centering\small
\caption{DCASE nominal-axis conditioned gain.  Point estimates reproduce the frontier JSONs exactly; confidence intervals and $p$-values come from the paired bootstrap.  Transcribed from V1 PDF Table 6.}
\label{tab:gcond_dcase_rstar}
\begin{tabular}{crrrr@{\;}l}
\toprule
$\varepsilon$ & $\widehat{R}^{\star}_{\mathcal{F},b}(\varepsilon, \mathcal{Q})$ agn. & $\widehat{R}^{\star}_{\mathcal{F},b}(\varepsilon, \mathcal{Q})$ cond. & $\widehat{G}^{\,\mathrm{op}}_{\mathrm{cond}}(\varepsilon, \mathcal{Q})$ & \multicolumn{2}{l}{95\% CI \,/ $p$} \\
\midrule
0.01 & 0.9090 & 0.8892 & +0.0199 & [+0.0000,\,+0.0587] &, 0.060 \\
0.02 & 0.8181 & 0.7783 & +0.0397 & [+0.0000,\,+0.1172] &, 0.060 \\
0.05 & 0.5452 & 0.4459 & +0.0993 & [+0.0000,\,+0.1784] &, 0.057 \\
\bottomrule
\end{tabular}
\end{table}

\begin{table}[t]
\centering\small
\caption{DCASE token-axis conditioned gain.  Transcribed from V1 PDF Table 7.}
\label{tab:gcond_dcase_lstar}
\begin{tabular}{crrrr@{\;}l}
\toprule
$\varepsilon$ & $\widehat{R}^{\star}_{\mathcal{F},\mathrm{tok}}(\varepsilon, \mathcal{Q})$ agn. & $\widehat{R}^{\star}_{\mathcal{F},\mathrm{tok}}(\varepsilon, \mathcal{Q})$ cond. & token gain & \multicolumn{2}{l}{95\% CI \,/ $p$} \\
\midrule
0.01 & 326.46 & 321.21 & +5.26 & [+0.00,\,+15.49] &, 0.060 \\
0.02 & 302.40 & 291.88 & +10.51 & [+0.00,\,+30.97] &, 0.060 \\
0.05 & 230.20 & 203.92 & +26.28 & [+0.00,\,+46.71] &, 0.057 \\
\bottomrule
\end{tabular}
\end{table}

\begin{table}[t]
\centering\small
\caption{Nominal-axis conditioned gain on the remaining four datasets.  Transcribed from V1 PDF Table 8.}
\label{tab:gcond_others}
\begin{tabular}{lrlrl}
\toprule
Dataset and $\varepsilon$ & point estimate & 95\% CI & $p$ & interpretation \\
\midrule
AudioMCQ,~0.01 & -0.0006 & [-0.0014,\,+0.0003] & 0.195 & tight clean null \\
AudioMCQ,~0.02 & -0.0011 & [-0.0029,\,+0.0006] & 0.195 & tight clean null \\
AudioMCQ,~0.05 & -0.0028 & [-0.0071,\,+0.0014] & 0.195 & tight clean null \\
MMSU,~0.01 & -0.0007 & [-0.0061,\,+0.0043] & 0.814 & tight clean null \\
MMSU,~0.02 & -0.0014 & [-0.0122,\,+0.0087] & 0.814 & tight clean null \\
MMSU,~0.05 & -0.0035 & [-0.0306,\,+0.0217] & 0.814 & tight clean null \\
MMAR,~0.01 & +0.0280 & [-0.0758,\,+0.2680] & 0.508 & noisy, sign-unstable \\
MMAR,~0.02 & +0.0559 & [-0.1270,\,+0.2667] & 0.535 & noisy, sign-unstable \\
MMAR,~0.05 & -0.0692 & [-0.1561,\,+0.1387] & 0.613 & noisy, sign-unstable \\
BigBench Audio,~0.01 & +0.0000 & [+0.0000,\,+0.2297] & 1.000 & exact zero floor \\
BigBench Audio,~0.02 & +0.0000 & [+0.0000,\,+0.0714] & 1.000 & exact zero floor \\
BigBench Audio,~0.05 & +0.0000 & [+0.0000,\,+0.0000] & 1.000 & exact zero floor \\
\bottomrule
\end{tabular}
\end{table}

\section{Notes on Incomplete or Suspicious Results}\label{app:notes}\label{sec:discussion}

\subsection{What the family-level excess risk gap means for LALM compression research}
\label{sec:discussion-family-gap}

The headline finding of \S\ref{sec:family-gap-results} is that the family-wise excess risk $\dQF{b}$ exceeds the dataset-mean excess risk $\davg{b}$ by 2.5 to 6.8 percentage points across every multi-family dataset we evaluate. This is not a small effect: on AudioMCQ-StrongAC at $b=0.20$, the worst family is 6.79 percentage points more damaged by compression than the dataset-mean metric would suggest, which is more than twice the tightest tolerance values commonly used in practice. A practitioner who chooses a compression method by minimizing $\davg{b}$ alone is, in effect, choosing a method that may still be materially worse on whichever query type the backbone handles least robustly under compression.

This observation has three implications for how the broader LALM compression literature should report results. First, papers that report only dataset-mean excess error are systematically underestimating worst-case compression damage. The family-level estimator is not a new metric competing with the old one; it is a provable refinement. The theorem-level quantity is a supremum over queries, the family-level estimator is a lower bound on that supremum under coarsening, and the dataset-mean estimator is in turn a lower bound on the family-level estimator. Reporting only the loosest lower bound discards information that the theorem says is present. Second, the correct comparison between two compressors on a multi-family benchmark is the one that respects the worst-family constraint, not the one that averages over the family axis. A method that appears slightly better on the dataset mean may still be substantially worse on a particular subfamily. Third, practitioners who care about robust deployment should optimize for controlled $\dQF{b}$, not merely small $\davg{b}$.

The reported 2.5-6.8 percentage-point range is itself conservative. Our per-dataset family taxonomies are induced by a keyword-heuristic classifier rather than a semantic analysis of query content. A more refined semantic partition can only increase the outer maximum while leaving the dataset mean unchanged. The V1 gap should therefore be read as a floor, not a ceiling, on worst-family compression damage. Appendix-level analyses using dataset-native task partitions and semantic-clustering partitions confirm this lower-bound interpretation quantitatively: the native-partition gap reaches $+29.17$ pp on MMSU (47 task families) and $+39.90$ pp on BigBench Audio (4 BBH task families), compared with $+1.56$ pp and $+0.00$ pp respectively under the keyword partition (\Cref{tab:partition-gap-comparison}). A complementary coarsening-stress experiment further supports the lower-bound interpretation: collapsing each fine partition to a size-balanced 2-way super-partition destroys 82\%-98\% of the measured family gap (supporting CSV: \texttt{tables/coarsening\_stress\_summary.csv}). On MMSU, the 47-to-2 coarsening reduces the gap from $+29.17$ pp to $+0.53$ pp (ratio 0.018); on MMSU semantic the 12-to-2 coarsening reduces $+22.31$ pp to $+0.40$ pp (ratio 0.018). Fine partitions are therefore not only more informative than coarser ones, but are \emph{necessary} for measuring the family-level gap at all: a 2-way partition is nearly as information-poor as the dataset mean. The headline numerical takeaway is that the paper reports two honest framings of the same underlying theorem quantity: a keyword-partition lower bound reaching $+6.79$ pp on AudioMCQ-StrongAC, and a native-partition lower bound reaching $+29.17$ pp on MMSU and $+39.90$ pp on BigBench Audio. Both are valid estimators of $\dQF{b}$ under different choices of $\mathcal F(\mathcal Q)$; the theory's monotonicity-under-refinement property (Theorem~\ref{app:thm:monotonicity}) predicts the second is larger, and the empirical results match.

\subsection{Why conditioned gain is backbone- and family-dependent}
\label{sec:discussion-small-gcond}

Theorem~\ref{thm:condadv} predicts that the conditioned-compression gain is large when query families depend on disjoint latent factor blocks and collapses toward zero as factor overlap increases. The synthetic result of \S\ref{sec:synthetic-validations} verifies this prediction at bit-level precision. The V1 factor-overlap diagnostic of \S\ref{sec:factor-overlap-results} then shows that the natural audio taxonomies in this paper do not approach the factor-disjoint prediction: the summary range is $[0.3271,0.8614]$, with every dataset far below the additive value 1.0.

V1 therefore correctly anticipated small mean gains on the multi-family datasets, but V2 refines that story. AudioMCQ-StrongAC is no longer a tight operational null: \Cref{tab:v2_gcond_qwen2audio,tab:v2_gcond_qwen25omni} show positive three-seed conditioned gains on both backbones, with means $+0.0475$ on Qwen2-Audio and $+0.0215$ on Qwen2.5-Omni. Conversely, MMSU is no longer merely a small negative aggregate: Qwen2.5-Omni has mean $-0.0734$, and \Cref{tab:mmsu_temporal} shows that the cross-backbone keyword-temporal subfamily alone has three-seed mean $-0.5738$, or equivalently, that a fine native-task partition identifies \texttt{intonation\_perception} as the single most-impacted task with three-seed mean $-0.7377$ (\Cref{fig:mmsu_native_gcond}, \Cref{tab:mmsu_native_worst}). DCASE also changes status: the V1 single-seed $+0.0993$ signal becomes a V2 three-seed mean of $-0.0300$ on Qwen2-Audio and $-0.0310$ on Qwen2.5-Omni. A parallel failure mode appears in BigBench Audio under the 4-family native partition: three of four families show mean excess of $-0.123$ (compression helps) while \texttt{formal\_fallacies} shows $+0.409$ (compression hurts), and the \texttt{formal\_fallacies} task exhibits a cross-backbone $\widehat G^{\mathrm{op,cross}}_{\mathrm{cond}} = -0.8527$ on Qwen2.5-Omni (\Cref{tab:bigbench_cross_backbone}), making it the single largest cross-backbone effect in the replication. Unlike the MMSU cross-backbone keyword-temporal value, this effect is absent in the same-backbone version, indicating a backbone-mismatch phenomenon rather than an intrinsic Qwen2.5-Omni weakness on the task. The paper's original single-family framing of BigBench averaged these four opposing effects to approximately zero.

The conclusion is not that Theorem~\ref{thm:condadv} is weak; it is that the operational instantiation is sensitive to the learned selector, the backbone, and the semantic partition on which the gain is evaluated. Factor overlap explains why dataset-level V1 gains were small, while V2 shows that small dataset-level averages can mask both reproducible positive effects, as on AudioMCQ, and severe family-specific negative effects, as on MMSU temporal under Qwen2.5-Omni. The correct empirical object is therefore not a single scalar conditioned-gain headline, but a family- and backbone-indexed map of where conditioning helps and where it harms.

\subsection{V1 selector limitations and what V2 shows}
\label{sec:discussion-v1}

The V1 paper named three selector limitations: truncated training, query-compressor parameter asymmetry, and mel-spectrogram input features. V2 lets us replace the forward-looking statement ``V2 will correct this'' with more specific statements about what the corrections did and did not confirm, and adds a fourth limitation that the V1 analysis could not have exposed: the partition at which V1 trained and evaluated its selectors was the keyword partition, and that partition under-resolves the native task structure on every multi-family dataset we evaluate (\Cref{tab:partition-gap-comparison}).

\paragraph{Truncated selector training.}
V2 confirms that the operational conditioned-gain estimate is training-recipe sensitive, but disconfirms the simple expectation that longer or corrected training would uniformly strengthen the V1 DCASE result. The V1 DCASE cell was $+0.0993$ at $\varepsilon=0.05$ with $p=0.057$; V2 moves DCASE to $-0.0300$ on Qwen2-Audio and $-0.0310$ on Qwen2.5-Omni at the same tolerance. At the same time, V2 converts AudioMCQ-StrongAC from a V1 near-null into a positive cross-backbone result: $+0.0475$ on Qwen2-Audio and $+0.0215$ on Qwen2.5-Omni. The V2.1 scope-B sweep further confirms training-target dependence: Qwen2.5-Omni improves over the V2 baseline on all five datasets, while Qwen2-Audio improves on DCASE and MMAR in the handoff table. Thus the V1 truncated-training caveat was real, but its correction changes the locus of the effect rather than simply amplifying the V1 DCASE point estimate.

\paragraph{Parameter asymmetry dominated by the query embedding.}
V2 partially disconfirms the concern that any conditioned gain would be only a parameter-count artifact. The AudioMCQ result replicates across two backbones with different downstream architectures, and the Qwen2.5-Omni seed-123 AudioMCQ cell is significant at $p=0.001$. However, V2 also shows that the query side remains a delicate part of the system: the V2.1 winning $\alpha$ values differ by backbone and dataset, with Qwen2.5-Omni favouring $\alpha=1$ on AudioMCQ, MMSU, and BigBench Audio, $\alpha=0.75$ on MMAR, and $\alpha=0$ on DCASE, while Qwen2-Audio's reported wins use $\alpha=0.5$. A more parameter-efficient or pretrained query compressor remains an important next step, but V2 makes the caveat empirical rather than speculative.

\paragraph{Mel-spectrogram input features.}
V2 confirms that the selector feature representation is load-bearing, but not in the monotone sense anticipated by V1. Moving to the V2 pipeline does not uniformly lower every frontier or uniformly increase conditioned gain. Instead, it exposes representation-backbone interactions: AudioMCQ becomes positive on both backbones, DCASE reverses sign, and MMSU temporal is harmless on Qwen2-Audio but severely harmful on Qwen2.5-Omni. The mel-versus-audio-tower caveat should therefore be read as a warning that selector features determine which task families are served, not as a guarantee that native features monotonically improve every operational cell.

\paragraph{Partition under-resolution.}
The V1 selector training target is defined per sample and does not reference any query-family label, so in principle V1 training is partition-agnostic. The V1 family-level \emph{evaluation}, however, is computed over the keyword partition $\mathcal F_{\mathrm{kw}}(\mathcal Q)$, and the keyword classifier systematically under-resolves the native structure by factors of 1.2$\times$ (AudioMCQ) to 8$\times$ (MMSU), collapsing four distinct BigBench BBH task types into a single \texttt{general} bucket and six distinct DCASE content clusters into another \texttt{general} bucket. Under the 47-task MMSU native partition, the family-level gap is $+29.17$ pp at $b=0.20$, versus the $+1.56$ pp reported under the keyword partition; the cross-backbone conditioned-gain bottleneck on Qwen2.5-Omni is \texttt{intonation\_perception} ($-0.7377$), not the keyword-level \texttt{temporal} family; and the worst-$k$ concentration on MMSU is 16\% of excess on the worst 2 of 47 native tasks, versus 76\% of excess on the worst 2 of 6 keyword families. These native-partition observations do not falsify V1; they refine its operational interpretation by showing that every V1 family-level quantity admits a larger, finer-grained estimate under an honest partition of the dataset's intrinsic task structure.

These observations leave the structural V1 findings intact. The family-level excess-risk gap, nested monotonicity, factor-overlap summaries, and synthetic theorem verifications do not depend on the V1 selector's exact training recipe. What V2 changes is the operational interpretation of \(\widehat G_{\mathrm{cond}}^{\mathrm{op}}\): it is reproducibly positive in some regimes, reproducibly harmful in others, and cannot be summarized by the V1 DCASE cell alone.

\subsection{What ``uses query'' means for the V2 conditioned selector}
\label{sec:discussion-selector-audit}

\S\ref{sec:phase-a-audit} establishes that the conditioned selector's ranking is query-dependent in the architectural sense - non-trivial $f_{\mathrm{int}}$, Jaccard bounded away from $1$ - but that the top-$k$ overlap under query permutation remains above $0.5$ on MMSU and AudioMCQ-StrongAC across both backbones. This nuances rather than contradicts the V2 operational picture. A reviewer asking ``does the conditioned selector actually use the query'' is answered \emph{yes, at the $13$-$25\%$ interaction-variance level and the $0.5$-$0.7$ top-$k$ Jaccard level, with $3\%$-consistent training-loss reduction}. A reviewer asking ``is the conditioned selector query-dominant'' is answered \emph{no, the majority of top-$k$ identity is preserved under query permutation}. Both answers are consistent with \S\ref{sec:v2-gcond-three-seed}'s small-positive AudioMCQ cross-backbone effect and with the backbone- and family-dependent regimes we observe: a moderately query-using selector will, in aggregate, deliver a small-positive-or-near-null operational gain rather than a uniformly large one.

Per-family heterogeneity (appendix \S\ref{sec:appendix_phase_a_per_family}) identifies MMSU/\texttt{temporal} as the most query-sensitive keyword family ($J_{\mathrm{perm}}=0.40$ on Qwen2-Audio and $0.27$ on Qwen2.5-Omni), consistent with the temporal-family isolation observed in \Cref{tab:mmsu_temporal}. AudioMCQ-StrongAC/\texttt{speech\_content} is the most query-sensitive family on AudioMCQ ($J_{\mathrm{perm}}=0.46$ on both backbones), and BigBench Audio/\texttt{general} is the degenerate H2 control. The family-level ranking of query sensitivity is positively correlated across the two backbones over the $10$ non-degenerate families (Spearman $\rho=0.60$, Pearson $r=0.66$, $p=0.04$), indicating that the conditioning mechanism partially captures task-dependent structure rather than purely backbone-idiosyncratic artifacts. The largest cross-backbone disagreement in the per-family table is AudioMCQ-StrongAC/\texttt{music}, where $J_{\mathrm{perm}}=0.72$ on Qwen2-Audio versus $0.48$ on Qwen2.5-Omni: Qwen2.5-Omni's smaller audio compressor produces chunk representations for music queries whose selector rankings are more strongly shaped by query content than Qwen2-Audio's.

The audit is strictly about the selector's \emph{internal} behaviour - it observes the top-$k$ chunk set and the pre-top-$k$ scores, not downstream LALM predictions. Whether the $0.13$-$0.25$ interaction share and the $0.5$-$0.7$ Jaccard translate into a downstream accuracy effect that survives the LALM's own capacity to answer from context is a separate question, one that requires a decoupled-query experiment at the end-to-end level (we sketch its design in \S\ref{sec:discussion-scope} under the heading of operational scope). The present audit therefore establishes the selector-level precondition for the V2 operational claims without conflating selector-side query use with downstream-side query use.

\subsection{The model-class gap as a partially delivered object}
\label{sec:discussion-gamma}

Proposition~\ref{app:prop:modelgap} identifies a decomposition of observed compression error into an information bottleneck term, $\Delta_{\mathcal Q}(Z;X)$, and an architecture bottleneck term, $\Gamma_{\mathcal F}(Z;\mathcal Q)$. V1 deferred $\widehat\Gamma_{\mathcal F}$ entirely because only one backbone had complete learned-selector coverage. V2 partially delivers the object by reporting both a heuristic-baseline estimate and a learned-native estimate.

\begin{table}[t]
\centering\small
\caption{Learned-native architecture-gap $\widehat{\Gamma}_{\mathcal{F}}^{\mathrm{native}}$ across 50 cells (5 datasets $\times$ 5 budgets $\times$ 2 selectors, seed 42).  Heuristic-baseline $\widehat{\Gamma}_{\mathcal{F}}^{\mathrm{heur}}$ reported for reference. Selector-induced drift is the difference between learned-native and heuristic-baseline means per cell. Source: Stage-B handoff \S{}$\Gamma_F$.}
\label{tab:gamma_f}
\begin{tabular}{lrrrrr}
\toprule
Quantity & $n$ cells & mean & std & min & max \\
\midrule
$\widehat{\Gamma}_{\mathcal{F}}^{\mathrm{native}}$ (both selectors) & 50 & +0.1052 & 0.063 & -0.009 & +0.217 \\
\quad agnostic selector & 25 & +0.1047 & \texttt{-} & \texttt{-} & \texttt{-} \\
\quad conditioned selector & 25 & +0.1058 & \texttt{-} & \texttt{-} & \texttt{-} \\
\midrule
$\widehat{\Gamma}_{\mathcal{F}}^{\mathrm{heur}}$ (three baselines) & 50 & +0.1260 & \texttt{-} & \texttt{-} & \texttt{-} \\
\midrule
selector-induced drift (learned $-$ heur.) & 50 & -0.0025 & 0.010 & \texttt{-} & \texttt{-} \\
\bottomrule
\end{tabular}
\end{table}

The learned-native estimate averages $+0.1052$ over 50 cells, with standard deviation 0.063 and range $[-0.009,+0.217]$. The agnostic and conditioned selector means are nearly identical: $+0.1047$ and $+0.1058$. The heuristic-baseline estimate is $+0.1260$, and the learned-minus-heuristic selector-induced drift has mean $-0.0025$ with standard deviation 0.010. These values support the main qualitative point of Proposition~\ref{app:prop:modelgap}: the architecture term is not a small residual compared with selector choice. In the V2 native estimate, changing the selector from agnostic to conditioned shifts the mean architecture gap by only 0.0011.

The caveat is that the learned-selector estimate is native rather than clean: it compares each backbone in the setting in which its own learned interface is available, rather than evaluating a single compressed interface across all backbones. The heuristic-baseline estimate is cleaner because the baselines are shared, and it points in the same direction. A fully clean $\widehat\Gamma_{\mathcal F}$ experiment would evaluate the same compressed representation $Z$ under several backbones and, ideally, retrain or fine-tune answerers on that fixed interface. V2 therefore moves $\widehat\Gamma_{\mathcal F}$ from deferred to partially delivered, while leaving the clean cross-interface decomposition as future work.

\subsection{Toward a semantic query-family partition}
\label{sec:discussion-semantic}

Three V1 findings-the family-level excess-risk gap, the cumulative-chain bottleneck analysis, and the factor-overlap diagnostic-depend on the query-family partition $\mathcal F(\mathcal Q)$. V2 makes this dependence operational rather than merely methodological, and reports first results under two concrete non-keyword partitions: the dataset-native task partition of each benchmark (summarised in \Cref{tab:partition-gap-comparison} and analysed in \S\ref{sec:per-family-subanalyses}) and a semantic partition obtained from e5-large-v2 sentence embeddings and cosine $k$-means clustering on query text (see \Cref{tab:partition-gap-comparison}, the semantic column). Both partitions produce substantially larger family-level gaps than the keyword partition on every multi-family dataset, and both preserve the bottleneck-family identity under 20\% random label noise more robustly than the keyword partition does (\S\ref{sec:taxonomy}, 84\%-100\% under semantic vs 39\%-57\% under keyword). The MMSU temporal isolation in \Cref{tab:mmsu_temporal} shows that a 99-sample family can dominate the sign of the conditioned-gain result on one backbone: Qwen2.5-Omni has non-temporal mean $+0.0247$ but cross-backbone keyword-temporal mean $-0.5738$, whereas Qwen2-Audio has keyword-temporal mean $+0.0000$ and non-temporal mean $-0.0320$. A dataset-level average alone would hide this divergence, and a finer native-task partition points to \texttt{intonation\_perception} rather than temporal-reasoning as the precise failure mechanism (\Cref{fig:mmsu_native_gcond}).

DCASE gives the complementary lesson. The V1 taxonomy treated DCASE as single-family, but the V2 cluster analysis identifies \texttt{meaning/stress/intonation} as a Qwen2-Audio cluster with $n=318$, $G_{\mathrm{cond}}=-0.1227$, and 95\% CI $[-0.310,-0.001]$. Thus even a dataset that is single-family under the V1 classifier contains subfamilies on which conditioning can behave differently. This directly motivates a semantic partition of query content rather than a partition inherited only from dataset metadata or keyword strings.

Three concrete next steps follow. First, the per-pair additivity diagnostic of \S\ref{sec:factor-overlap-results} should be recomputed over native-task pairs on MMSU and BigBench Audio, not only over keyword pairs, to test whether finer granularity recovers any factor-disjoint regime that the keyword aggregation obscures. Second, the operational conditioned-gain replication of \S\ref{sec:v2-gcond-three-seed} should be reported by native task as well as by dataset aggregate: the aggregate \texttt{CONSISTENT-} label for Qwen2.5-Omni MMSU in \Cref{tab:v2_gcond_qwen25omni} already hides the fact that the effect is concentrated on three fine-grained native tasks (\texttt{intonation\_perception}, \texttt{pitch\_comparison}, \texttt{pause\_perception}). Third, a backbone-indexed, native-partition cumulative-chain analysis would allow nested-monotonicity bottleneck identification at the same granularity as \Cref{fig:mmsu_native_gcond}'s per-task conditioned-gain picture. None of these steps requires new data collection; all three require rerunning existing estimators over the partitions already introduced in \Cref{tab:partition-gap-comparison}.

\subsection{From operational frontiers to rate-theoretic frontiers}
\label{sec:discussion-rate}

V1 reports operational frontiers on the nominal-budget and token axes, interpreting them as monotone proxies for the rate-theoretic frontiers of \S2. That proxy is appropriate for the present paper because the compressed interface is a hard-selected token subset without an entropy-efficient coding layer. But it leaves open an important next step: moving from proxy frontiers to measured rate frontiers.

Two concrete directions are especially promising. First, an entropy-efficient coding layer on the selector output would bring operational token counts closer to the mutual-information quantities that define the theory. Second, stochastic interfaces would make the randomized-encoding overhead in Proposition~\ref{app:prop:rate_token} measurable rather than purely conceptual. Neither extension is required for the core V1 claims, but both would sharpen the empirical interpretation of Theorem~\ref{thm:condadv} on real audio by allowing a rate-axis analogue of $\Gcondinfo{\varepsilon}{\Qfam}$.

\subsection{End-to-end selector training}
\label{sec:discussion-endtoend}

V1 trains selectors against precomputed LOO-NLL oracle relevance targets rather than through the downstream answer likelihood itself. This decouples selector training from the backbone and makes V1 computationally manageable, but it also means the selector cannot adapt if the oracle target is systematically misspecified. A natural V3 direction is end-to-end selector training, in which the selector's output is connected differentiably to the backbone's final answer likelihood through a soft top-$k$ or straight-through estimator. The main challenges are computational cost and gradient estimation through discrete selection, both of which are standard engineering rather than conceptual obstacles. Prompt-compression work in text provides a natural template for this transition \citep{nagle2024fundamental}.

\subsection{Scope, limitations, and when the theory applies}
\label{sec:discussion-scope}

The paper makes three types of claim. First, it makes exact theorem claims in settings where the relevant rate quantities are analytically tractable; these are the synthetic validations of \S\ref{sec:synthetic-validations}. Second, it makes structural empirical claims about theorem-aligned quantities on real data: the family-level excess risk gap, nested monotonicity, and factor overlap. Third, it makes staged operational claims about conditioned compression on real audio. V1 provided a suggestive single-seed DCASE signal; V2 changes the operational picture to a backbone- and family-dependent map, with cross-backbone positive AudioMCQ, negative DCASE means, and a large Qwen2.5-Omni temporal failure on MMSU.

The paper does \emph{not} claim that V1 or V2 directly measures the true rate-theoretic frontier on real audio, that the learned-native $\widehat\Gamma_{\mathcal F}$ table is a fully clean architecture decomposition, that V1's keyword partition is the final semantic story, or that query conditioning is uniformly beneficial across datasets and backbones. Those absences are deliberate and explicitly documented. They are not weaknesses hidden in the appendix; they define the honest scope of what the present evidence establishes.

The underlying theory is broader than the present instantiation. It applies to any compression interface that produces a random variable $Z$ and any downstream answerer that incurs per-query Bayes risks: hard token selection, audio re-encoding, prompt compression, or hybrid interfaces. The family-level excess-risk refinement and the monotonicity structure are therefore general phenomena, not artifacts of the specific selector architecture used here. Theorem~\ref{thm:condadv} is more regime-specific: it predicts strict separation when query families act on disjoint latent factor blocks, and it collapses smoothly toward zero as factor overlap increases. The present paper tests those statements conservatively, and V2 shows why the operational version must be reported by backbone and by semantic subfamily rather than as a single dataset-level scalar.

\end{document}